\title{Graph Balancing with Orientation Costs} 
\titlerunning{Graph Balancing with Orientation Costs}
\author{Roy Schwartz}{Technion -- Israel Institute of Technology, Israel}{schwartz@cs.technion.ac.il}{}{}
\author{Ran Yeheskel}{Technion -- Israel Institute of Technology, Israel}{ran.yeheskel11@gmail.com}{}{}
\authorrunning{R. Schwartz and R. Yeheskel}
\keywords{Graph Balancing, Generalized Assignment Problem}
\newcommand{\LS}{\textit{Local Step}\xspace}
\newcommand{\GS}{\textit{Global Step}\xspace}
\newcommand{\LPgap}{\textsc{$LP_{GAP}$}\xspace}
\newcommand{\LPthree}{\textsc{$LP$}\xspace}
\newcommand{\LPstong}{\textsc{$LP_k$}\xspace}
\newcommand{\sRA}{\textsc{Ra}\xspace}
\newcommand{\sSRGB}{\textsc{Srgb($c$)}\xspace}
\newcommand{\SRGB}{\textsc{Semi-Related Graph Balancing}\xspace}
\newcommand{\sGB}{\textsc{Gb}\xspace}
\newcommand{\sGAP}{\textsc{Gap}\xspace}
\newcommand{\sGBUH}[1]{\textsc{Gbuh(#1)}\xspace}
\newcommand{\sGBU}[1]{\textsc{Gbu(#1)}\xspace}
\newcommand{\GBnp}{\textsc{Graph Balancing}\xspace}
\newcommand{\SANTA}{\textsc{Santa Claus Problem}\xspace}
\newcommand{\GAP}{\textsc{Generalized Assignment Problem}\xspace}
\newcommand{\RA}{\textsc{Restricted Assignment} problem\xspace}
\newcommand{\GBULH}{\textsc{Graph Balancing with Unrelated Light Hyper Edges}\xspace}
\newcommand{\GBULHUH}{\textsc{Graph Balancing with Unrelated Light Hyper Edges and Unrelated Heavy Edges}\xspace}
\newcommand{\bx}{$\textbf{x}$\xspace}
\newcommand{\ie}{{\em i.e.}\xspace}
\newtheorem{observation}{Observation}
\begin{document}

\maketitle

\begin{abstract}
Motivated by the classic \GAP, we consider the \GBnp problem in the presence of orientation costs: given an undirected multi-graph $G=(V,E)$ equipped with edge weights and orientation costs on the edges, the goal is to find an orientation of the edges that minimizes both the maximum weight of edges oriented toward any vertex (makespan) and total orientation cost.
We present a general framework for minimizing makespan in the presence of costs that allows us to:
$(1)$ achieve bicriteria approximations for the \GBnp problem that capture known previous results (Shmoys-Tardos [Math. Progrm. `93], Ebenlendr-Krc{\'{a}}l-Sgall [Algorithmica `14], and Wang-Sitters [Inf. Process. Lett. `16]); and $(2)$ achieve bicriteria approximations for extensions of the \GBnp problem that admit hyperedges and unrelated weights.
Our framework is based on a remarkably simple rounding of a strengthened linear relaxation.
We complement the above by presenting bicriteria lower bounds with respect to the linear programming relaxations we use that show that a loss in the total orientation cost is required if one aims for an approximation better than $2$ in the makespan.
\end{abstract}

\section{Introduction}\label{sec:Introduction}
We consider the \GBnp problem (\sGB) where we are given an undirected multi-graph $G=(V,E)$ equipped with edge weights $p:E\rightarrow\mathbb{R}^{+}$. The goal is to orient all the edges of the graph, where each edge can be oriented to one of its endpoints. Given an orientation of the edges the load of a vertex $u$ is the sum of weights of edges oriented toward it. The goal is to find an orientation of the edges that minimizes the maximum load over all vertices.

\sGB was first introduced by Ebenlendr {\em et al.} \cite{EbenlendrKS08} and since its introduction it has attracted much attention (see, {\em e.g.}, \cite{JansenR18, WangS16, HuangO16, ChakrabartyS16, PageS16}). Besides being a natural graph optimization problem on its own, a main motivation for considering \sGB is the well known \GAP (\sGAP) (see, {\em e.g.}, \cite{ShmoysT93, EbenlendrKS08, Svensson11, WilliamsonSBook}). 
In \sGAP we are given a collection $\mathcal{M}$ of $m$ machines and a collection $\mathcal{J}$ of $n$ jobs, along with processing times $p_{i,j}$ (the processing time of job $j$ on machine $i$) and assignment costs $c_{i,j}$ (the cost of assigning job $j$ to machine $i$).
Each job must be assigned to one of the machines.
The processing time of machine $i$ is the sum of processing times $p_{i,j}$ over all jobs $j$ that are assigned to $i$, and the makespan of an assignment is the maximum over all machines $i$ of its processing time.
Additionally, the total assignment cost of an assignment is the sum of assignment costs $c_{i,j}$ over all machines $i$ and jobs $j$ that are assigned to $i$.
Given a target makespan $T$, we denote by $C(T)$ the minimum total assignment cost over all assignments with makespan at most $T$.
If there are no assignments with makespan at most $T$, then $C(T)=\infty$.
The goal in \sGAP, given a target makespan $T$, is to find an assignment with makespan at most $T$ and total assignment cost at most $C(T)$, or declare that no such assignment exists.
We note that only $T$ is given to the algorithm whereas $C(T)$ is not.
For this bicriteria problem, the celebrated result of Shmoys and Tardos \cite{ShmoysT93} provides an approximation algorithm that finds an assignment with makespan at most $2T$ and total assignment cost at most $C(T)$.

\sGB is a captured by \sGAP since one can: $(1)$ set $\mathcal{M}$ to be $V$ and $\mathcal{J}$ to be $E$; and $(2)$ for each job $j\in \mathcal{J}$ (which corresponds to an edge $e\in E$) set its processing time to be $p_e$ for the two machines that correspond to the endpoints of $e$ and $\infty$ for all other machines.
Note that assigning job $j$ to machine $i$ corresponds to orienting the edge $e$ toward its endpoint that corresponds to machine $i$.
There are two important things to note. First, \sGB was originally defined as a \textit{single criterion} optimization problem as opposed to \sGAP which is a bicriteria optimization problem. Second, the weights $p$ in \sGB, which represent the processing times of the jobs, are \textit{related}, \ie, the processing times do not depend on the vertex the edge is oriented to.
Ebenlendr {\em et al.} \cite{EbenlendrKS08} introduced a novel linear relaxation and rounding algorithm that achieves an approximation of 1.75 with respect to the optimal makespan. They also proved that even for this special case, no polynomial time algorithm can achieve an approximation less than $1.5$ unless $P=NP$, thus extending the hardness of \sGAP to \sGB.

In this work we consider the {\em bicriteria} \sGB problem, where we are also given orientation costs, the equivalent to the assignment costs in \sGAP. Formally, an edge $e=(u,v)$ has orientation costs $c_{e,u}$ and  $c_{e,v}$ and orienting it to $u$ incurs a cost of $c_{e,u}$. Similarly to \sGAP, given a target makespan $T$, the goal is to find an orientation of the edges with total orientation cost at most $C(T)$ and makespan at most $T$.\footnote{As in \sGAP, the total orientation cost of an orientation is defined as the sum of orientation costs $c_{e,u}$ over all vertices $u$ and edges $e$ oriented toward $u$. $C(T)$ is defined as the minimum total orientation cost over all orientations with makespan at most $T$. If no such orientation exists then $C(T)$ is set to $\infty$.} To the best of our knowledge, the bicriteria \sGB problem was not previously considered.
We say that an algorithm is a $(\alpha, \beta)$-approximation if given a target makespan $T$, it outputs an orientation with makespan at most $\alpha T$ and total orientation cost at most $\beta C(T)$. Thus, \cite{ShmoysT93} is a $(2,1)$-approximation to \sGB. 
We note that the algorithm of \cite{EbenlendrKS08} cannot handle orientation costs and is in fact a $(1.75,\infty)$-approximation for \sGB.
A result by Wang and Sitters \cite{WangS16} implicitly gives a $(\nicefrac{11}{6}, \nicefrac{3}{2})$-approximation for \sGB.


We study the bicriteria tradeoff between makespan and total orientation cost in \sGB, presenting both upper and lower bounds (the latter are with respect to the linear programming relaxations used in this work). We employ a remarkably simple general framework that allows us to achieve bicriteria approximations for \sGB that capture and extend known results. Furthermore, we consider extensions of \sGB that allow for: (1) hyperedges to be present, \ie, a job can be assigned to more than two machines; and (2) processing times can be unrelated, \ie, the processing time of a job might depend on the machine it is assigned to. Our results regarding these extensions improve upon the previously best known algorithms, and are also based on the general framework presented in this paper.
We believe this framework might be of independent interest to other related scheduling problems.

\subsection{Our Results}
Our results are of three different flavors: bicriteria upper bounds for \sGB, bicriteria lower bounds for \sGB, and both upper and lower bicriteria bounds for extensions of \sGB (all lower bounds are with respect to the linear programming relaxations we use). Let us now elaborate on each of the above.

\vspace{3pt}
\noindent \textbf{Upper Bounds:} We present a general framework for minimizing makespan in the presence of costs and obtain two algorithms that achieve bicriteria approximations for \sGB. This is summarized in the following two theorems.
\begin{theorem} \label{Theorem:firstUpperBound}
There exists a polynomial time algorithm that finds an orientation that is a $(1.75+\gamma, \nicefrac{1}{(2\gamma+0.5)})$-approximation for \GBnp, for every $\nicefrac{1}{12}-\epsilon\leq \gamma \leq \nicefrac{1}{4}$ where $\epsilon=\nicefrac[]{\sqrt{33}}{4}-\nicefrac[]{17}{12}\approx 0.103$. 
\end{theorem}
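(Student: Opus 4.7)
The plan is to solve a strengthened linear relaxation that extends the Ebenlendr-Krc\'{a}l-Sgall (EKS) LP with an orientation-cost objective, and then round via a two-phase scheme calibrated by $\gamma$. After guessing the target makespan $T$ via binary search, I would write an LP over variables $x_{e,u}\in[0,1]$ with $\sum_u x_{e,u}=1$ for each edge $e$, standard load constraints $\sum_e p_e x_{e,u}\le T$ at every vertex $u$, the obvious $x_{e,u}=0$ whenever $p_e>T$, and the EKS-style strengthening restricting fractional ``heavy'' orientations to any single vertex, where an edge is heavy when $p_e>(1-\gamma)T$. The objective is to minimize $\sum_{e,u} c_{e,u} x_{e,u}$, whose optimal value lower-bounds $C(T)$. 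The heavy-threshold $(1-\gamma)T$ is exactly what interpolates between the Shmoys-Tardos regime ($\gamma=1/4$, essentially no EKS strengthening) and the EKS regime (small $\gamma$, aggressive strengthening that tightens the makespan at the price of a costlier rounding).

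The rounding proceeds in two stages. A \emph{local step} rounds the heavy edges first: on the heavy-edge subgraph the fractional degree at every vertex is at most one, so $x$ restricted to heavy edges is a point of a matching-like polytope and admits a convex decomposition into a small number of integral heavy-orientations. I would pick the cheapest such orientation and argue that its cost is at most $1/(2\gamma+0.5)$ times the LP heavy-edge cost. A \emph{global step} then rounds the remaining light edges via a Shmoys-Tardos bipartite-matching procedure on the LP support, which preserves the light-edge cost exactly and adds at most one light edge (of weight at most $(1-\gamma)T$) to each vertex beyond its fractional load.

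To bound the makespan at a vertex $v$, the LP fractional load is at most $T$; the local step can add at most one heavy edge, whose contribution is controlled by the EKS analysis of the strengthened LP and amounts to at most $(3/4)T$; and the global step adds at most one light edge of weight at most $(1-\gamma)T$. A careful accounting along the lines of EKS combines these three contributions into a total of at most $(1.75+\gamma)T$. For the cost, the global step is cost-preserving, so the only source of blow-up is the local step, giving a global cost bound of $C(T)/(2\gamma+0.5)$.

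The main obstacle will be the tight calibration of the local step: one must show that the strengthened LP admits a decomposition of the heavy-edge assignment into integral orientations whose cheapest member attains exactly the factor $1/(2\gamma+0.5)$, while remaining compatible with the subsequent light-edge rounding so that neither the makespan nor the cost guarantee is violated at any vertex. The admissible range $1/12-\epsilon\le\gamma\le1/4$ should emerge from the feasibility boundary of this calibration, with the endpoints recovering the Shmoys-Tardos $(2,1)$-approximation (at $\gamma=1/4$) and the Wang-Sitters $(11/6,3/2)$-approximation (at $\gamma=1/12$).
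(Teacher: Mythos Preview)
Your proposal diverges from the paper and has real gaps on both sides of the bicriteria bound. On the cost side, if the heavy-orientation polytope is integral as you argue, then the cheapest integral orientation in a convex decomposition costs at most the LP heavy-edge cost itself---a factor~$1$, not $1/(2\gamma+0.5)$. A factor-$1$ cost everywhere would contradict the lower bound of Theorem~\ref{Theorem:LowerBound}, so the loss must arise somewhere, and your outline does not say where. On the makespan side, the three contributions you list (the fractional load $T$, a heavy edge ``controlled by EKS to $(3/4)T$'', and a light edge of weight at most $(1-\gamma)T$) do not combine to $(1.75+\gamma)T$ by any accounting you give; the EKS $(3/4)T$ increment is tied to their specific rotation rounding and does not transfer once heavy edges are rounded by a separate decomposition chosen independently of the light residual. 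Your heavy threshold $(1-\gamma)T$ also differs from the paper's $T/2$, and with it the Star-type constraint you impose is strictly weaker.

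What the paper actually does is quite different. The local step is a threshold rule, not a decomposition: orient $e$ toward $u$ whenever $x_{e,u}>f(p_e)$ for a non-increasing step function $f$. For $1/12\le\gamma\le 1/4$ one takes $f(p)=\alpha$ if $p>1/2$ and $f(p)=1$ otherwise, with $\alpha=2\gamma+1/2$; the cost factor $1/\alpha=1/(2\gamma+0.5)$ is then immediate from $x_{e,u}>\alpha$ on every locally oriented edge (Lemma~\ref{CostApprox}). The makespan bound hinges on the observation that any edge \emph{not} oriented locally satisfies $1-f(p_e)\le x_{e,u}\le f(p_e)$ (Observation~\ref{GraphObserv}; this is precisely where having only two endpoints is used), which lower-bounds the fractional load in the first Shmoys--Tardos slot and feeds into the refined slot analysis of Lemma~\ref{lemmaST}. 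Extending below $\gamma=1/12$ requires a three-level threshold function (breakpoints at $1/3$ and $1/2$) together with new Set constraints $\sum_{e\in S}x_{e,u}\le|S|-1$ for infeasible $S$ with $|S|\le 3$; the constant $\sqrt{33}/4-17/12$ emerges from balancing the resulting case analysis (Lemma~\ref{BetterMakespanLemma}). None of these ingredients---the threshold rule, the two-sided bound on unoriented edges, or the Set constraints---appear in your plan, and without them neither the makespan nor the endpoint $1/12-\epsilon$ can be obtained.
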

\begin{theorem} \label{Theorem:secondUpperBound}
There exists a polynomial time algorithm that finds an orientation that is a $(1.75+\gamma, 1+\nicefrac{1}{\gamma})$-approximation for \GBnp, for every $0\leq \gamma \leq \nicefrac{1}{4}$.
\end{theorem}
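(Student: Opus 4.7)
My plan follows the paper's general framework: solve a strengthened linear relaxation with the cost objective at makespan $T$, then apply a simple cost-aware rounding parameterized by $\gamma$.

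\textbf{Step 1 (LP).} I would form the strengthened LP of type \LPstong{} that minimizes $\sum_{e,u} c_{e,u}\, x_{e,u}$ subject to the assignment constraints $\sum_u x_{e,u}=1$, the makespan constraints $\sum_e p_e x_{e,u}\leq T$ for every vertex $u$, and the additional valid inequalities (heavy-edge strengthening) that push the makespan integrality gap down to $1.75$ as in Ebenlendr--Krc\'al--Sgall. Because the true optimum is LP-feasible, the optimal LP value satisfies $\mathrm{LP}^{*}\leq C(T)$; let $x^{*}$ denote an optimal fractional solution.

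\textbf{Step 2 (cost-aware preprocessing).} I would then threshold $x^{*}$ at $\gamma/(1+\gamma)$: for every edge $e=(u,v)$ with $x^{*}_{e,u}<\gamma/(1+\gamma)$, I set $\hat{x}_{e,u}=0$ and transfer this fractional mass to $\hat{x}_{e,v}$. Since $x^{*}_{e,u}+x^{*}_{e,v}=1$ and $\gamma/(1+\gamma)\leq 1/2$ for $\gamma\leq 1$, at most one endpoint can be below the threshold, so $\hat{x}$ is a valid fractional orientation. The key invariant is that $\hat{x}_{e,u}>0$ implies $x^{*}_{e,u}\geq \gamma/(1+\gamma)$. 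I would then apply the framework's integer rounding to $\hat{x}$ to obtain an orientation $\sigma$.

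\textbf{Step 3 (cost analysis).} For every edge $e$ in $\sigma$ oriented to $u$, the invariant gives $1-x^{*}_{e,u}\leq (1/\gamma)\,x^{*}_{e,u}$, and therefore
\[
c_{e,u}=x^{*}_{e,u} c_{e,u}+(1-x^{*}_{e,u})c_{e,u}\leq (1+1/\gamma)\,x^{*}_{e,u} c_{e,u}\leq (1+1/\gamma)\bigl(x^{*}_{e,u}c_{e,u}+x^{*}_{e,v}c_{e,v}\bigr).
\]
Summing over all edges yields total cost at most $(1+1/\gamma)\,\mathrm{LP}^{*}\leq (1+1/\gamma)\,C(T)$, as required.

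\textbf{Main obstacle (makespan analysis).} Bounding the fractional makespan of $\hat{x}$ directly gives at most $(1+\gamma)T$ per vertex (each pushed-in edge contributes at most a $\gamma/(1+\gamma)$ fraction of its new value, so $\hat L_v \leq L_v + \tfrac{\gamma}{1+\gamma}\hat L_v$), and a naive application of an EKS-style rounding with multiplicative $1.75$ factor would yield $1.75(1+\gamma)T$, which is strictly weaker than $(1.75+\gamma)T$. To close this gap is the main obstacle; I would exploit the structural decomposition underlying the EKS bound, in which the $1.75T$ arises as $T$ of light-edge load plus $3T/4$ of heavy-edge load. The thresholding only moves mass between endpoints of edges that are close-to-integral in $x^{*}$, and I expect this perturbation can be absorbed into the light-edge side of the bound as an additive $\gamma T$ term rather than being multiplied by $1.75$. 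Combining this refined makespan bound with the cost calculation above delivers the advertised $(1.75+\gamma,\,1+1/\gamma)$-approximation, valid exactly in the range $0\leq\gamma\leq 1/4$ where the threshold $\gamma/(1+\gamma)\leq 1/5$ is compatible with the EKS structural constants.
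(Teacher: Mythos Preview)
Your overall architecture coincides with the paper's: threshold the LP solution at level $\gamma/(1+\gamma)$ (equivalently, orient any edge with $x^*_{e,u}>1/(1+\gamma)$ to $u$) and then run the Ebenlendr--Krc\'al--Sgall rounding on what remains. Your choice of threshold is exactly the paper's $\alpha=1/(1+\gamma)$, and your cost argument is precisely the paper's Lemma on costs.

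The genuine gap is in your makespan step. You treat the EKS rounding as providing a \emph{multiplicative} $1.75$ guarantee on the fractional makespan, which forces you to chase an additive refinement you never actually establish. The correct fact, and the one the paper uses, is that the EKS rounding is \emph{additive}: given a fractional solution satisfying the Star constraint, it produces an integral orientation whose load on every vertex exceeds its fractional load by at most $0.75$ (after scaling $T=1$). With this fact the analysis is immediate and needs no ``structural decomposition'': let $X=\sum_{e\text{ oriented to }u\text{ in the local step}} x^*_{e,u}p_e$. The locally oriented edges contribute at most $X/\alpha$ to $u$ (since each has $x^*_{e,u}>\alpha$), and on the remaining edges the fractional load at $u$ is at most $1-X$, so EKS gives load at most $(1-X)+0.75$. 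Summing,
\[
\frac{X}{\alpha}+(1-X)+0.75 \;=\; 1.75+\Bigl(\frac{1}{\alpha}-1\Bigr)X \;\le\; 1.75+\Bigl(\frac{1}{\alpha}-1\Bigr) \;=\; \frac{1}{\alpha}+0.75 \;=\; 1.75+\gamma.
\]
This replaces your entire ``Main obstacle'' paragraph. Note also that you need not worry about feasibility of $\hat{x}$ for EKS: since you \emph{remove} the thresholded edges before running EKS (rather than leaving them in at value $1$), both the Load and Star constraints on the remaining instance are only easier to satisfy.
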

Both the above theorems provide a smooth tradeoff between makespan and orientation cost while capturing previous known results for \sGB as special cases, {\em i.e.}, Theorem \ref{Theorem:firstUpperBound} captures the $(2,1)$ and $(\nicefrac{11}{6}, \nicefrac{3}{2})$ approximations of \cite{ShmoysT93} and \cite{WangS16} for $\gamma = \nicefrac[]{1}{4}$ and $\gamma = \nicefrac[]{1}{12}$ respectively, whereas Theorem \ref{Theorem:secondUpperBound} captures the $(1.75,\infty)$-approximation of \cite{EbenlendrKS08} for $\gamma =0$.
Theorem \ref{Theorem:firstUpperBound} is depicted in Figure \ref{OurResultsGB}.

\vspace{3pt}
\noindent \textbf{Lower Bounds:} We present bicirteria lower bounds for \sGB. As previously mentioned, our lower bounds apply to a strengthening of the relaxation of \cite{EbenlendrKS08}, which we denote by \LPstong (see subsection \ref{Section:ImprovingUpperBound}). The lower bound is summarized in the follwing theorem and is depicted in Figure \ref{OurResultsGB}.
\begin{theorem} \label{Theorem:LowerBound}
For every $0\leq \gamma < \nicefrac[]{1}{4}$ and $\epsilon>0$, there exists an instance for \GBnp and target makespan $T$ such that: (1) \LPstong is feasible and has value of $OPT_{\LPstong}$, and (2) every orientation whose makespan is at most $(1.75+\gamma)T$ has orientation cost of at least $\nicefrac{1}{(\gamma+0.75+\epsilon)} OPT_{\LPstong}$.
\end{theorem}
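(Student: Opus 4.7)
The plan is to construct, for every $\gamma\in[0,\tfrac14)$ and $\epsilon>0$, an explicit \sGB instance that realizes the bicriteria tradeoff. The key design idea is to use a single ``heavy'' edge; this way every strengthened inequality of $\LPstong$ beyond the basic relaxation degenerates, and verifying that a given fractional solution lies in $\LPstong$ reduces to checking one trivial inequality.

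I will take a central vertex $v$, a distinguished leaf $u$, and $k$ dummy leaves $v_1,\dots,v_k$. The instance consists of one heavy edge $e=(v,u)$ of weight $T$ with costs $c_{e,v}=0$ and $c_{e,u}=1$, together with $k$ anchor edges $a_i=(v,v_i)$ each of weight $L/k$, where $L=(\tfrac34+\gamma+\epsilon')T$ and $\epsilon'\in(0,\epsilon]$ is chosen small enough that $L<T$. Anchor costs are $c_{a_i,v}=0$ and $c_{a_i,v_i}=M$ for a sufficiently large constant $M\ge 1$.

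For the upper bound on $OPT_{\LPstong}$ I will exhibit the fractional solution that orients every anchor fully to $v$ and splits $e$ via $x_{e,v}=\tfrac14-\gamma-\epsilon'$, $x_{e,u}=\tfrac34+\gamma+\epsilon'$. This yields loads $T$ on $v$, $(\tfrac34+\gamma+\epsilon')T$ on $u$, and $0$ on each $v_i$, and has total orientation cost $\tfrac34+\gamma+\epsilon'$. For the lower bound on integer cost I will argue that choosing $M$ sufficiently large rules out any integer orientation that sends an anchor toward a dummy; hence all anchors must stay on $v$, contributing load exactly $L$. Then orienting $e$ to $v$ would make the load on $v$ equal $L+T=(\tfrac74+\gamma+\epsilon')T$, exceeding the makespan bound $(\tfrac74+\gamma)T$, so $e$ must be oriented to $u$ at cost $1$. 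Combining yields integer cost $\ge 1\ge \tfrac{1}{3/4+\gamma+\epsilon}\cdot OPT_{\LPstong}$.

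The main step I expect to require care is verifying that the proposed fractional solution is feasible for the full $\LPstong$, rather than just for the basic relaxation of \cite{EbenlendrKS08}. Because $e$ is the only edge of weight greater than $T/2$ in the entire construction, every strengthened inequality of $\LPstong$---which controls sums of heavy-edge variables around a single vertex---collapses to $x_{e,\cdot}\le 1$ and is satisfied with room to spare. This collapse is precisely the purpose of the single-heavy-edge design, and it ensures the construction witnesses the bound against the strongest relaxation considered in the paper.
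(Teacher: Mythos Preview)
Your construction has a genuine gap: the proposed fractional solution is \emph{not} feasible for $LP_k$ once $k\ge 2$. You describe the strengthened inequalities of $LP_k$ as ``controlling sums of heavy-edge variables around a single vertex'', but that is the Star constraint of the base relaxation, not the Set constraints that distinguish $LP_k$. The Set constraints demand $\sum_{e'\in S} x_{e',w}\le |S|-1$ for every $S\subseteq\delta(w)$ with $|S|\le k$ and $\sum_{e'\in S} p_{e'}>T$, whether or not the edges in $S$ are heavy. In your instance take $S=\{e,a_1\}$ at the center $v$: since $p_e=T$ and $p_{a_1}=L/k>0$, the total weight is $T+L/k>T$, so the constraint $x_{e,v}+x_{a_1,v}\le 1$ is active. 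Your solution has $x_{e,v}+x_{a_1,v}=(\tfrac14-\gamma-\epsilon')+1>1$, a violation. Thus the witness you exhibit does not lie in $LP_k$, and the lower bound does not follow.

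The paper's construction sidesteps exactly this issue: it gives the heavy edge weight $1-\epsilon$ (strictly below $T$) and realizes the anchoring load via many self-loops of weight $\epsilon/k$ each. Then every subset of $\delta(v)$ of size at most $k$ has total weight at most $(1-\epsilon)+(k-1)\cdot(\epsilon/k)<1$, so no Set constraint is ever triggered and feasibility of the fractional point is immediate. Your single-heavy-edge idea and your integer-side argument are on the right track (an orientation that diverts an anchor already pays cost $\ge M\ge 1$, and otherwise the forced load at $v$ pushes $e$ toward $u$), but to make the fractional side go through you must shave $p_e$ to $T-\epsilon$ and make each anchor light enough that no $k$-subset containing $e$ exceeds $T$; with $p_e=T$ this is impossible for any positive anchor weight.
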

To the best of our knowledge, all algorithms for \sGB that find an orientation that achieves an approximation better than $2$ with respect to the makespan use the relaxation of \cite{EbenlendrKS08} (or no relaxation at all, {\em e.g.}, \cite{HuangO16}). \footnote{Recently, Jansen and Rohwedder \cite{JansenR18} showed that using a different stronger relaxation called the configuration LP one can achieve an approximation of less than $1.75$ to the makespan. However, this result does not produce a polynomial time algorithm that orients the edges but rather only approximates the {\em value} of the optimal makespan.
Moreover, this result has an unbounded loss with respect to the orientation cost.}

\vspace{3pt}
\noindent \textbf{Extensions:} Using our general framework, we present bicriteria algorithms for extensions of \sGB. The extensions of \sGB we consider allow hyperedges and unrelated weights to the edges. It is important to note that all the upper bounds presented below hold for the single criterion versions of these problems as well.
In particular, we achieve an approximation strictly better than 2, with respect to the makespan, to several problems that capture \sGB and are not captured by the \RA (\sRA).\footnote{The \sRA is a special case of \sGAP where each job has a set of machines it can be assigned to, and has an equal processing time on each of them.} To the best of our knowledge, this is the first polynomial time algorithm with approximation factor better than 2 to the makespan for problems that capture \sGB and are not captured by \sRA. Let us now elaborate on these extensions.

The first extension allows for light unrelated hyperedges. Formally, given $\beta\in [0,1]$, the input can contain hyperedges whose weight with respect to the vertices it shares may vary, as long as it does not exceed $\beta$ (we may assume without loss of generality that the largest weight in $p$ equals $1$). 
We denote this problem by \GBULH (\sGBUH{$\beta$}). 
A special case of this problem was introduced by Huang and Ott in \cite{HuangO16} who presented a $(\nicefrac[]{5}{3}+\nicefrac[]{\beta}{3},\infty)$-approximation when $\beta \in [\nicefrac{4}{7}, 1)$.
We improve upon \cite{HuangO16} in three aspects.
First, we consider the general bicriteria problem, {\em i.e.}, orientation costs are present, and achieve bounded loss with respect to the total orientation cost (recall that \cite{HuangO16} cannot handle orientation costs).
Second, we allow any $\beta \in [0,1]$, where \cite{HuangO16} allows for $\beta \in [\nicefrac{4}{7}, 1)$ only.
Third, we allow the hyperedges to be unrelated, {\em i.e.}, different weights to different endpoints, where hyperedge weights in \cite{HuangO16} are related.
Our result for this extension is summarized in the following theorem.
\begin{theorem} \label{Theorem:GBUH}
Let $ 0\leq \beta \leq 1$. For every $max\left\{ \nicefrac{1}{12}, \nicefrac{\beta}{3}-\nicefrac{1}{12} \right\} \leq \gamma\leq \nicefrac{1}{4}$, there exists a polynomial time algorithm that finds an orientation that is a $(1.75+\gamma, \nicefrac{1}{(2\gamma+0.5)})$-approximation to \sGBUH{$\beta$}.

\end{theorem}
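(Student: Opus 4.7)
The plan is to extend the framework that proves Theorem~\ref{Theorem:firstUpperBound} so it also handles the unrelated light hyperedges of \sGBUH{$\beta$}. First, I would write an LP analogue of \LPstong enriched with assignment variables $x_{h,u}$ for every hyperedge $h$ and every endpoint $u\in h$, a load constraint at each vertex that now sums over edges \emph{and} hyperedges, and the EKS-style strengthening constraints that forbid any assignment variable whose contribution alone would exceed $T$. Because $p_{h,u}\le \beta\le 1$ for every hyperedge and every endpoint, each hyperedge qualifies as ``light'' in the sense used by the rounding, so hyperedges will be processed only in the portion of the rounding that is designed for items below the heavy/light threshold.

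Next, I would run the same two-phase rounding used in Theorem~\ref{Theorem:firstUpperBound}. In the \LS, heavy edges (those whose weight exceeds the chosen threshold) are rounded using the combinatorial structure of the heavy-edge support graph exactly as in Theorem~\ref{Theorem:firstUpperBound}; hyperedges do not participate in this phase. The \GS then performs a Shmoys--Tardos-style bipartite-matching rounding on the residual fractional assignments of light edges and light hyperedges together. The key structural property preserved from Theorem~\ref{Theorem:firstUpperBound} is that this step increases the load at any vertex $u$ by at most the weight of a single extra fractional item rounded toward $u$; since that item has weight at most $\beta$ when it is a hyperedge, the additional load contributed by a hyperedge is at most $\beta$ rather than $1$.

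The orientation-cost analysis carries over essentially verbatim from Theorem~\ref{Theorem:firstUpperBound}: the rounding pays at most $\nicefrac{1}{(2\gamma+0.5)}$ times the LP cost on every assignment, with no distinction between edges and hyperedges. For the makespan, I would repeat the case analysis of Theorem~\ref{Theorem:firstUpperBound} while replacing the unit bound used for the ``extra'' fractional item by the lighter bound $\beta$ whenever that item is a hyperedge. The same arithmetic that produced the threshold $\nicefrac{1}{12}$ in Theorem~\ref{Theorem:firstUpperBound} then yields the shifted threshold $\nicefrac{\beta}{3}-\nicefrac{1}{12}$: the total excess load above $T$ that the rounding must absorb is still $\gamma T$, but one of the ``units'' of excess can now be inflated by a factor depending on $\beta$, so $\gamma$ must be large enough to pay for this inflation.

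The main obstacle is the detailed verification that the constraint $\gamma \ge \nicefrac{\beta}{3}-\nicefrac{1}{12}$ is indeed sufficient to absorb the worst-case accumulation of load at a vertex that receives both a heavy fractional edge (via the \LS) and a rounded-up hyperedge (via the \GS). This will likely require a structural lemma, analogous to the one used for \sGB, that describes how hyperedges interact with the heavy-edge support graph and bounds the number of fractional items of each type that can contribute to the load of any single vertex; once such a lemma is in place, the makespan bound reduces to the same LP-feasibility calculation as in Theorem~\ref{Theorem:firstUpperBound}, only with the $\beta$-weighted hyperedge term added in.
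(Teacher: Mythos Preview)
Your high-level plan matches the paper's approach, but you are overcomplicating it in two places.

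First, since the statement requires $\gamma \ge \max\{\nicefrac{1}{12},\nicefrac{\beta}{3}-\nicefrac{1}{12}\} \ge \nicefrac{1}{12}$, only the \emph{basic} tradeoff of Theorem~\ref{GBBasicTheorem} (with the simple step threshold $f_\alpha$ of \eqref{ThresholdFunc}) is needed; the extended analysis of Theorem~\ref{BetterGBTheorem} and the Set constraints play no role here. The paper's proof is literally a rerun of the proof of Lemma~\ref{MakespanLemma}.

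Second, no ``structural lemma describing how hyperedges interact with the heavy-edge support graph'' is needed. The paper pinpoints that the \emph{only} place the analysis of Theorem~\ref{GBBasicTheorem} uses $|e|=2$ is Observation~\ref{GraphObserv} (the lower bound $x_{e,u}\ge 1-f(p_e)$ for an edge not oriented in the \LS). This observation can fail for a hyperedge. The fix is trivial: whenever the item $e_1$ on top of $slot(u,1)$ is a light hyperedge, drop Observation~\ref{GraphObserv} and use the bound $p_1\le \beta$ directly in Lemma~\ref{lemmaST}. In Case~1 this gives load $\le 1+\beta \le 1.75+\gamma$ (using $\gamma \ge \nicefrac{\beta}{3}-\nicefrac{1}{12}$ and $\gamma\le\nicefrac14$); in Case~2 it gives load $\le 2-\alpha+\beta$, and requiring $2-\alpha+\beta \le 1.5+0.5\alpha$ with $\alpha=2\gamma+0.5$ is precisely the condition $\gamma \ge \nicefrac{\beta}{3}-\nicefrac{1}{12}$. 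The cost bound is Lemma~\ref{CostApprox} unchanged. That is the entire argument; there is no further obstacle to verify.
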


The second extension further generalizes the first one, and it also allows edges to have unrelated weights as long as the weights are greater than $\beta$. Unfortunately, we prove that this problem in its full generality is as hard to approximate as \sGAP. However, if it is assumed that the optimal makespan is at least $1$ (as before we can assume without loss of generality that the largest weight in $p$ equals $1$), we can achieve improved results.
We denote this problem by \GBULHUH (\sGBU{$\beta$}).
\footnote{While the assumption that the largest weight $p$ equals $1$ implies that the optimal makespan is least $1$ for \sGB and \sGBUH{$\beta$}, this is not necessarily the case when the edge weights might be unrelated.
Thus, the assumption in \sGBU{$\beta$} that the optimal makespan is at least $1$ is not without loss of generality.}
\begin{theorem} \label{Theorem:GBU}
Let $\beta\geq \sqrt{2}-1$. For every $\nicefrac{\beta}{3}-\nicefrac{1}{12} \leq \gamma\leq \nicefrac{1}{4}$, there exists a polynomial time algorithm that finds an orientation that is a $(1.75+\gamma,\nicefrac{1}{(2\gamma+0.5)})$-approximation to \sGBU{$\beta$}.
\end{theorem}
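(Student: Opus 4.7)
The plan is to specialize the framework used for Theorems~\ref{Theorem:firstUpperBound} and~\ref{Theorem:GBUH} to the setting with unrelated heavy edges. I would begin by formulating a strengthened linear relaxation analogous to \LPstong, with fractional orientation variables $x_{e,u}$, the standard assignment constraints $\sum_u x_{e,u} = 1$, the per-vertex load constraints $\sum_e p_{e,u}\, x_{e,u} \leq T$, the cost constraint $\sum_{e,u} c_{e,u}\, x_{e,u} \leq C(T)$, and the filtering constraint $x_{e,u} = 0$ whenever $p_{e,u} > T$. To this I would add the Ebenlendr--Krc{\'{a}}l--Sgall-style strengthening applied at each vertex $u$ to edges that are very heavy at $u$ (those with $p_{e,u} > T/2$), enforcing $\sum_{e:\, p_{e,u} > T/2} x_{e,u} \leq 1$. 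Since the optimal makespan is assumed to be at least $1$, we have $T \geq 1$, which is what makes this formulation valid in the unrelated setting and is used throughout to calibrate the heavy/light split relative to the parameter $\beta$.

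Next, I would round the LP in two phases, following the approach used in the earlier theorems. In the first phase the very heavy edges are oriented integrally by a matching-based rounding that invokes the LP strengthening; this yields the $(1.75+\gamma)$ makespan factor and the $\nicefrac{1}{(2\gamma+0.5)}$ cost factor exactly as in Theorem~\ref{Theorem:firstUpperBound}, but with the twist that heavy edges may have unrelated weights, so the rounding is performed vertex-by-vertex and uses the LP masses $x_{e,u}$ and $x_{e,v}$ to route the fractional decisions. In the second phase, the medium heavy edges (with $\beta < p_{e,u} \leq T/2$) together with the light hyperedges are assigned by a Shmoys--Tardos-style rounding, which adds at most an additive $\beta$ to each vertex's load while preserving the cost constraint. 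Combining the two phases, the total makespan is at most $(1.75+\gamma)T$ thanks to the condition $\gamma \geq \beta/3 - 1/12$, and the total orientation cost remains at most $\nicefrac{1}{(2\gamma+0.5)}\, C(T)$.

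The main obstacle is the asymmetry introduced by unrelated heavy weights: a single heavy edge may contribute nearly $1$ on one endpoint and only slightly above $\beta$ on the other, so the integer rounding may incur substantial regret on whichever side is chosen. The restriction $\beta \geq \sqrt{2}-1$ is precisely what is needed to control this regret in combination with the per-vertex budget $(1.75+\gamma)T$ and the $T \geq 1$ assumption: together these ensure that the fractional support of heavy edges at any vertex is small enough that the LP strengthening integrally separates them, and that the combined makespan contribution from both rounding phases does not overshoot $(1.75+\gamma)T$. Showing that this delicate balance yields exactly the $\nicefrac{1}{(2\gamma+0.5)}$ cost factor, matching Theorem~\ref{Theorem:GBUH}, is the technical crux of the argument.
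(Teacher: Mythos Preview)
Your plan diverges from the paper's actual argument in its core mechanism, and the key idea is missing. The paper does \emph{not} separate heavy edges from the rest and round them by a matching procedure; it runs the single framework of Algorithm~\ref{GBrounding}: a \LS that orients any edge $e$ toward $u$ whenever $x_{e,u}>f_\alpha(p_{e,u})$ for a step threshold function $f_\alpha$ (equal to $\alpha$ when $p_{e,u}>\nicefrac{1}{3}$ and $1$ otherwise), followed by a single Shmoys--Tardos \GS on everything left over. There is no ``very heavy vs.\ medium heavy'' split and no separate matching for heavy edges. The cost bound $\nicefrac{1}{(2\gamma+0.5)}$ then comes straight from Lemma~\ref{CostApprox} with $\alpha=2\gamma+0.5$, and the makespan bound $1.5+0.5\alpha$ comes from a two-case analysis depending on whether the top of the first slot is a heavy edge or a light hyperedge, using Lemma~\ref{lemmaST} together with the Set constraints of $LP_k$ (Observations~\ref{SetObserv} and~\ref{SumRounded}). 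You do not mention the Set constraints at all, yet they are what makes Observation~\ref{SumRounded} (that $q_1+q_2\le 1$) go through in the heavy-top case.

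Your diagnosis of the role of $\beta\ge\sqrt{2}-1$ is also off. It has nothing to do with ``the LP strengthening integrally separating'' heavy edges. The point is that because heavy edges are unrelated, Observation~\ref{GraphObserv} fails: an edge could be heavy at $u$ but not at $v$, so $x_{e,v}\le f_\alpha(p_{e,v})$ need not imply $x_{e,u}\ge 1-\alpha$. The fix is to argue that after scaling by $T$ (with $T\le \nicefrac{1}{(0.75+\gamma)}$ in the interesting case), each heavy edge still has $p_{e,u}>\beta(0.75+\gamma)\ge\beta(\nicefrac{2}{3}+\nicefrac{\beta}{3})>\nicefrac{1}{3}$ on \emph{both} endpoints, so the threshold function takes value $\alpha$ at both ends and the symmetric conclusion $1-\alpha\le x_{e,u}\le \alpha$ is recovered (Observation~\ref{AlmostGraphObserv}). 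The inequality $\beta(\nicefrac{2}{3}+\nicefrac{\beta}{3})\ge\nicefrac{1}{3}$ is exactly $\beta^2+2\beta-1\ge 0$, i.e.\ $\beta\ge\sqrt{2}-1$. Without this, your Phase~1/Phase~2 scheme has no mechanism to control a heavy edge that is nearly $1$ at one end and barely above $\beta$ at the other, and your additive-$\beta$ claim for Phase~2 does not cover the load contributed by heavy edges left unrounded in Phase~1.
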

We prove that there are values of $\beta$ for which the bicriteria approximation of Theorem \ref{Theorem:GBU} is tight.
Specifically, we prove the latter for $\beta=\nicefrac[]{1}{2}$ and \LPstong.
The lower bounds are summarized in the following theorem.
\begin{theorem} \label{Theorem:GBUintegrality}
For every $\epsilon>0$, there exists an instance of \sGBU{$0.5$} that is feasible to \LPstong and every orientation has a makespan of at least $\nicefrac{11}{6}-\epsilon$. Moreover, for every $\nicefrac{1}{12}\leq \gamma\leq \nicefrac{1}{4}$, target makespan $T$ and $\epsilon>0$, there exists an instance for \sGBU{$0.5$} that is feasible to \LPstong and has a value of $OPT_{\LPstong}$, and every orientation with makespan at most $(1.75+\gamma)T$ has an orientation cost of at least $\nicefrac{(1-\epsilon)}{(2\gamma+0.5)}\cdot OPT_{\LPstong}$.
\end{theorem}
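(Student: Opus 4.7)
The plan is to construct two families of adversarial \sGBU{$0.5$} instances that expose the claimed integrality gaps of \LPstong, both built from a common ``chain'' gadget that is parameterized differently for the two assertions.

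For the first claim I would, for every $\epsilon>0$, build an instance on $n=n(\epsilon)$ vertices consisting of a long path $v_0,v_1,\ldots,v_n$ joined by heavy edges of weight close to $1$, decorated with unrelated edges of weight $\nicefrac{1}{2}$ on one endpoint and a larger value (chosen just below $1$) on the other. The natural fractional orientation that splits every edge equally between its endpoints will be shown to satisfy \LPstong at makespan $T=1$. To lower bound the integer makespan I would run a propagation/pigeonhole argument in the spirit of \cite{WangS16}: in any integral orientation, the heavy edges along the chain must be oriented monotonically over long sub-intervals, and at the tail of such a sub-interval a single vertex is forced to absorb a whole heavy edge together with a substantial portion of its incident $\nicefrac{1}{2}$-edges, yielding total load at least $\nicefrac{11}{6}-\epsilon$.

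For the bicriteria lower bound I would reuse the same backbone but attach orientation costs so that each edge has a ``cheap'' direction (cost $0$) and an ``expensive'' one (cost $1$), tuned so that orienting every edge in its cheap direction gives a makespan strictly above $(1.75+\gamma)T$. The LP can blend the two modes and pays a cost $OPT_{\LPstong}$ that is straightforward to compute; any integral orientation with makespan at most $(1.75+\gamma)T$ is, by the same chain argument now parameterized by $\gamma$, forced to take the expensive direction on at least a $(1-\epsilon)/(2\gamma+0.5)$ fraction of the edges, matching the target bound. The key numerical step will be calibrating the chain length and the heavy-edge weight against $\gamma$ so that the forced fraction equals $\nicefrac{1}{(2\gamma+0.5)}$ up to the $(1-\epsilon)$ factor, and covering the boundary values $\gamma=\nicefrac{1}{12}$ and $\gamma=\nicefrac{1}{4}$ continuously.

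I expect the main obstacle to be certifying feasibility in \LPstong rather than the integer lower bound itself. Since \LPstong is a strengthening of the basic assignment LP --- presumably with constraints that already rule out the trivial parallel-edge integrality gap in the spirit of \cite{EbenlendrKS08} --- one must design the chain gadget, in particular the exact heavy-edge weight and the placement of the $\nicefrac{1}{2}$-edges, so as to avoid triggering any of these strengthening constraints; inserting short ``padding'' segments between heavy edges is the natural way to dilute any per-vertex or per-cycle constraint below its threshold. Once feasibility of the fractional solution is established, the integer lower bounds in both parts reduce to a fairly standard, if tedious, inductive counting argument along the chain.
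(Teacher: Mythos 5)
Your high-level plan (a chain gadget, a ``collision versus monotone orientation'' case analysis, and care about feasibility in \LPstong) matches the spirit of the paper's proof, but three concrete ingredients are missing or wrong, and the first one is the heart of the argument.

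First, the pigeonhole device that actually forces the load $\nicefrac{11}{6}-\epsilon$ is a single \emph{light hyperedge of weight $0.5$ shared by all (or almost all) vertices of the gadget}, not per-vertex ``decoration'' edges of weight $\nicefrac{1}{2}$. A decoration edge with weight $\nicefrac{1}{2}$ at a path vertex and weight just below $1$ at a fresh private endpoint can always be oriented away (the private endpoint then has load below $1$), so in your construction the collision-free orientations escape with small makespan and the lower bound fails. In the paper, after ruling out collisions every vertex of a monotonically oriented path already carries load $\nicefrac{1}{3}+(1-O(\epsilon))$ from a self-loop and a heavy edge, and the shared hyperedge must land on \emph{some} such vertex, adding the final $0.5$; a second hyperedge shared across the leftmost vertices of many parallel paths handles the remaining all-left case. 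Without a job assignable to many machines you cannot force that last $0.5$ onto the tail of the chain.

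Second, your fractional solution ``split every edge equally'' is infeasible: an internal vertex incident to two heavy edges of weight close to $1$ would already have fractional load close to $1$ from those halves alone, leaving no room for the hyperedge share and the self-loops. The paper instead uses the unrelatedness of the heavy edges (weight $1-6\epsilon$ one way, $0.5+\epsilon$ the other) together with an asymmetric split ($\nicefrac{2}{3}$ toward the cheap endpoint) so that the Load, Star and Set constraints all check out; Set feasibility is then obtained by splitting each self-loop into many tiny loops so that any infeasible set must contain at least two genuine edges whose $x$-values sum to at most $1$.

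Third, for the bicriteria part the paper uses a \emph{cycle}, not a path, and places cost $1$ on a \emph{single} edge. On a cycle there are exactly two collision-free orientations, both of which are shown to either exceed the makespan bound or pay the full cost $1$, while the LP pays only the fraction $2\gamma'+0.5$ of that single cost; the gap $\nicefrac{1}{(2\gamma+0.5)}$ falls out immediately. Your accounting via ``a $(1-\epsilon)/(2\gamma+0.5)$ fraction of the edges must be expensive'' on a path backbone is not calibrated against any computed value of $OPT_{\LPstong}$, and on a path the many collision-free orientations again break the dichotomy you rely on. These are fixable, but as written the proposal does not yield the theorem.
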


In the third and final extension we allow the edges in \sGB to be unrelated, but the weights cannot vary arbitrarily. Formally, given a parameter $c\geq 1$, every edge $e=(u,v)$ satisfies $p_{e,u}\leq c\cdot p_{e,v}$ and $p_{e,v}\leq c\cdot p_{e,u}$.
We denote this problem by \SRGB (\sSRGB).
The following theorem summarizes our algorithm for \sSRGB.
\begin{theorem} \label{Theorem:SRGB}
There exists a polynomial time algorithm to \sSRGB, that finds an orientation that is a $(1.5+0.5a, \nicefrac{1}{a})$-approximation, where $a$ is the root in the range $[0.5, 1]$ of the polynomial:
\[
\left(\nicefrac{1}{c}+\nicefrac{1}{2}\right)\cdot a^3+\left(\nicefrac{5}{(2c)}-\nicefrac{1}{2}\right)\cdot a^2 - \nicefrac{7}{(2c)}\cdot a + \nicefrac{1}{c}.
\]
\end{theorem}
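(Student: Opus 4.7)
The plan is to instantiate the general framework of this paper in the semi-related setting. I would first formulate a strengthened linear programming relaxation analogous to \LPstong, using the asymmetric edge weights $p_{e,u}$ and $p_{e,v}$ on each edge $e=(u,v)$. The relaxation has variables $x_{e,u}, x_{e,v} \in [0,1]$ satisfying $x_{e,u}+x_{e,v}=1$, makespan constraints $\sum_{e\ni u} p_{e,u} x_{e,u} \le T$ at each vertex $u$, and strengthening constraints that force $x_{e,u}=0$ whenever $p_{e,u}>T$; the objective minimizes $\sum_e (c_{e,u}x_{e,u}+c_{e,v}x_{e,v})$. Feasibility for the target makespan $T$ follows from any integral orientation witnessing makespan $T$.

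Next I would apply the \LS and \GS rounding of the framework, parametrized by a threshold $a\in[0.5,1]$. An edge $e$ is declared heavy at an endpoint $u$ if $p_{e,u}>aT$ and light otherwise; because weights are semi-related, an edge may be heavy on one endpoint while light on the other. The \LS resolves edges that are heavy on at least one endpoint by exploiting the structure of the heavy-edge support graph; the \GS then rounds the remaining light edges via a Shmoys-Tardos bipartite assignment that preserves the total orientation cost up to a factor of $1/a$, yielding the claimed cost bound of $C(T)/a$.

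The makespan analysis proceeds vertex by vertex. At any vertex $u$, the load from light edges rounded to $u$ is at most $T$ by preservation of the LP constraint, and at most one additional heavy edge contributes an extra $p_{e,u}$ of load from the \LS. The semi-related hypothesis is crucial here: it allows us to bound the weight of a heavy edge oriented against its LP preference through the opposite endpoint, via $p_{e,u}\le c\cdot p_{e,v}$. A case analysis separating edges heavy on both endpoints from those heavy on only one endpoint, combined with the $c$-relation, will yield an overall makespan bound of $(1.5+0.5a)T$.

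The main obstacle is choosing $a$ optimally so that the case analysis closes. The cubic polynomial displayed in the theorem arises from equating the worst-case bound across these cases with the target $(1.5+0.5a)T$, after substituting the semi-related relation in each branch where a heavy edge is oriented against its fractional LP direction. The coefficients $\left(\nicefrac{1}{c}+\nicefrac{1}{2}\right)$, $\left(\nicefrac{5}{(2c)}-\nicefrac{1}{2}\right)$, $-\nicefrac{7}{(2c)}$, and $\nicefrac{1}{c}$ separate into contributions from the LP load on the chosen vertex, the excess load from rounded heavy edges, and the extra inflation factor contributed by heavy-light edges whose weights are scaled by $c$. Verifying that this polynomial has a unique root in $[0.5,1]$ for every $c\ge 1$, and that this root simultaneously witnesses both the makespan bound $(1.5+0.5a)T$ and the cost bound $C(T)/a$, is the computational heart of the argument and completes the proof.
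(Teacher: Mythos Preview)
Your proposal misses the central device of the paper's argument: the threshold function has \emph{two} parameters, not one. The paper uses
\[
f_{a,b}(p)=\begin{cases} a & p>b,\\ 1 & p\le b,\end{cases}
\]
so an edge $e$ is oriented to $u$ in the \LS precisely when $x_{e,u}>f_{a,b}(p_{e,u})$. The weight cutoff $b$ is distinct from the fractional threshold $a$, and the cubic in the theorem arises only after one computes the makespan bound
\[
\max\Bigl\{\tfrac{1}{a}+c\cdot b,\ 1.5+0.5a,\ 2-\bigl(2-\tfrac{1}{a}\bigr)b\Bigr\}
\]
and then eliminates $b$ by equating the three terms. Your single-threshold ``heavy if $p_{e,u}>aT$'' scheme cannot reproduce this three-way balance, and your narrative about the cubic's coefficients does not correspond to any actual derivation.

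Beyond the parametrization, two further ingredients are absent. First, the semi-related hypothesis enters through a specific observation: if $p_{e,u}>c\cdot b$ and $e$ was not oriented in the \LS, then $p_{e,v}>b$ forces $x_{e,v}\le a$, hence $x_{e,u}\ge 1-a$. This is what replaces Observation~\ref{GraphObserv} in the asymmetric setting and drives Cases~2 and~3 of the makespan analysis. Second, the Set constraints of $LP_k$ (not merely the basic Load constraints you mention) are essential: in the case $p_1>c\cdot b$ and some edge was oriented in the \LS, one needs the Set constraint on a pair $\{e',e_1\}$ to conclude $p_1\le 1-b$, which is where the third term in the max comes from. Your sketch that ``at most one additional heavy edge contributes'' and ``load from light edges is at most $T$'' does not reflect how the Shmoys--Tardos rounding actually bounds load, and it bypasses exactly the case analysis where the cubic is born.
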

We remark that the approximation guaranteed by Theorem \ref{Theorem:SRGB} is never worse than $2$ since it can be proved that $a=1-\Omega \left( \nicefrac[]{1}{c}\right)$, yielding a $(2-\Omega(\nicefrac[]{1}{c}),1+O(\nicefrac[]{1}{c}))$-approximation. 
It is worth noting that when $c=\infty$, which corresponds to the most general case, even the configuration LP has an integrality gap of $2$ with respect to the makespan (see \cite{EbenlendrKS14, VerschaeW14}).

\begin{center}
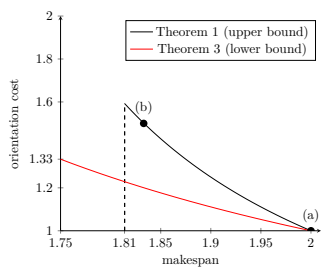
\begin{figure}
\begin{center}
	\begin{tikzpicture}[scale=0.5]
	\begin{axis}[
	axis lines = left,
	xlabel = {makespan},
	ylabel = {orientation cost},
	xmin=1.75, xmax=2.01,
	ymin=1, ymax=2,
	xtick={1.75, 1.81386, 1.85, 1.9, 1.95, 2},
	ytick={1, 1.2, 1.3333, 1.6, 1.8, 2}
	]

	
	\addplot [
	domain=1.81386:2,
	samples=100,
	color=black,
	]
	{1/(2*(x-1.75)+0.5)};
	\addlegendentry{Theorem \ref{Theorem:firstUpperBound} (upper bound)}
	
	\draw [dashed] (64, 0) -- (64, 59);
	
	\draw [dashed] (64, 0) -- (64, 59);
	
	\addplot [
	domain=1.75:2,
	samples=100,
	color=red,
	]
	{1/(x-1)};
	\addlegendentry{Theorem \ref{Theorem:LowerBound} (lower bound)}

	\node[label={(a)},circle,fill,inner sep=2pt] at (axis cs:2,1) {};
	\node[label={(b)},circle,fill,inner sep=2pt] at (axis cs:1.833, 1.5) {};
	
	\end{axis}
	\end{tikzpicture}
	\caption{Our bicriteria bounds for \GBnp. (a) is given in Shmoys and Tardos \cite{ShmoysT93}, whereas (b) is implicitly given in Wang and Sitters \cite{WangS16}.}
	\label{OurResultsGB}
\end{center}
\end{figure}

\end{center}

\subsection{Our Techniques}

We present a remarkably simple framework that allows us to provide bicriteria upper bounds for both \sGB and its extensions, \ie, \sGBUH{$\beta$}, \sGBU{$\beta$}, and \sSRGB.
The framework is based on rounding of a strengthening of the linear relaxation of \cite{EbenlendrKS08}.

The rounding is comprised of two complementary steps, the first {\em local} and the second {\em global}.
Intuitively, in the first local step, each edge can be oriented to one of its endpoints in case the relaxation indicates a strong (fractional) inclination toward that endpoint.
We note that in order to quantify this inclination the weight of the edge is taken into account, where lighter edges are less likely to be oriented.
Specifically, denote by $x_{e,u}\in [0,1]$ how much the relaxation fractionally orients edge $e=(u,v)$ toward its endpoint $u$.
The local step orients $e$ toward $u$ if $x_{e,u}> f(p_e)$ for some non-increasing threshold function $f:[0,1]\rightarrow [\nicefrac[]{1}{2},1]$.
As previously mentioned, this step is considered {\em local} since only $x_{e,u}$ and $p_e$ are used to determine whether to orient $e$, and if so to which of its two endpoints.\footnote{This rounding was used in Wang and Sitters \cite{WangS16} with a specific ``step'' threshold function $f$ to implicitly obtain a $(\nicefrac{11}{6}, \nicefrac{3}{2})$-approximation for \sGB.}
In the second global step of the rounding, we consider the remaining edges which were not yet oriented in the first local step and apply the algorithm of Shmoys and Tardos \cite{ShmoysT93} which finds a minimum cost perfect matching in a suitable bipartite graph.
As previously mentioned, this step is considered {\em global} since all edges which are not yet oriented are taken into consideration when computing the matching.

The above two-phase rounding is not sufficient on its own to obtain our claimed results, and we further strengthen the relaxation of \cite{EbenlendrKS08} by forcing additional new constraints.
Intuitively, for every vertex $u$ our constraints state that if a collection of edges $S$ touching $u$ has total weight of more than $T$ then not all edges in $S$ can be chosen.
We enforce the above constraints for all subsets of size at most $k$, for some fixed parameter $k$, resulting in a strengthened linear relaxation which we denote by $LP_k$.
It is important to note that these constraints cannot be inferred from the original relaxation of \cite{EbenlendrKS08}, and thus are required in our analysis of the above two-phase rounding.

%
%
%
%
%

\subsection{Additional Related Work}
Lenstra {\em et al.} \cite{LenstraST90} introduced the classic well known $2$-approximation to the single criterion \sGAP. They also proved that no polynomial time algorithm can approximate the makespan within a factor less than $1.5$ unless $P=NP$. This was followed by Shmoys and Tardos \cite{ShmoysT93} who introduced the bicriteria \sGAP and presented a $(2,1)$-approximation for it.
A slightly improved approximation of $2-\nicefrac[]{1}{m}$ for the makespan was given by Shchepin and Vakhania \cite{SV05}.
If the number of machines is fixed polynomial time approximation schemes are known \cite{HS76,JP01}.
For the case of uniformly related machines (each machine $i$ has speed $s_i$ and assigning job $j$ to machine $i$ takes $\nicefrac[]{p_j}{s_i}$ time) Hochbaum and Shmoys \cite{HS98} presented a polynomial time approximation scheme.
The \RA (\sRA) is a special case were each job has an equal processing time on the machines it can be assigned to (for every job $j$ and machine $i$: $p_{i,j}\in \{p_j,\infty\}$). For this special case, Svensson \cite{Svensson11} proved that one can approximate the value of the optimal makespan by a
factor of $\nicefrac{33}{17}$ using the configuration LP, that was first introduced by Bansal {\em et al.} for the \SANTA \cite{BansalS06}. 
This was subsequently improved by Jansen and Rohwedder \cite{JansenR17} who presented an approximation of $\nicefrac{11}{6}$.
If one further assumes that the processing times have only two possible values \cite{JansenLM18} presented an improved approximation of $\nicefrac{5}{3}$.
The above results \cite{Svensson11, JansenR17, JansenLM18} do not present polynomial time algorithms that produce a schedule with the promised makespan, but only approximate the value of the makespan.

When considering \sGB, Jansen and Rohwedder \cite{JansenR18} recently showed a similar flavor result: using the configuration LP one can estimate the value of the optimal makespan by a factor of $1.75-\epsilon$, for some small constant $\epsilon>0$. However, as before, \cite{JansenR18} does not produce an orientation in polynomial time.
The special case of \sGB where only two processing times are present admits a (tight) $1.5$-approximation (given independently by \cite{HuangO16, ChakrabartyS16, PageS16}).

To the best of our knowledge, no work on \sGB considered orientation costs and in particular the tradeoff between makespan and orientation cost.

\noindent {\bf Paper Organization:} Section \ref{sec:Preliminaries} contains the required preliminaries.
In Section \ref{sec:GraphBalancing} we present our general framework and apply it to \sGB to obtain bicriteria algorithms.
Section \ref{sec:GBLowerBound} contains our bicriteria lower bound for \sGB.
Finally, in Section \ref{Section:Extensions} we consider the mentioned extensions of \sGB and apply the framework to these extensions to obtain improved algorithms.

\section{Preliminaries}\label{sec:Preliminaries}
Given a multi-graph $G=(V,E)$ and a vertex $u\in V$ denote by $\delta(u)\triangleq \{e\in E\mid u\in e\}$ the collection of edges incident to $u$. In addition define: $\mathcal{F}(u)\triangleq \{S\subseteq \delta(u) \mid \sum_{e\in S} p_e \leq 1\}$, {\em i.e.}, the collection of feasible subsets of edges incident to $u$ (for simplicity of presentation we further assume without loss of generality that $T=1$ since we can scale all processing times by $T$).
Moreover, we denote by $OPT_{LP}$ and $OPT_{LP_k}$ the optimal value of a feasible solution to the relaxation $LP$ and $LP_k$ respectively. 



The algorithm of Shmoys and Tardos \cite{ShmoysT93} is a key ingredient in our framework, thus we present it not only for completion but also since understanding its inner-working helps in analyzing our algorithms. Recall that \cite{ShmoysT93} is a $(2,1)$-approximation for \sGAP. We assume without loss of generality that $T=1$ since one can scale the processing times by $T$. First, the relaxation in Figure \ref{RelaxationST} is solved, where $\mathcal{J}$ is the set of jobs and $\mathcal{M}$ is the set of machines. The variable $x_{i,j}$, for each $i\in \mathcal{M}$ and $j\in \mathcal{J}$, indicates whether job $j$ is scheduled on machine $i$. Note that if there is no feasible solution to the relaxation, then the algorithm declares there is no schedule with makespan at most $T$.

%

\begin{figure} 

		\begin{small}
			\arraycolsep=1.0pt\def\arraystretch{1.8}
			$\begin{array}{lllll}
			\text{(\LPgap)} &&&&\\
		\text{min}& \displaystyle\sum_{j\in \mathcal{J}} \sum_{i\in \mathcal{M}} x_{i,j}c_{i,j}&&&\\
			\text{s.t}& \displaystyle \sum_{i\in \mathcal{M}} x_{i,j} = 1 &&\forall j\in \mathcal{J}&(Job)\\
			& \displaystyle\sum\limits_{j\in \mathcal{J}} x_{i,j}p_{i,j} \leq 1 &&\forall i\in \mathcal{M}&(Load) \\
			& x_{i,j}=0 && \forall i\in \mathcal{M}, j\in \mathcal{J}: p_{i,j}>1&\\
			& x_{i,j} \geq 0 && \forall i\in \mathcal{M}, j\in \mathcal{J}&\\
			\end{array}$
		\end{small}

	\caption{The relaxation by Shmoys and Trados \cite{ShmoysT93} to \sGAP.}
\end{figure} \label{RelaxationST}

Given a solution \textbf{x} to \LPgap, the algorithm of \cite{ShmoysT93} constructs a weighted bipartite graph $G=(\mathcal{J},S, E)$, which will be described shortly. Afterwards, the algorithm finds a minimum cost perfect matching to the side $\mathcal{J}$, {\em i.e.}, each vertex in $\mathcal{J}$ is matched to a vertex in $S$. Using this matching the algorithm assigns each job to a machine.
The  bipartite graph $G$ is constructed as follows, where we assume that $\mathcal{J}=\{1,2,\ldots, n\}$ is the set of jobs and $S$ is a collection of ``slots''. Machine $i$ is allocated $k_i \triangleq \lceil \sum_{j=1}^{n} x_{i,j} \rceil$ slots which we denote by $slot(i,1),\ldots slot(i,k_i)$, each having a capacity of $1$. 
For each machine $i$ sort the jobs in a non-increasing order of their processing time $p_{i,j}$, and for each job $j$ in this order add $x_{i,j}$ units of job $j$ to the next non-full slot of machine $i$ (starting from $slot(i,1)$).
If $x_{i,j}$ is larger than the remaining capacity of the slot, which we denote by $r$, add $r$ units of job $j$ to that slot and $x_{i,j}-r$ units of job $j$ to the next slot. 
An edge connecting job $j$ and a slot $(i,\ell)$ is added to $E$ if some of the $x_{i,j}$ units of $j$ were added to the slot $(i,\ell)$, and its cost is set to $c_{i,j}$.
A description of \cite{ShmoysT93} appears in Algorithm \ref{STrounding}.

\begin{algorithm*}[H] \label{STrounding}
	\caption{Shmoys-Tardos $(\textbf{x}, \textbf{p}, \textbf{c})$}
	Construct the bipartite graph $G=(\mathcal{J},S,E)$ as described above. \\
	Find in $G$ a minimum cost perfect matching with respect to $\mathcal{J}$. \\
	For each job $j\in \mathcal{J}$, assign $j$ to machine $i$ if the slot that is matched to $j$ belongs to $i$. 
\end{algorithm*}

We say a slot is \textit{full} if the remaining capacity of that slot is $0$.
Additionally, we say a job $j$ is on {\em top} of a slot if $j$ is the first job to be inserted to that slot. 
It can be proved that the load on machine $i$ in the output of Algorithm \ref{STrounding} is at most $1+p_{i,1}$, where $p_{i,1}$ is the processing time of the job on top of $slot(i,1)$, \ie, the largest processing time of a job that is fractionally scheduled on machine $i$. Since, $p_{i,1}\leq 1$, the makespan of the assignment is at most $2$. Furthermore, it can be shown that the cost of the assignment is at most $OPT_{\text{\LPgap}}$, and thus at most $C(T)$.

We remark that when one is aiming to solve the single criterion version of this problem, \ie, finding an assignment that minimizes the makespan, a binary search could be preformed to find the smallest $T$ such that the linear relaxation is feasible.
In general, any $(\alpha,\beta)$-approximation for the bicriteria problem implies an approximation of $\alpha$ for the single criteria problem.

\section{The General Framework and Graph Balancing}\label{sec:GraphBalancing}
We start by describing the general framework in the setting of \sGB.
For simplicity of presentation, given a target makespan $T$, if there exists an edge $e$ such that $p_e>T$ the algorithm immediately declares that there is no orientation with makespan at most $T$.
Otherwise, we scale the processing times by $T$.
Thus, without loss of generality, $T=1$ and $p_e\leq 1$ for every $e\in E$.

Currently, we consider the relaxation of \cite{EbenlendrKS08}, which we denote by \LPthree, with the addition of an objective function that minimizes the orientation cost.\footnote{In Section \ref{Section:Extensions} we also need the constraint that appears in the relaxation of \cite{ShmoysT93} which states that $x_{e,u}=0$ if $p_{e,u}>1$, for every $e\in E$ and $u\in e$.}
This relaxation appears in Figure \ref{RelaxationGB}.

\begin{figure} 
	\begin{small}
		\arraycolsep=1.0pt\def\arraystretch{1.8}
		$\begin{array}{lllll}
		\text{(\LPthree)} &&&&\\
		\text{min}& \displaystyle\sum_{e\in E} \sum_{u\in e} x_{e,u}c_{e,u}&&&\\
		\text{s.t}& \displaystyle \sum_{u\in e} x_{e,u} = 1 &&\forall e\in E&(Edge)\\
		& \displaystyle\sum\limits_{e\in \delta(u)} x_{e,u}p_e \leq 1 &&\forall u\in V&(Load) \\
		& \displaystyle\sum_{e\in \delta(u):\  p_e>0.5} x_{e,u} \leq 1 &&\forall u\in V&(Star) \\
		& x_{e,u} \geq 0 && \forall u\in V, e\in \delta(u)&\\
		\end{array}$
	\end{small}
	\caption{The relaxation by Ebenlendr {\em et al.} \cite{EbenlendrKS08} to \sGB.}
\end{figure} \label{RelaxationGB}

Note that the Star constraint of \LPthree implies that at most a total fraction of $1$ of {\em big} edges, {\em i.e.}, edges whose weight is larger than $\nicefrac[]{1}{2}$, can be oriented toward $u$.
Moreover, we note that later we strengthen this relaxation by adding additional constraints.

Once the processing times are scaled by $T$, the algorithm solves the relaxation \LPthree.
If there is no feasible solution to the relaxation, then the algorithm declares that there is no orientation with makespan at most $T$.
Thus, from this point onward we assume that \LPthree is feasible and focus on the rounding.


Recall that the rounding consists of only two steps, the first local and the second global.
In the first step, some of the edges might be oriented, where an edge $e$ is oriented toward $u$ if $x_{e,u}> f(p_e)$ for a given threshold function $ f:[0,1]\rightarrow [\nicefrac[]{1}{2},1]$.
We employ the framework for threshold functions $f$ which are monotone non-increasing, thus making lighter edges less likely to be oriented compared to heavier edges.
In the second step, the remaining un-oriented edges are oriented using Algorithm \ref{STrounding}.
The framework is described in Algorithm \ref{GBrounding}. It receives as an input: (1) the graph $G=(V,E, \textbf{p},\textbf{c})$; (2) \textbf{x} a solution to the relaxation; (3) a threshold function $f$. 

\begin{algorithm*}[H] \label{GBrounding}
	\caption{Framework$(G=(V,E,\textbf{p},\textbf{c}), \textbf{x}, f)$}
	For each edge $e$ and $u\in e$: if $x_{e,u} > f(p_e)$ then orient $e$ to $u$ and remove $e$ from $E$. \hfill (\LS) \\
	Execute Algorithm \ref{STrounding}. \hfill (\GS)
\end{algorithm*}

Note that the \LS of Algorithm \ref{GBrounding} is well defined, {\em i.e.}, no edge is oriented to both its endpoints. This is due to the Edge constraints and the fact that for each $p\in [0,1]$: $f(p)\geq \nicefrac{1}{2}$. Note that the framework captures Algorithm \ref{STrounding} as a special case since one can choose $f \equiv 1$. We now focus on bounding the makespan and orientation cost produced by the framework, for a general threshold function $f$. This analysis will be useful for the rest of the paper.


\vspace{3pt}
\noindent {\bf Makespan:} We start by presenting a simple but crucial observation. The observation states that if an edge $e=(u,v)$ was \textit{not} oriented at the \LS then $x_{e,u}$ and $x_{e,v}$ cannot vary much. It is important to note that this is the \textit{only} place in our proof we use the fact that $e$ is an edge, {\em i.e.}, the job that corresponds to $e$ can be assigned to only two machines $u$ and $v$ (otherwise our algorithm could have been applied to the more general problem of \sRA).

\begin{observation} \label{GraphObserv}
Let $e=(u,v)\in E$ such that $e$ was not oriented to either $u$ or $v$ in the \LS.
Then $1-f(p_e)\leq x_{e,u}\leq f(p_e)$.
\end{observation}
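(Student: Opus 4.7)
The plan is to derive both bounds directly from the orientation rule of the \LS together with the Edge constraint of \LPthree, with essentially no further work required.

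First, I would unpack the contrapositive of the \LS. Since $e=(u,v)$ was not oriented toward $u$, the condition $x_{e,u}>f(p_e)$ must have failed, yielding immediately
\[
x_{e,u}\leq f(p_e),
\]
which is the upper bound in the claim. By the identical reasoning applied to the other endpoint, since $e$ was not oriented toward $v$ either, we also have $x_{e,v}\leq f(p_e)$.

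Second, I would combine this with the Edge constraint of \LPthree. Because $e$ has exactly the two endpoints $u$ and $v$, the Edge constraint reads $x_{e,u}+x_{e,v}=1$, so $x_{e,v}=1-x_{e,u}$. Substituting into the inequality $x_{e,v}\leq f(p_e)$ obtained above yields $1-x_{e,u}\leq f(p_e)$, i.e.
\[
x_{e,u}\geq 1-f(p_e),
\]
which is the lower bound. Together with the upper bound this completes the proof.

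There is no real obstacle here; the whole content of the observation is that the Edge constraint for an ordinary edge is a two-term equation, so bounding $x_{e,v}$ automatically bounds $x_{e,u}$ from below. This is precisely why, as the paper emphasizes, the argument uses that $e$ is an edge rather than a hyperedge: for a hyperedge with three or more endpoints, an upper bound of $f(p_e)$ on each of the other coordinates would no longer pin $x_{e,u}$ down from below in the same tight way, and the analysis would have to be modified.
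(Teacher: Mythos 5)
Your proof is correct and follows exactly the same route as the paper's: the upper bound comes from the failed orientation test at $u$, and the lower bound comes from the failed test at $v$ combined with the two-term Edge constraint $x_{e,u}+x_{e,v}=1$. Your closing remark about why this fails for hyperedges matches the paper's own emphasis on this point.
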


\begin{proof}
$e$ was not oriented toward $u$ in the \LS, and therefore $x_{e,u}\leq f(p_e)$.
Additionally, the Edge constraint implies that $x_{e,v} = 1-x_{e,u}$, and since $e$ was not oriented toward $v$ in the \LS then $1-x_{e,u}\leq f(p_e)$. This concludes the proof. 
\end{proof}


Now we focus on bounding the makespan. Fix a vertex $u\in V$, and denote the slots that were allocated to $u$ in Algorithm \ref{STrounding} by: $slot(u,1),...,slot(u,k)$ or alternatively by $s_1,...,s_k$. For $i\in \{1, 2, \ldots, k\}$ let $e_i$ be the edge on top of $slot(u,i)$ and denote its processing time by $p_i$. We assume without loss of generality that $p_{k+1}\triangleq 0$ and $x_{e_{k+1},u}=1$ (one can simply add a $0$ weight edge that is fully oriented toward $u$).\footnote{Alternatively, we can also assume $p_{k+2}=0$ and $x_{e_{k+2},u}=1$ as well.} Additionally, denote by $e'_1,\dots , e'_t$ the edges that were oriented to $u$ in the \LS, and denote by $q_1,\dots , q_t$, their processing times respectively.
Lastly, for a slot $s$ and edge $e$ we denote by $y_{e,s}$ the fraction that $e$ is assigned to $s$.

We now introduce a new observation that lower bounds the fractional load in the first slot of $u$, {\em i.e.}, $\sum _{e\in slot(u,1)}y_{e,s_1}p_e$.
This observation will be useful in bounding the load on $u$.

\begin{observation} \label{FirstSlotObserv}
	The fractional load in the first slot of $u$ is at least:
	\[
	\sum_{e\in slot(u,1)}y_{e,s_1}p_e\geq (1-f(p_1))p_1 + f(p_1)p_2.
	\]
\end{observation}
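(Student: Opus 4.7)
The plan is to unpack the Shmoys and Tardos slot-packing for vertex $u$ and identify exactly what lives in $slot(u,1)$. Recall that edges with a nonzero fractional assignment to $u$ are inserted into slots in non-increasing order of processing time, starting from $slot(u,1)$, and an edge spills into the next slot only when the current one reaches its capacity of $1$. Let $f_1, f_2, \ldots, f_j$ be the edges, in insertion order, that have a positive fraction in $s_1$; by definition $f_1 = e_1$. By the non-increasing sort, every $f_\ell$ with $\ell \geq 2$ satisfies $p_{f_\ell} \geq p_2$ (if $k = 1$ we simply use the stated convention $p_2 = 0$, and the bound is trivial).

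Next, I would split the fractional load $\sum_e y_{e,s_1} p_e$ into the $f_1 = e_1$ term plus the rest. Because $e_1$ is the first edge added to a fresh slot of capacity $1$ and $x_{e_1,u} \leq 1$ by the Edge constraint, the entire fractional mass of $e_1$ sits in $s_1$, so $y_{e_1, s_1} = x_{e_1, u}$. When $k \geq 2$ the slot $s_1$ must be full, since this is the only reason the packing would have opened $slot(u,2)$ at all, giving $\sum_\ell y_{f_\ell, s_1} = 1$ and hence $\sum_{\ell \geq 2} y_{f_\ell, s_1} = 1 - x_{e_1, u}$. Combining this with $p_{f_\ell} \geq p_2$ for $\ell \geq 2$ yields the lower bound $x_{e_1, u} p_1 + (1 - x_{e_1, u}) p_2$ on the fractional load of $s_1$; when $k = 1$ only the first term remains, which matches the claim with $p_2 = 0$.

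Finally, I would rewrite this lower bound as $p_2 + x_{e_1, u}(p_1 - p_2)$ and note that, since $p_1 \geq p_2$, it is non-decreasing in $x_{e_1, u}$. Because $e_1$ sits on top of $s_1$ it must have survived the \LS (otherwise it would have been removed from $E$ before the \GS runs), so Observation \ref{GraphObserv} applies to it and gives $x_{e_1, u} \geq 1 - f(p_1)$. Substituting this worst case produces the claimed bound $(1 - f(p_1)) p_1 + f(p_1) p_2$. The main obstacle is essentially bookkeeping around two boundary cases: $k = 1$, and the situation in which the packing of $s_1$ terminates by exactly filling the slot (so that the last edge placed in $s_1$ is a predecessor of $e_2$ rather than $e_2$ itself). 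In both of these cases the inequality $p_{f_\ell} \geq p_2$ for $\ell \geq 2$ continues to hold, and the argument goes through unchanged.
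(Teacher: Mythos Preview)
Your proof is correct and follows essentially the same approach as the paper: use that $e_1$ sits entirely in $slot(u,1)$ with mass $x_{e_1,u}\geq 1-f(p_1)$ (via Observation~\ref{GraphObserv}), that the remaining mass $1-x_{e_1,u}$ in the full slot consists of edges with weight at least $p_2$, and then monotonicity in $x_{e_1,u}$. Your treatment is in fact more careful than the paper's, which handles the boundary cases ($k=1$, exact filling) implicitly and even contains a typo in the final displayed inequality of its proof.
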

\begin{proof}
	From Observation \ref{GraphObserv} we know that $x_{e_1}\geq 1-f(p_1)$. Moreover, since $e_1$ is the first edge to be inserted to the first slot, then it is contained fully in $slot(u,1)$. Therefore, $y_{e_1,s_1}=x_{e_1,u}\geq 1-f(p_1)$. Recall that $p_2\leq p_1$. Since $slot(u,1)$ is full and its capacity equals $1$, we can conclude: $ \sum_{e\in slot(u,1)}y_{e,s_1}p_e\geq f(p_1)p_1 + (1-f(p_1))p_2$.
\end{proof}
Now we introduce a lemma that is inspired by \cite{ShmoysT93} and upper bounds the load on $u$.
\begin{lemma} \label{lemmaST}
	Let $e'_1,\dots, e'_t$ be the edges that were oriented to $u$ in the \LS, and let $q_1,\dots, q_t$ be their processing times respectively. Then, 
	\[
	\sum_{i=1}^{t}q_i+\sum_{i=1}^{k}p_i\leq 1+\sum_{i=1}^{t}(1-f(q_i))q_i+f(p_1)p_1+(1-f(p_1))p_2.
	\]
\end{lemma}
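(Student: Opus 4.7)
The plan is to recast the target inequality as a bound on a collection of fractional LP contributions at $u$ and then conclude using the Load constraint of \LPthree. Rearranging, the inequality is equivalent to
\[
\sum_{i=1}^{t} f(q_i) q_i \;+\; (1-f(p_1)) p_1 + f(p_1) p_2 \;+\; \sum_{i=3}^{k} p_i \;\leq\; 1,
\]
and I would bound each of the three summands on the left by a distinct piece of $\sum_{e \in \delta(u)} x_{e,u} p_e$.

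For the first summand, the rule of the \LS ensures $x_{e'_i, u} > f(q_i)$ for every $i$, and multiplying by $q_i \geq 0$ yields $\sum_{i=1}^{t} f(q_i) q_i \leq \sum_{i=1}^{t} x_{e'_i, u} q_i$. The second summand is exactly the lower bound on the fractional load of $slot(u,1)$ provided by Observation \ref{FirstSlotObserv}, namely $(1-f(p_1)) p_1 + f(p_1) p_2 \leq \sum_{e \in slot(u,1)} y_{e, s_1} p_e$. For the third summand I would invoke the structural invariant of Algorithm \ref{STrounding}: since the \GS processes the remaining edges in non-increasing order of processing time and $slot(u,j-1)$ is full for every $2 \leq j \leq k$, every edge fractionally assigned to $slot(u,j-1)$ has processing time at least $p_j$. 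Because the total $y$-mass in a full slot equals $1$, this gives $\sum_{e \in slot(u,j-1)} y_{e, s_{j-1}} p_e \geq p_j$ for all $2 \leq j \leq k$, and summing from $j = 3$ to $k$ yields $\sum_{i=3}^{k} p_i \leq \sum_{j=2}^{k-1} \sum_{e \in slot(u,j)} y_{e, s_j} p_e$.

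Combining the three estimates, the left-hand side is bounded by
\[
\sum_{i=1}^{t} x_{e'_i, u} q_i \;+\; \sum_{j=1}^{k-1} \sum_{e \in slot(u,j)} y_{e, s_j} p_e \;\leq\; \sum_{e \in \delta(u)} x_{e,u} p_e \;\leq\; 1,
\]
where the middle inequality uses that the LS-oriented edges are disjoint from the edges remaining for the \GS, that $\sum_{j} y_{e, s_j} = x_{e,u}$ for each remaining edge $e$ incident to $u$, and that dropping slot $k$ can only decrease the sum; the final inequality is the Load constraint of \LPthree. The main obstacle is the bound on the third summand: it requires carefully tracking both the sorting invariant of Algorithm \ref{STrounding} and the fact that every slot except possibly the last is full, which is what lets one pay for $\sum_{i=3}^k p_i$ by the fractional loads in slots $2, \dots, k-1$ rather than in their own slots.
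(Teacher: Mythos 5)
Your proposal is correct and follows essentially the same route as the paper's proof: the same three ingredients (the threshold bound $x_{e'_i,u}>f(q_i)$ for the locally oriented edges, Observation \ref{FirstSlotObserv} for the first slot, and the fact that each full slot $i$ carries fractional load at least $p_{i+1}$) are combined and charged against the Load constraint. The only difference is presentational — you rearrange the inequality first and bound each piece separately, whereas the paper runs the same estimates as a single chain of inequalities starting from the left-hand side.
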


\begin{proof}
	First, recall that for every $1\leq s \leq k-1$ $slot(u,s)$ has a capacity exactly $1$. Moreover, the slots are filled with edges in decreasing order of processing time. Therefore, we can deduce that for each $1\leq i\leq k-1$:
	\begin{align*}
	\sum_{e\in slot(u,i)} y_{e,s_i}p_e &\geq
	\sum_{e\in slot(u,i)} y_{e,s_i}p_{i+1}=p_{i+1}\sum_{e\in slot(u,i)} y_{e,s_i}=p_{i+1}.
	\end{align*}
Since at most one edge from each slot can be selected in the \GS, the load on $u$ from edges that are oriented to $u$ in the \GS is at most $\sum_{i=1}^{k}p_i$. From the above inequality, along with Observation \ref{FirstSlotObserv}, we can conclude that:
	\begin{align*}
	\sum_{i=1}^{t}q_i+\sum_{i=1}^{k}p_i &= \sum_{i=1}^{t}q_i+p_1+p_2+\sum_{i=3}^{k}p_i
	\leq \sum_{i=1}^{t}q_i+p_1+p_2+\sum_{i=2}^{k-1}\sum_{e\in slot(u,i)} y_{e,s_i}p_e \\
	&\leq \sum_{i=1}^{t}q_i+p_1+p_2+\sum_{i=1}^{k}\sum_{e\in slot(u,i)} y_{e,s_i}p_e - \sum_{e\in slot(u,1)} y_{e,s_1}p_e\\
	&\leq \sum_{i=1}^{t}q_i+p_1+p_2+\sum_{e\in \delta(u)} x_{e,u}p_e - \left( (1-f(p_1))p_1+f(p_1)p_2 \right)\\
	&\leq \sum_{i=1}^{t}q_i+f(p_1)p_1+(1-f(p_1))p_2+\left( 1-\sum_{i=1}^{t}f(q_i)q_i \right)\\
	&=1+\sum_{i=1}^{t}(1-f(q_i))q_i+f(p_1)p_1+(1-f(p_1))p_2.
	\end{align*}
The last inequality follows from the Load constraint on $u$, and the fact that the edges $e'_1,\dots, e'_t$ were removed from $E$ at the end of the \LS.
\end{proof}

Lastly, we observe that all of the {{\em big} edges, {\em i.e.}, edges whose weight is larger than $\nicefrac[]{1}{2}$, that are not oriented toward $u$ in the \LS are assigned to the first slot. 
This is summarized in the following observation.
\begin{observation} \label{SecondSlotObserv}
	Let $e$ be an edge in $slot(u,i)$ such that $i>1$. Then, $p_e\leq \nicefrac{1}{2}$.
\end{observation}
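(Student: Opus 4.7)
The plan is to exploit the Star constraint of \LPthree, which states that $\sum_{e\in\delta(u):\, p_e>0.5} x_{e,u}\leq 1$. Intuitively, this constraint caps the total fractional mass of \emph{big} edges (those with $p_e>\nicefrac{1}{2}$) incident to $u$ at exactly the capacity of one slot, which should force all such edges into $slot(u,1)$ and leave no room for a big edge in later slots.

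First, I would recall that the \GS (Algorithm \ref{STrounding}) processes edges incident to $u$ in non-increasing order of their processing time. In particular, every big edge is inserted into some slot of $u$ \emph{before} any non-big edge is considered. So if no big edge reaches $slot(u,2)$, then only non-big edges can appear in $slot(u,i)$ for $i>1$.

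Second, I would observe that after the \LS some edges have been removed from $E$, but the Star constraint, applied only to the surviving big edges of $\delta(u)$, still reads
\[
\sum_{e\in\delta(u):\, p_e>0.5,\ e\text{ not oriented in \LS}} x_{e,u}\ \leq\ 1,
\]
because deleting non-negative terms from the sum can only decrease it. Hence the cumulative $x$-mass of big edges that the \GS inserts into the slot sequence of $u$ is at most $1$.

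Finally, since $slot(u,1)$ has capacity exactly $1$ and is filled first with these big edges, all of them are placed entirely inside $slot(u,1)$: the spill-over rule sends material to $slot(u,2)$ only once the cumulative insertion strictly exceeds $1$, which, by the bound above, never occurs while big edges are being processed. Consequently every edge appearing in $slot(u,i)$ with $i>1$ must satisfy $p_e\leq \nicefrac{1}{2}$, proving the observation. I do not foresee a real obstacle here; the only subtle point is ruling out a big edge that straddles the boundary between $slot(u,1)$ and $slot(u,2)$, and this is immediate from the Star constraint combined with the unit slot capacity.
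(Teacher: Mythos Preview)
Your proposal is correct and follows essentially the same approach as the paper: both arguments rest on the Star constraint, the non-increasing insertion order, and the unit capacity of $slot(u,1)$. The paper phrases it as a contradiction (a big edge in $slot(u,i)$ with $i>1$ forces slots $1,\dots,i-1$ to be entirely filled with big-edge mass, hence $\sum_{e:\,p_e>1/2} x_{e,u}>1$), whereas you argue directly that the total big-edge mass is at most $1$ and therefore fits inside $slot(u,1)$; these are contrapositive versions of one another.
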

\begin{proof}
	Assume for the sake of contradiction that $p_e>\nicefrac{1}{2}$. 
Since the slots are filled in a non-increasing weight order, all edges in slots $1, 2, \dots, i-1$ are filled with fractions of edges whose processing time is greater than $\nicefrac{1}{2}$. Therefore, $\sum_{e\in \delta(u): \  p_e>\nicefrac{1}{2}}x_{e,u}>1$, which contradicts the Star constraint on $u$.
\end{proof}

\vspace{3pt}
\noindent {\bf Orientation Cost:} The following lemma upper bounds the orientation cost of the orientation produced by Algorithm \ref{GBrounding}.

\begin{lemma} \label{CostApprox}
	Given $f:[0,1]\rightarrow [\nicefrac{1}{2}, 1]$, let $c \triangleq (inf\{f(p) | p\in [0,1]\})^{-1}$. Then Algorithm \ref{GBrounding} with $f$ outputs an orientation with a cost of at most $c\cdot C(T)$.
\end{lemma}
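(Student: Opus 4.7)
The plan is to split the analysis into the two phases of Algorithm \ref{GBrounding} and bound each phase separately against the fractional LP cost $\sum_{e,u} x_{e,u} c_{e,u} = OPT_{\LPthree}$. Since \LPthree is a relaxation of the integral orientation problem with makespan at most $T=1$, its optimum satisfies $OPT_{\LPthree} \leq C(T)$, so it suffices to bound the total output cost by $c \cdot OPT_{\LPthree}$.

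For the \LS, I would use the threshold condition directly. If edge $e$ is oriented toward $u$ then $x_{e,u} > f(p_e) \geq \inf\{f(p): p\in[0,1]\} = 1/c$. Multiplying both sides by $c \cdot c_{e,u}$ gives $c_{e,u} \leq c \cdot x_{e,u} \cdot c_{e,u}$, so summing over all edges oriented in the \LS shows that the total cost paid in the \LS is at most $c \sum_{(e,u)\in \text{LS}} x_{e,u} c_{e,u}$.

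For the \GS, the key observation is that the restriction of $\mathbf{x}$ to the unoriented edges is still a feasible fractional solution to the Shmoys–Tardos LP \LPgap on the residual instance: the Edge constraint $\sum_{u\in e} x_{e,u}=1$ is untouched for edges not removed in the \LS, and the Load constraints only become slacker after edges are deleted. Standard analysis of Algorithm \ref{STrounding} (the fractional $\mathbf{x}$ itself induces a fractional perfect matching in the bipartite graph $G=(\mathcal{J},S,E)$ with the same cost as the LP objective restricted to these edges) then gives that the min-cost perfect matching has cost at most $\sum_{(e,u)\in\text{remaining}} x_{e,u} c_{e,u}$. Hence the \GS contributes at most this much to the orientation cost.

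Adding the two bounds and using $c\geq 1$ yields
\[
\text{cost}(\text{output}) \;\leq\; c\!\!\sum_{(e,u)\in \text{LS}}\!\! x_{e,u} c_{e,u} \;+\; \sum_{(e,u)\in \text{remaining}}\!\! x_{e,u} c_{e,u} \;\leq\; c \sum_{e\in E}\sum_{u\in e} x_{e,u} c_{e,u} \;=\; c\cdot OPT_{\LPthree} \;\leq\; c\cdot C(T),
\]
as desired. I expect no real obstacle here: the only point worth being careful about is verifying that removing edges in the \LS preserves feasibility of the Shmoys--Tardos LP on the residual instance so that its cost guarantee transfers, and that $x_{e,u}$ on a removed $(e,u)$ pair is genuinely bounded away from $0$ by $1/c$, which is exactly what the threshold rule enforces.
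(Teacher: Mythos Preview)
Your proof is correct and follows essentially the same approach as the paper: both split the cost into the \LS and \GS contributions, bound the \LS cost via the threshold inequality $x_{e,u}>f(p_e)\ge 1/c$, invoke the Shmoys--Tardos cost guarantee for the \GS, and then combine using $c\ge 1$ and $OPT_{\LPthree}\le C(T)$. Your version is slightly more explicit about why the residual $\mathbf{x}$ is feasible for \LPgap, but otherwise the arguments coincide.
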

\begin{proof}
	Let $S$ be the set of edges that were oriented to $u$ in the \LS of Algorithm \ref{GBrounding}. Let $\eta$ be the orientation the algorithm outputs, {\em i.e.}, $\eta(e)$ equals the vertex that $e$ is oriented to. Since Algorithm \ref{STrounding} does not lose in the assignment cost:
	\[
	\sum_{e\in E\setminus S} c_{e,\eta(e)} \leq \sum_{e=(u,v)\in E\setminus S} (c_{e,u} x_{e,u} + c_{e,v} x_{e,v}).
	\]
	
	For each $e\in S$: $x_{e,\eta(e)}>f(p_e)\geq \frac{1}{c}$. So we deduce: $c\cdot x_{e,\eta(e)}\geq 1$. Therefore, $c_{e,\eta(e)} \leq c\cdot x_{e,\eta(e)}c_{e,\eta(e)}$.
	Then we can conclude:
	
	\[
	\sum_{e\in S} c_{e,\eta(e)} \leq c\sum_{e=(u,v)\in S} (c_{e,u} x_{e,u} + c_{e,v} x_{e,v}).
	\]
	Since $1\leq c \leq 2$, we can deduce:
	\[
	\sum_{e\in E} c_{e,\eta(e)} \leq c\sum_{e=(u,v)\in E} (c_{e,u} x_{e,u} + c_{e,v} x_{e,v})\leq c\cdot OPT_{LP},
	\]
	where $OPT_{LP}$ is the optimal value of feasible solution to the relaxation. This implies that the orientation outputted by Algorithm \ref{GBrounding} with a rounding function $f$, has a cost of at most ${c\cdot C(T)}$.
\end{proof}

\subsection{Graph Balancing -- Upper Bound on Tradeoff Between Makespan and Orientation Cost}

Let us now focus on applying the framework, with an appropriate threshold function $f$, to \sGB.
First, we present a theorem that achieves part of the tradeoff claimed in Theorem \ref{Theorem:firstUpperBound}, and only in the next subsection we show how to extend this tradeoff to fully achieve Theorem \ref{Theorem:firstUpperBound}.

\begin{theorem} \label{GBBasicTheorem}
There exists a threshold function $f$ such that Algorithm \ref{GBrounding} finds an orientation that is a $(1.75+\gamma, \nicefrac{1}{(2\gamma+0.5)})$-approximation, for every $\nicefrac{1}{12}\leq \gamma \leq \nicefrac{1}{4}$.
\end{theorem}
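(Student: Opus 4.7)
The plan is to apply Algorithm \ref{GBrounding} with the step threshold function
\[
f(p)=\begin{cases}1 & p\leq \nicefrac{1}{2},\\ 2\gamma+\nicefrac{1}{2} & p>\nicefrac{1}{2}.\end{cases}
\]
For $\gamma\in[\nicefrac{1}{12},\nicefrac{1}{4}]$ we have $2\gamma+\nicefrac{1}{2}\in[\nicefrac{2}{3},1]$, so $f:[0,1]\to[\nicefrac{1}{2},1]$ as required. Lemma \ref{CostApprox} then immediately yields orientation cost at most $(\inf f)^{-1}\cdot C(T)=\nicefrac{1}{(2\gamma+0.5)}\cdot C(T)$, so the entire remaining task is to bound the makespan by $\nicefrac{7}{4}+\gamma$. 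Observe that $f\equiv 1$ on small edges, so every edge $e'_i$ oriented in the \LS is big (i.e.\ $q_i>\nicefrac{1}{2}$) and satisfies $f(q_i)=2\gamma+\nicefrac{1}{2}$.

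Fix a vertex $u$; I would analyze its load via Lemma \ref{lemmaST}, splitting on whether the edge $e_1$ on top of $slot(u,1)$ is big or small, and crucially using the Star constraint of \LPthree at $u$ in both cases. If $p_1>\nicefrac{1}{2}$, Observation \ref{GraphObserv} gives $x_{e_1,u}\geq 1-f(p_1)=\nicefrac{1}{2}-2\gamma$, while each LS-oriented edge $e'_i$ is big with $x_{e'_i,u}>2\gamma+\nicefrac{1}{2}$. Summing these values over the big edges $e_1,e'_1,\dots,e'_t$ incident to $u$ gives
\[
(\nicefrac{1}{2}-2\gamma)+t(2\gamma+\nicefrac{1}{2})\;<\;x_{e_1,u}+\sum_{i=1}^{t}x_{e'_i,u}\;\leq\;1,
\]
by the Star constraint; for $t\geq 1$ the left side already exceeds $1$, a contradiction, so $t=0$. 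Combined with $p_2\leq \nicefrac{1}{2}$ (Observation \ref{SecondSlotObserv}) and $p_1\leq 1$, Lemma \ref{lemmaST} then bounds load$(u)$ by $1+(2\gamma+\nicefrac{1}{2})+(\nicefrac{1}{2}-2\gamma)\cdot\nicefrac{1}{2}=\nicefrac{7}{4}+\gamma$.

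If instead $p_1\leq\nicefrac{1}{2}$, then since slots are filled in non-increasing weight order every edge placed in any slot of $u$ is small; in particular $p_2\leq\nicefrac{1}{2}$ trivially. The Star constraint yields $t(2\gamma+\nicefrac{1}{2})<\sum_{i}x_{e'_i,u}\leq 1$, so $t<(2\gamma+\nicefrac{1}{2})^{-1}\leq\nicefrac{3}{2}$ using $\gamma\geq \nicefrac{1}{12}$, giving $t\leq 1$. Plugging $f(p_1)=1$ into Lemma \ref{lemmaST} then bounds load$(u)$ by $1+t(\nicefrac{1}{2}-2\gamma)+p_1\leq \nicefrac{3}{2}+(\nicefrac{1}{2}-2\gamma)=2-2\gamma$, and $2-2\gamma\leq \nicefrac{7}{4}+\gamma$ iff $\gamma\geq\nicefrac{1}{12}$---this is where the lower end of the range originates.

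The main obstacle is the first case: ruling out any LS-oriented edge at a vertex whose top-slot edge is big, using only the Star constraint together with the strict LS inequality $x_{e'_i,u}>f(q_i)$ and the matching lower bound $x_{e_1,u}\geq 1-f(p_1)$ from Observation \ref{GraphObserv}. This ``no free lunch'' deduction is precisely what lets the Shmoys-Tardos makespan bound of $2$ be shaved to $\nicefrac{7}{4}+\gamma$, and it is also what selects the particular level $2\gamma+\nicefrac{1}{2}$ of the threshold function; any larger value on big edges would allow $t\geq 1$ in Case 1 and break the argument.
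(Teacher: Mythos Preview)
Your proposal is correct and takes essentially the same approach as the paper: you use the identical step threshold $f_\alpha$ with $\alpha=2\gamma+\nicefrac{1}{2}$, invoke Lemma~\ref{CostApprox} for the cost, and bound the makespan via Lemma~\ref{lemmaST} together with the Star constraint and Observations~\ref{GraphObserv} and~\ref{SecondSlotObserv}. The only difference is the order of the case split---you split first on whether $p_1>\nicefrac{1}{2}$ and then deduce $t=0$ (resp.\ $t\leq 1$), whereas the paper first shows $t\leq 1$ globally and then splits on $t=0$ versus $t=1$, deriving $p_1\leq \nicefrac{1}{2}$ in the latter case; the resulting bounds $1.75+\gamma$ and $2-2\gamma$ are identical.
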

The function $f_\alpha$ we use in the proof of Theorem \ref{GBBasicTheorem} is the following:
\begin{align} \label{ThresholdFunc}
f_\alpha(p_e) =
\begin{cases}
1 & \text{if } p_e\leq \nicefrac{1}{2} \\
\alpha & \text{if } p_e> \nicefrac{1}{2} \\
\end{cases}
\end{align}
where $\nicefrac{2}{3}\leq \alpha \leq 1$. The following lemma upper bounds the makespan of Algorithm \ref{GBrounding} with the above $f_\alpha$.
\begin{lemma} \label{MakespanLemma}
	The makespan of the orientation produced by Algorithm \ref{GBrounding} with $f_\alpha$ is at most:
	$1.5+0.5\alpha$, where $\nicefrac{2}{3}\leq \alpha\leq 1$.
\end{lemma}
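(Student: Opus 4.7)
The plan is to fix a vertex $u$ and combine Lemma \ref{lemmaST} (applied with $f=f_\alpha$) with the Star constraint, Observation \ref{GraphObserv}, and Observation \ref{SecondSlotObserv}, splitting into two cases according to whether the top edge $e_1$ of $slot(u,1)$ is big or small. The central observation to exploit is that $f_\alpha(p_e)=1$ on small edges ($p_e\leq 1/2$), so the Local Step can only orient big edges; hence every $e'_i$ has $q_i>1/2$ and satisfies $x_{e'_i,u}>\alpha$ by construction.

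First I would use the Star constraint on $u$ together with $x_{e'_i,u}>\alpha$ to conclude $t\alpha<1$, and since $\alpha\geq 2/3$ this forces $t\leq 1$. In the case $p_1>1/2$ I would strengthen this: the edge $e_1$ is big and survived the Local Step, so Observation \ref{GraphObserv} gives $x_{e_1,u}\geq 1-\alpha$, and adding this to the LS mass in the Star constraint yields $t\alpha+(1-\alpha)<1$, i.e.\ $t=0$. Plugging $t=0$ into Lemma \ref{lemmaST} with $f_\alpha(p_1)=\alpha$ leaves $\text{load}(u)\leq 1+\alpha p_1+(1-\alpha)p_2$; using $p_1\leq 1$ and $p_2\leq 1/2$ (Observation \ref{SecondSlotObserv}) this evaluates to at most $1+\alpha+(1-\alpha)/2=3/2+\alpha/2$, as desired.

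In the complementary case $p_1\leq 1/2$ we have $f_\alpha(p_1)=1$, so the slot-top contribution in Lemma \ref{lemmaST} collapses to $p_1\leq 1/2$. Using $t\leq 1$, $q_1\leq 1$, and $1-f_\alpha(q_1)=1-\alpha$, Lemma \ref{lemmaST} gives $\text{load}(u)\leq 1+(1-\alpha)+1/2=5/2-\alpha$. A one-line check shows $5/2-\alpha\leq 3/2+\alpha/2$ exactly when $\alpha\geq 2/3$, which is our standing assumption, so again $\text{load}(u)\leq 3/2+\alpha/2$. Taking the worse of the two cases over all $u$ yields the claimed makespan bound.

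The main obstacle is the scenario $p_1>1/2$ with $t\geq 1$: naively bounding Lemma \ref{lemmaST} in that scenario loses too much in the $(1-\alpha)q_1$ term. The decisive step is the joint use of the Star constraint and Observation \ref{GraphObserv} to rule out this scenario entirely, by noting that the unavoidable fractional mass $1-\alpha$ on the surviving big edge $e_1$ already saturates enough of the Star-budget to exclude any LS-oriented big edge. Once that case is eliminated, the remainder is routine algebra.
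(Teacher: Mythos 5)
Your proof is correct and follows essentially the same route as the paper's: both rest on Lemma \ref{lemmaST}, the Star constraint combined with Observation \ref{GraphObserv} to exclude the simultaneous presence of a locally-oriented edge and a big top edge $e_1$, and Observation \ref{SecondSlotObserv} for $p_2\leq \nicefrac{1}{2}$, arriving at the same two bounds $1.5+0.5\alpha$ and $2.5-\alpha$. The only difference is cosmetic — you split cases on whether $p_1>\nicefrac{1}{2}$, while the paper splits on whether an edge was oriented in the Local Step and then deduces the bound on $p_1$ inside the second case.
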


\begin{proof}
Consider the number of edges that were oriented toward $u$ in the \LS. First, we note that from the Star constraint on $u$, at most one edge can be oriented toward $u$ in the \LS.
If this is not the case then let $e'_1$ and $e'_2$ be edges oriented to $u$ in the \LS. Then, $p_{e_1},p_{e_2}>\nicefrac{1}{2}$. However, $x_{e_1,u}+x_{e_2,u}>\alpha+\alpha\geq\nicefrac{2}{3}+\nicefrac{2}{3}>1$, which contradicts the Star constraint on $u$.
Hence, there are only two cases to consider.	
	
	
\noindent	\textbf{Case 1:} Assume no edge is oriented toward $u$ in the \LS. Therefore, using Lemma \ref{lemmaST} and Observation \ref{SecondSlotObserv} the load on $u$ is at most:
	\begin{align*}
	\sum_{i=1}^{k}p_i
	& \leq 1+f_\alpha(p_1)p_1+(1-f_\alpha(p_1))p_2
	\leq 1.5+f_\alpha(p_1)(p_1-0.5)\\
	&\leq 1.5+\alpha\cdot (1-0.5)
	= 1.5 + 0.5\alpha,
	\end{align*}
where the last inequality follows from the fact that the expression: $f_{\alpha}(p_1)(p_1-0.5)$ is maximized when $p_1=1$ (and thus $f_\alpha(p_1)=\alpha$).

\noindent	\textbf{Case 2:} Assume there is exactly one edge that was oriented toward $u$ in the \LS. Recall we denote this edge as $e'_1$ and its processing time by $q_1$. 
Since $q_1> \nicefrac{1}{2}$ and $x_{e'_1,u}>\alpha$, then it must be the case that $p_1\leq \nicefrac{1}{2}$ (otherwise Observation \ref{GraphObserv} implies that $x_{e'_1,u}+x_{e_1,u}>\alpha+1-\alpha=1$, which contradicts the Star constraint for $u$).	
	Therefore, from Lemma \ref{lemmaST} the load on $u$ in the output of Algorithm \ref{GBrounding} is at most:
	\begin{align*}
	q_1+\sum_{i=1}^{k}p_i
	&\leq 1+(1-f_\alpha(q_1))q_1+f_\alpha(p_1)p_1+(1-f_\alpha(p_1))p_2
	\leq 1+(1-\alpha)q_1+p_1\\
	&\leq 1+(1-\alpha)+0.5
	=2.5-\alpha
	\leq 1.5+0.5\alpha.
	\end{align*}
	The second inequality follows from the fact that $p_2\leq p_1$ and $f_{\alpha}(q_1)=\alpha$ (since $q_1>\nicefrac[]{1}{2}$), whereas the third inequality from the fact that $p_1\leq \nicefrac[]{1}{2}$. In addition, the last inequality follows from the fact that $\nicefrac{2}{3}\leq \alpha\leq 1$.
\end{proof}
Now, we are ready to conclude the proof of Theorem \ref{GBBasicTheorem}:
\begin{proof}[Proof of Theorem \ref{GBBasicTheorem}]
Applying Lemma \ref{MakespanLemma}, Lemma \ref{CostApprox} and choosing $\gamma = 0.5\alpha-0.25$ finishes the proof.
\end{proof}

\vspace{3pt}
\noindent {\bf Tightness of Analysis:} We now show that the analysis of Algorithm \ref{GBrounding} with a threshold function $f_\alpha$ is tight. Formally, we prove the following lemma.

\begin{lemma} \label{lemmaTightness}
For every $\nicefrac[]{1}{2}\leq \alpha<1$ there exists an instance such that the output of Algorithm \ref{GBrounding} with $f_\alpha$ has makespan at least $max\left\{1.5+0.5\alpha, 2.5-\alpha \right\}$ and orientation cost at least $\nicefrac{1}{\alpha}\cdot OPT_{LP}$.
\end{lemma}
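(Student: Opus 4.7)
The plan is to exhibit, for every $\alpha\in[\tfrac{1}{2},1)$, families of instances parametrized by a slack $\eta>0$ on which Algorithm~\ref{GBrounding} with $f_\alpha$ produces an orientation whose makespan approaches $\max\{1.5+0.5\alpha,\,2.5-\alpha\}$ and whose orientation cost approaches $\tfrac{1}{\alpha}\cdot OPT_{LP}$ as $\eta\to 0$; the three constructions can then be combined in a single disconnected graph so that both bounds are witnessed simultaneously.

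For the orientation-cost bound I would take a single edge $e=(u,v)$ of weight $1$ together with auxiliary weight-$1$ edges attached to $v$ that saturate $v$'s Star-constraint quota down to $1-\alpha-\eta$, thereby forcing any feasible LP solution to have $x_{e,u}=\alpha+\eta$. Choosing $c_{e,u}=1$, $c_{e,v}=0$, and zero cost on the auxiliary edges yields $OPT_{LP}=\alpha+\eta$. Since $x_{e,u}>f_\alpha(1)=\alpha$, the local step orients $e$ to $u$ at cost $1$, so the cost ratio equals $1/(\alpha+\eta)$, which tends to $1/\alpha$.

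For the $2.5-\alpha$ branch of the makespan bound (dominant when $\alpha\le 2/3$), I would build a gadget at a central vertex $u$ containing one big edge $e'=(u,v_0)$ of weight $1$ with $x_{e',u}=\alpha+\eta$ (locally oriented to $u$, contributing load~$1$), together with parallel medium edges from $u$ of weight $w=\tfrac{1}{2}-\delta$ whose $x$-values saturate the Load constraint at $u$ and just exceed the integer threshold that produces the intended number of slots in Algorithm~\ref{STrounding}. Assigning prohibitively high orientation costs toward the sinks forces the min-cost matching to place every slot's top edge back at $u$; the resulting global-step load at $u$ is $k_u\cdot w\to\tfrac{3}{2}-\alpha$ as $\delta,\eta\to 0$, giving total load $\to\tfrac{5}{2}-\alpha$. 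The $1.5+0.5\alpha$ branch (dominant when $\alpha\ge 2/3$) uses an analogous gadget in which no local orientation occurs at $u$: one weight-$1$ edge $e_1$ carries $x_{e_1,u}=1-\alpha$ (the lower bound of Observation~\ref{GraphObserv}), while parallel weight-$(\tfrac{1}{2}-\delta)$ edges saturate Load and create many slots topped by $w$-edges, pushing the load toward $1.5+0.5\alpha$.

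The main obstacle is the rigid interaction between the slot-count $k_u=\lceil\sum_{e\in\delta(u)}x_{e,u}\rceil$ and the Load constraint $\sum x_{e,u}p_e\le 1$: driving $w$ toward $\tfrac{1}{2}$ shrinks the admissible $\sum x_{e,u}$, so the stated bounds are attained only in the limit $\eta,\delta\to 0$. Additional care is needed to pin down the min-cost matching's integer decisions, so that none of the slot tops can be cheaply rerouted away from $u$; this is handled by making each sink $v_i$ ``lonely'' (attached only to its incoming medium edge) and by setting the orientation costs on the $v_i$-side to be large enough that any alternative matching is strictly costlier than the intended one concentrating load on $u$.
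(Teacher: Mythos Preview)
Your makespan constructions contain a concrete miscalculation that cannot be repaired within the scheme you describe. In the $2.5-\alpha$ gadget you claim $k_u\cdot w\to\tfrac{3}{2}-\alpha$, but after the big edge is locally oriented the Load constraint leaves at most $1-\alpha-\eta$ for the medium edges, so
\[
\sum_{\text{medium}}x_{e,u}\le \frac{1-\alpha-\eta}{\tfrac12-\delta}\;\longrightarrow\;2(1-\alpha)\le 1,
\]
giving $k_u=1$ and global-step load only $\approx\tfrac12$; the total load is then $\approx\tfrac32$, not $\tfrac52-\alpha$. The same wall appears in the $1.5+0.5\alpha$ gadget: with $x_{e_1,u}=1-\alpha$ on the weight-$1$ edge and the medium edges saturating Load, one has $\sum_{e}x_{e,u}\to 1+\alpha$, hence $k_u=2$ and load at most $p_1+p_2\approx 1.5$, short of the target by exactly $0.5\alpha$. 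Parallel medium edges simply cannot manufacture enough slots under the Load constraint to reach either bound.

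The ingredient you are missing, and which the paper uses, is \emph{self-loops} (``load values'' $q_u$) at the central vertex. A self-loop has a single endpoint, so it is always oriented to $u$ and contributes $q_u$ both to the Load constraint and to the final integral load; this injects a controllable amount of base load at $u$ without requiring favorable LP mass on real edges or extra slot tops. In the paper's instance~(a) the choice $q_u=0.5\alpha-0.5\epsilon$ supplies precisely the missing $0.5\alpha$; in instance~(b) the choice $q_u=1-\alpha-\epsilon$ plays the analogous role. The self-loops also force the LP to have a \emph{unique} feasible solution, removing any ambiguity about the $\mathbf{x}$ the algorithm receives. Your cost gadget has the same defect: merely attaching auxiliary weight-$1$ edges to $v$ does not by itself force the LP to orient them toward $v$, so $x_{e,u}=\alpha+\eta$ is not actually pinned down without further structure (the paper again uses load values at both endpoints to make the solution unique).
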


\begin{proof}
	We introduce two instances, both of them feasible to \LPthree. The output of Algorithm \ref{GBrounding} on the first instance has makespan of $1.5+0.5\alpha$ and orientation cost of $\nicefrac{1}{\alpha}\cdot OPT_{LP}$.
	The output of Algorithm \ref{GBrounding} on the second instance has the same orientation cost, however the makespan is $2.5-\alpha$.
	This proves the desired result.
	
	Let $\epsilon>0$. The first instance is shown in Figure \ref{TightnessFigureBoth} on the left and consists of five vertices. The processing time of an edge is written above it. Furthermore, there is a load value, {\em i.e.}, a self loop, on each vertex with a specified weight.
	These load values are denoted by $q$, {\em e.g.}, $q_u$ denotes the load value of vertex $u$.
	We note that in order to ensure the Star constraints of $LP$ are feasible, we split each self loop into two self loops (each with half the original load value).
	In addition to the information in Figure \ref{TightnessFigureBoth}, there are orientation costs to each edge: $c_{(u,v_1),v_1}=c_{(u,v_2),v_2}=\epsilon$ and $c_{(u'v'),v'}=1$. The rest of the orientation costs are $0$.
	
	\begin{center}	
		\setlength{\abovecaptionskip}{2pt plus 2pt minus 2pt}
		\captionsetup[table]{font=small,skip=0pt}     
		\begin{figure}
			\begin{tabular}{@{}c@{}}
				\begin{tikzpicture}
				[scale=.5,auto=left]
				\node[circle,fill=blue!20] (u) at (4,7)  {$u$};
				\node[circle,fill=blue!20] (v1) at (8,9)  {$v_1$};
				\node[circle,fill=blue!20] (v2) at (8,5)  {$v_2$};
				\node[circle,fill=blue!20] (u') at (12,9)  {$u'$};
				\node[circle,fill=blue!20] (v') at (12,5) {$v'$};
				
				\node[circle] (qu) at (3,6) {$q_u=0.5\alpha-0.5\epsilon$};
				\node[circle] (qv1) at (7,10) {$q_{v_1}=1-\alpha$};
				\node[circle] (qv2) at (6,4) {$q_{v_2}=0.5+0.5\alpha+0.5\epsilon$};
				\node[circle] (qu') at (12,10) {$q_{u'}=1-\alpha-\epsilon$};
				\node[circle] (qv') at (12,4) {$q_{v'}=\alpha+\epsilon$};
				
				\draw[draw=black, line width=0.5] (u) edge node{1} (v1);
				\draw[draw=black, line width=0.5] (u) edge node{$\nicefrac{1}{2}$} (v2);
				\draw[draw=black, line width=0.5] (u') edge node{1} (v');
				
				\end{tikzpicture}
				\\[\abovecaptionskip]
				\small {(a) Instance with gap of} {$1.5+0.5\alpha$} 
				
			\end{tabular}
			\begin{tabular}{@{}c@{}}
				\begin{tikzpicture} 
				[scale=.6,auto=left]
				\node[circle,fill=blue!20] (u) at (4,7)  {$u$};
				\node[circle,fill=blue!20] (v1) at (8,9)  {$v_1$};
				\node[circle,fill=blue!20] (v2) at (8,5)  {$v_2$};
				
				\node[circle] (qu) at (3,6) {$q_u=1-\alpha-\epsilon$};
				\node[circle] (qv1) at (8,10) {$q_{v_1}=\alpha+\nicefrac{\epsilon}{2}$};
				\node[circle] (qv2) at (8,4) {$q_{v_2}=0.5+0.5\epsilon$};
				
				\draw[draw=black, line width=0.5] (u) edge node{1} (v1);
				\draw[draw=black, line width=0.5] (u) edge node{$\nicefrac{1}{2}$} (v2);	
				\end{tikzpicture}
				
				\\[\abovecaptionskip]
				\small {(b) Instance with gap of } {$2.5-\alpha$} 
				
			\end{tabular}
			\caption{Gap instances for Lemma \ref{lemmaTightness}.}
			\label{TightnessFigureBoth}
		\end{figure}
		
	\end{center}
	
	The only feasible solution to \LPthree is $x_{(u,v_1),u}=1-\alpha$, $x_{(u,v_2),u}=\alpha+\epsilon$ and $x_{(u',v'),u'}=\alpha+\epsilon$. The reason that this is the only feasible solution is that the fractional load on each vertex is exactly $1$ and there are no cycles in the graph. Moreover, the fractional cost of the orientation is:
	\begin{align*}
	OPT_{LP}&=\alpha+\epsilon+\epsilon(\alpha+1-\alpha-\epsilon)\\
	&\leq \alpha+2\epsilon .
	\end{align*}
	
	Algorithm \ref{GBrounding} with threshold function $f_\alpha$ will orient $(u',v')$ toward $u'$ in the \LS. Moreover, since Algorithm \ref{STrounding} finds minimum cost matching the edges $(u,v_1)$ and $(u,v_2)$ are oriented toward $u$ in the \GS. This is due to the fact that the cost of the fractional orientation is $\epsilon(\alpha+1-\alpha-\epsilon)=\epsilon(1-\epsilon)$, so orienting an edge toward $v_1$ or $v_2$ will incur a cost of at least $\epsilon$. In conclusion, the output of Algorithm \ref{GBrounding} with $f_\alpha$ as a threshold function, has makespan of $1.5+0.5\alpha-0.5\epsilon$. Additionally, the orientation cost is $1=\nicefrac{1}{(\alpha+\epsilon)}OPT_{LP}$.

	The second instance is shown in Figure \ref{TightnessFigureBoth} on the right and consists of three vertices. Similar to the previous instance, the processing time of an edge is written above it, and the load value of a vertex is written next to it. In addition, the orientation costs are $c_{(u,v_1),v_1}=1$ and $c_{(u,v_2),v_2}=\epsilon$ (all other orientation costs are $0$).

	The only feasible solution to \LPthree is $x_{(u,v),u}=\alpha+0.5\epsilon$ and $x_{(u,v_2),u}=\epsilon$. Note that in this case the load on each vertex is exactly $1$. Since there are no cycles in this graph this is the only feasible solution to $LP$. The fractional orientation cost is:
	\[
	OPT_{LP}=\alpha+0.5\epsilon+\epsilon(0.5+0.5\epsilon).
	\]
	Algorithm \ref{GBrounding} with threshold function $f_\alpha$, will orient the edge $(u,v_1)$ toward $u$ in the \LS. In addition, since Algorithm \ref{STrounding} finds a minimum cost matching, $(u,v_2)$ is oriented toward $u$ in the \GS. To conclude, the makespan is $2.5\alpha-\epsilon$ and the orientation cost is $1=\nicefrac{1}{(\alpha+\epsilon(1+0.5\epsilon))}OPT_{LP}$.
	This concludes the proof.
\end{proof}

Lemma \ref{lemmaTightness} shows the analysis of Algorithm \ref{GBrounding} with a threshold function $f_\alpha$ is tight. Consequently, in order to extend the bicriteria tradeoff of Theorem \ref{GBBasicTheorem}, and obtain Theorem \ref{Theorem:firstUpperBound}, we require a different threshold function and a stronger relaxation.

\subsection{Graph Balancing -- Extending the Tradeoff}\label{Section:ImprovingUpperBound}
It is important to note that Lemma \ref{lemmaTightness} implies that using Algorithm \ref{GBrounding} with \LPthree and the threshold function $f_\alpha$ cannot achieve an approximation better than $\nicefrac[]{11}{6}$ with respect to the makespan.
To this end we strengthen \LPthree using the following constraint (which we denote by Set constraints):

\begin{align*}
\displaystyle\sum_{e\in S} x_{e,u} \leq |S|-1 \quad\quad \forall u\in V, \forall S\subseteq \delta(u):S\notin \mathcal{F}(u)\text{ and } |S|\leq k \quad (Set)
\end{align*}
We call the new relaxation \LPstong.\footnote{Similarly to \LPthree, for some of the extensions of \sGB we add that $x_{e,u}=0$ if $p_{e,u}>1$ (for every $e\in E$ and $u\in e$).} 
Intuitively, the Set constraints enforce that given an infeasible set of edges $S$ touching $u$ not all edges of $S$ can be oriented toward $u$.
In fact, for our specific choice of a threshold function $f$ we use $k=3$. Thus, no separation oracle is required when solving the relaxation. The exact result is formulated in the following theorem:


\begin{theorem} \label{BetterGBTheorem}
There exists a rounding function $f$ such that Algorithm \ref{GBrounding} finds an orientation that is a $(1.75+\gamma, \nicefrac{1}{(2\gamma+0.5)})$-approximation, for every $\nicefrac{1}{12}-\nicefrac{\epsilon}{2}\leq \gamma \leq \nicefrac{1}{12}$, where $\epsilon=\nicefrac{\sqrt{33}}{2}-\nicefrac{17}{6}$.
\end{theorem}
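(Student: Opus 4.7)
The plan is to combine the strengthened relaxation \LPstong (with Set constraints of size $k\leq 3$) with a more refined non-increasing threshold function $f$ whose infimum is $\alpha=2\gamma+\tfrac{1}{2}\in[\tfrac{1}{2},\tfrac{2}{3}]$. The cost bound $\tfrac{1}{\alpha}\,C(T)=\tfrac{1}{2\gamma+1/2}C(T)$ is then immediate from Lemma~\ref{CostApprox}, so all of the remaining work is in bounding the makespan.

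Concretely, I would use a piecewise-constant $f$ with one additional intermediate level compared to $f_\alpha$, e.g.
\[
f(p_e)=\begin{cases} 1 & p_e\leq a,\\ \beta & a<p_e\leq 1/2,\\ \alpha & p_e>1/2,\end{cases}
\]
with parameters $1/3<a\leq 1/2$ and $1/2\leq\alpha\leq\beta\leq 1$. Under this $f$, Observation~\ref{GraphObserv} yields the new non-trivial lower bound $x_{e,u}\geq 1-\beta$ on any non-oriented edge $e$ of weight in $(a,1/2]$, which under $f_\alpha$ was vacuous.

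Next I would revisit the case analysis of Lemma~\ref{MakespanLemma}. Case~1 ($t=0$, no edge oriented at $u$) still contributes at most $1.5+\alpha/2$, coming from a heavy edge on top of slot~$1$. Case~2 ($t=1$, exactly one heavy edge $e'_1$ oriented with $q_1>1/2$) was the previous bottleneck at $2.5-\alpha=\tfrac{11}{6}$ when $\alpha=\tfrac{2}{3}$, and this is where the improvement must come from: using the $3$-element Set constraint on $\{e'_1,e_1,e_2\}$, together with $x_{e'_1,u}>\alpha$ and the Observation~\ref{GraphObserv} lower bounds on $x_{e_1,u},x_{e_2,u}$, one forces $q_1+p_1+p_2\leq 1$ in the previously saturating subcase in which both $p_1$ and $p_2$ are medium, strictly improving the crude $p_1,p_2\leq 1/2$. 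The remaining configurations (for instance when $e'_1$ itself is medium, or when a medium edge sits on top of slot~$1$ alongside no heavy Local-Step orientation) are handled by analogous $2$- and $3$-element Set constraints on the relevant subsets of $\delta(u)$.

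Finally I would choose $(a,\alpha,\beta)$ so that the maxima across all subcases agree with the target $1.75+\gamma$. Equating the Case~1 bound $1.5+\alpha/2$ with the refined Case~2 bound reduces to a quadratic in $\alpha$ whose relevant root in $[\tfrac{1}{2},\tfrac{2}{3}]$ is $\alpha=\tfrac{2}{3}-\epsilon$ with $\epsilon=\tfrac{\sqrt{33}}{2}-\tfrac{17}{6}$, and the substitution $\gamma=(\alpha-\tfrac{1}{2})/2$ recovers the stated interval $\tfrac{1}{12}-\tfrac{\epsilon}{2}\leq\gamma\leq\tfrac{1}{12}$. The main obstacle is the second step: because $f$ is constrained to be non-increasing, the $2$-element Set constraint on $\{e'_1,e_1\}$ alone cannot directly rule out $q_1+p_1>1$ (that would require $\alpha>\beta$, violating monotonicity), so one genuinely needs the $3$-element constraint. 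This also explains why $k=3$ is exactly the right strengthening of \LPthree, and why no separation oracle is required.
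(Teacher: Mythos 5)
Your setup matches the paper's: the same three-level non-increasing threshold function (levels $1$, $\beta=\nicefrac{2}{3}+\nicefrac{\epsilon}{2}$, $\alpha=\nicefrac{2}{3}-\epsilon$ with breakpoints at $\nicefrac{1}{3}$ and $\nicefrac{1}{2}$), the same cost bound via Lemma~\ref{CostApprox}, and the correct identification that $k=3$ Set constraints and the ``heavy edge oriented in the Local Step'' configuration are where the action is. However, the mechanism you propose for that configuration does not work. To activate a $3$-element Set constraint on $S=\{e'_1,e_1,e_2\}$ you need $x_{e'_1,u}+x_{e_1,u}+x_{e_2,u}>2$, but the only lower bounds available are $x_{e'_1,u}>\alpha\approx\nicefrac{2}{3}$ and, from Observation~\ref{GraphObserv}, $x_{e_i,u}\geq 1-f(p_i)\approx\nicefrac{1}{3}$; these sum to roughly $\nicefrac{4}{3}$, far short of $2$. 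So the inequality $q_1+p_1+p_2\leq 1$ you rely on simply does not follow from \LPstong, and your Case~2 bound is unproved. (In the paper, $3$-element Set constraints are activated only for sets consisting entirely of locally-oriented edges, as in Observation~\ref{NumOriented}, where every $x$ exceeds its threshold of at least $\nicefrac{2}{3}-\epsilon$.)

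The paper closes the gap by splitting on $p_1$ rather than on $t$, and the pairing in its $2$-element Set constraints is the \emph{reverse} of the one you dismissed. When $p_1>\nicefrac{1}{2}$, the Star constraint forces every locally-oriented edge $e'_i$ to be medium, and then $\{e'_i,e_1\}$ \emph{is} activatable: $x_{e'_i,u}>f(q_i)=\beta$ and $x_{e_1,u}\geq 1-f(p_1)=1-\alpha$, so the sum exceeds $\beta+1-\alpha>1$ precisely because the medium level sits strictly \emph{above} the heavy level ($\beta-\alpha=\nicefrac{3\epsilon}{2}>0$); this yields $p_1+q_i\leq 1$ and the bound $\nicefrac{11}{6}-\nicefrac{\epsilon}{2}$. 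Your problematic configuration (heavy $e'_1$ with a medium $e_1$ on top of the first slot) is the paper's Case~3, and there no mixed Set constraint is used at all: instead $p_2$ is bounded through the Load constraint together with Observation~\ref{FirstSlotObserv}, combined with $q_1+q_2\leq 1$ from Observation~\ref{SumRounded}, and the resulting linear expression is maximized at $p_1=\nicefrac{1}{2}$, $q_1+q_2=1$, which is where the defining equation for $\epsilon=\nicefrac{\sqrt{33}}{2}-\nicefrac{17}{6}$ actually comes from. You would need to replace your Case~2 argument with one of these two devices (or something equivalent) for the proof to go through.
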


Note that this theorem extends the tradeoff achieved in Theorem \ref{GBBasicTheorem}, and together both theorems achieve the tradeoff of Theorem \ref{Theorem:firstUpperBound}. The threshold function $f_\epsilon$ we use in the proof of Theorem \ref{BetterGBTheorem} is defined as follows:

\begin{align} \label{BetterThreshFunc}
f_\epsilon(p_e) =
\begin{cases}
\nicefrac{2}{3}-\epsilon & \text{if } p_e>\nicefrac{1}{2} \\
\nicefrac{2}{3}+\nicefrac{\epsilon}{2} & \text{if } \nicefrac{1}{3} < p_e\leq \nicefrac{1}{2} \\
1 &					\text{if } p_e \leq \nicefrac{1}{3}\\
\end{cases}
\end{align}
where $0\leq \epsilon\leq \nicefrac{\sqrt{33}}{2}-\nicefrac{17}{6}$. 
The following lemma upper bounds the makespan.
%
%

\begin{lemma} \label{BetterMakespanLemma}
	The output of Algorithm \ref{GBrounding} with the threshold function $f_\epsilon$, has a makespan of at most $\nicefrac{11}{6}-\nicefrac{\epsilon}{2}$.
\end{lemma}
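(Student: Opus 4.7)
The plan is to bound the load at an arbitrary vertex $u\in V$ by combining the general load bound of Lemma~\ref{lemmaST} with a careful case analysis on (i) the number of edges oriented at $u$ in the \LS and (ii) the ``size category'' of each relevant processing time relative to the three breakpoints of $f_\epsilon$ (\textit{big}: $p>\nicefrac{1}{2}$; \textit{medium}: $\nicefrac{1}{3}<p\leq\nicefrac{1}{2}$; \textit{small}: $p\leq\nicefrac{1}{3}$). The extra strength needed to push the makespan bound from an ``easy'' $\nicefrac{11}{6}$ all the way down to $\nicefrac{11}{6}-\nicefrac{\epsilon}{2}$ will come from the Set constraints of \LPstong with $k=3$; these exactly rule out the configurations that a naive application of Lemma~\ref{lemmaST} with $f_\epsilon$ would otherwise permit.

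I would begin with two structural steps. First, at most two edges can be oriented at $u$ in the \LS: small edges cannot be oriented at all since $f_\epsilon(p_e)=1$ for $p_e\leq\nicefrac{1}{3}$, so any three \LS edges at $u$ would form an infeasible set of non-small edges; the Star constraint is then violated whenever two of them are big (each contributing $x>\nicefrac{2}{3}-\epsilon$), while the size-$3$ Set constraint is violated whenever at least one is medium (the three $x$-values strictly exceed $2$). Second, whenever the \LS orients a big edge $e'_1$ at $u$ (so $q_1>\nicefrac{1}{2}$ and $x_{e'_1,u}>\nicefrac{2}{3}-\epsilon$), the size-$2$ Set constraint on $\{e'_1,e\}$ forces $x_{e,u}<\nicefrac{1}{3}+\epsilon$ for every edge $e$ with $q_1+p_e>1$; combined with the lower bound $x_{e,u}\geq 1-f_\epsilon(p_e)$ from Observation~\ref{GraphObserv}, this gives tight two-sided control on $x_{e,u}$ for relevant non-\LS edges, hence on the total mass $\sum_{e\notin LS} x_{e,u}$, and therefore on both the number of slots $k$ assigned to $u$ in the \GS and the value of $p_2$.

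With these in place I would run the case split. In Case A (no \LS edges at $u$), Lemma~\ref{lemmaST} reduces to $1+f_\epsilon(p_1)p_1+(1-f_\epsilon(p_1))p_2$; since $p_2\leq\nicefrac{1}{2}$ by Observation~\ref{SecondSlotObserv}, the maximum is attained at $p_1=1$, $p_2=\nicefrac{1}{2}$ and equals exactly $\nicefrac{11}{6}-\nicefrac{\epsilon}{2}$, while every subcase with $p_1\leq\nicefrac{1}{2}$ gives strictly smaller values. In Case B (one \LS edge with processing time $q_1$), when $q_1$ is big the Star constraint and Observation~\ref{GraphObserv} force $p_1\leq\nicefrac{1}{2}$, and the critical subcase is $p_1\in(\nicefrac{1}{3},\nicefrac{1}{2}]$, which I return to below. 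The remaining subcases ($q_1$ medium or $p_1$ small) yield bounds comfortably below $\nicefrac{11}{6}-\nicefrac{\epsilon}{2}$ via direct substitution into Lemma~\ref{lemmaST} and the size-$2$ Set constraint on $\{e'_1,e_1\}$. In Case C (two \LS edges), the pair must be big$+$medium or two medium, and the size-$2$ Set constraint forces $q_1+q_2\leq 1$; the large \LS contribution together with the size-$3$ Set constraint on $\{e'_1,e'_2,e\}$ consume so much of the Load budget that $\sum_{e\notin LS} x_{e,u}<1$ throughout the prescribed range of $\epsilon$, forcing $k=1$ and a bound strictly below target.

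The main obstacle is the critical subcase of Case B just mentioned. A naive use of Lemma~\ref{lemmaST} at $q_1=1$, $p_1=\nicefrac{1}{2}$, $p_2=\nicefrac{1}{2}$ gives $\nicefrac{11}{6}+\epsilon$, far above the target. The saving comes from exploiting the size-$2$ Set constraints $\{e'_1,e_j\}$ (active whenever $q_1+p_{e_j}>1$, and at $q_1=1$ for every edge $e_j$ with $p_{e_j}>0$) to cap each $x_{e_j,u}<\nicefrac{1}{3}+\epsilon$; together with the lower bound $x_{e_1,u}\geq\nicefrac{1}{3}-\nicefrac{\epsilon}{2}$ of Observation~\ref{GraphObserv} (which applies because $p_1>\nicefrac{1}{3}$) and the Load constraint $\sum_{e\in\delta(u)}x_{e,u}p_e\leq 1$, this yields a sharp upper bound on $\sum_{e\notin LS}x_{e,u}$, and consequently on the largest $p_2$ compatible with $k\geq 2$, of the form $(\nicefrac{1}{6}+\nicefrac{5\epsilon}{4})/(\nicefrac{2}{3}+\nicefrac{\epsilon}{2})$. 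Substituting this extremal $p_2$ back into Lemma~\ref{lemmaST}'s bound and equating the result to $\nicefrac{11}{6}-\nicefrac{\epsilon}{2}$ yields the quadratic $9\epsilon^2+51\epsilon-2=0$, whose positive root is exactly $\nicefrac{\sqrt{33}}{2}-\nicefrac{17}{6}$. Verifying this equality at the boundary, and checking that all other (sub)cases remain strictly below $\nicefrac{11}{6}-\nicefrac{\epsilon}{2}$ throughout the prescribed range of $\epsilon$, is the technical heart of the argument.
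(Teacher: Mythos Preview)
Your overall strategy is the paper's: combine Lemma~\ref{lemmaST} with a case analysis, using the Star and Set constraints to eliminate the bad configurations, and bound $p_2$ via the Load constraint in the tight case. You also land on the correct quadratic $9\epsilon^2+51\epsilon-2=0$ and its root $\nicefrac{\sqrt{33}}{2}-\nicefrac{17}{6}$. However, your decomposition is by the \emph{number} of \LS edges, whereas the paper splits by the \emph{size of $p_1$} (small, medium, big) and treats $q_1+q_2$ as a single quantity bounded by~$1$ (Observation~\ref{SumRounded}). This difference matters, because your Case~C contains a genuine error.

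\textbf{The gap in Case~C.} You claim that with two \LS edges the Load budget and the size-$3$ Set constraint force $\sum_{e\notin\mathrm{LS}}x_{e,u}<1$, hence $k=1$. This is false: the Load constraint controls $\sum_{e}x_{e,u}p_e$, not $\sum_e x_{e,u}$, and small edges can make the latter arbitrarily large. Concretely, take $q_1=q_2=0.4$ with $x_{e'_1,u}=x_{e'_2,u}=0.7$ (both medium, feasible for Star and all Set constraints of size $\leq 3$), and add many non-\LS edges of weight $0.01$ each with $x_{e,u}=0.1$; the residual load budget $1-0.56=0.44$ admits hundreds of such edges, so $k$ can be large. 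Moreover, with two medium \LS edges nothing prevents $p_1>\nicefrac{1}{2}$, and a naive substitution into Lemma~\ref{lemmaST} then overshoots the target. The fix is exactly the paper's Case~2 argument: whenever $p_1>\nicefrac{1}{2}$, Observation~\ref{GraphObserv} gives $x_{e_1,u}\geq \nicefrac{1}{3}+\epsilon$, so $x_{e'_i,u}+x_{e_1,u}>1$ and the size-$2$ Set constraint yields $q_i+p_1\leq 1$ for each $i$; rewriting $(\nicefrac{1}{3}-\nicefrac{\epsilon}{2})(q_1+q_2)+(\nicefrac{2}{3}-\epsilon)p_1$ as $(\nicefrac{1}{3}-\nicefrac{\epsilon}{2})\big((q_1+p_1)+(q_2+p_1)\big)$ then gives the $\nicefrac{11}{6}-\nicefrac{\epsilon}{2}$ bound directly. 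The medium-$p_1$ subcase of Case~C is likewise not resolved by $k=1$ but by the same $p_2$-bounding argument you sketch for Case~B.

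\textbf{The critical subcase.} Your numerical bound $p_2\leq (\nicefrac{1}{6}+\nicefrac{5\epsilon}{4})/(\nicefrac{2}{3}+\nicefrac{\epsilon}{2})$ is correct and coincides with the paper's, but the mechanism you describe is not: capping individual $x_{e_j,u}$ via Set constraints and bounding $\sum_{e\notin\mathrm{LS}}x_{e,u}$ is neither necessary nor sufficient. The actual derivation (which the paper carries out in its Case~3) is: the Load constraint gives
\[
x_{e'_1,u}q_1+x_{e'_2,u}q_2+\sum_{e\in slot(u,1)}y_{e,s_1}p_e\;\leq\;1,
\]
and since $x_{e'_i,u}>f_\epsilon(q_i)\geq \nicefrac{2}{3}-\epsilon$ while Observation~\ref{FirstSlotObserv} lower-bounds the slot-$1$ load by $(\nicefrac{1}{3}-\nicefrac{\epsilon}{2})p_1+(\nicefrac{2}{3}+\nicefrac{\epsilon}{2})p_2$, one solves for $p_2$ and substitutes back into Lemma~\ref{lemmaST}. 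This single computation covers $0$, $1$, or $2$ \LS edges simultaneously, which is why the paper's split by $p_1$ is cleaner than your split by $|\mathrm{LS}|$.
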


Before we proceed to the proof of the previous lemma, we use a simple but crucial observation that follows directly from the Set constraints of our strengthened relaxation $LP_k$.

\begin{observation} \label{SetObserv}
	Let $u\in V$ and $S\subseteq \delta(u)$. If $\sum_{e\in S}x_{e,u}>|S|-1$, then $\sum_{e\in S}p_e\leq 1$.
\end{observation}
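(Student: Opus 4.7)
The plan is to deduce Observation \ref{SetObserv} as an immediate contrapositive of the Set constraints in the strengthened relaxation \LPstong. Recall these constraints enforce, for every vertex $u$ and every subset $S \subseteq \delta(u)$ with $S \notin \mathcal{F}(u)$ and $|S| \leq k$, the inequality $\sum_{e \in S} x_{e,u} \leq |S| - 1$. Here $\mathcal{F}(u) = \{S' \subseteq \delta(u) : \sum_{e \in S'} p_e \leq 1\}$ is the family of feasible subsets at $u$, so the condition $S \notin \mathcal{F}(u)$ is literally $\sum_{e \in S} p_e > 1$.

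I would argue by contradiction. Assume toward contradiction that the hypothesis $\sum_{e \in S} x_{e,u} > |S| - 1$ holds but $\sum_{e \in S} p_e > 1$. The latter inequality means $S \notin \mathcal{F}(u)$, and since the observation is invoked only for subsets of size at most $k$ (in the proof of Lemma \ref{BetterMakespanLemma} the relevant value is $k = 3$), the Set constraint for the pair $(u, S)$ is active and yields $\sum_{e \in S} x_{e,u} \leq |S| - 1$. This contradicts the hypothesis, so $\sum_{e \in S} p_e \leq 1$ must hold.

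There is essentially no technical obstacle here; the observation is a one-line repackaging of the Set constraints in a form convenient for the subsequent load-bookkeeping in Lemma \ref{BetterMakespanLemma} (where one wants to convert a lower bound on $\sum_{e \in S} x_{e,u}$ directly into an upper bound on $\sum_{e \in S} p_e$). The point worth emphasizing is that this statement genuinely requires the strengthened relaxation: it does not follow from \LPthree alone, whose Load constraint only restricts the \emph{weighted} sum $\sum_{e \in \delta(u)} x_{e,u} p_e \leq 1$ and is too weak to rule out configurations in which many fractionally chosen edges in an infeasible subset $S$ still satisfy $\sum_{e \in S} x_{e,u} > |S| - 1$. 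This is precisely the reason the Set constraints are added to \LPstong.
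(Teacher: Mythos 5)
Your proposal is correct and is exactly the paper's argument: the paper's proof simply says the observation "follows immediately from the Set constraint of $u$ and $S$," and you have spelled out that one-line contrapositive (including the sensible caveat that the constraint is only imposed for $|S|\leq k$, which suffices since the observation is only invoked on such sets).
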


\begin{proof}
	Follows immediately from the Set constraint of $u$ and $S$.
\end{proof}
Now we give the proof of Lemma \ref{BetterMakespanLemma}:
\begin{proof}[Proof of Lemma \ref{BetterMakespanLemma}]
	Fix a vertex $u\in V$. 
	We start with the following simple observation:
	\begin{observation} \label{NumOriented}
		The number of edges oriented to $u$ in the \LS of Algorithm \ref{GBrounding} is at most $2$.
	\end{observation}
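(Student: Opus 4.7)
The plan is to argue by contradiction. Suppose three edges $e_1, e_2, e_3$ are oriented toward $u$ in the \LS. Since $f_\epsilon(p)=1$ whenever $p\leq \nicefrac{1}{3}$, no \emph{light} edge can ever satisfy $x_{e,u}>f_\epsilon(p_e)$; hence every $e_i$ has $p_{e_i}>\nicefrac{1}{3}$ and is either \emph{heavy} ($p>\nicefrac{1}{2}$, threshold $\nicefrac{2}{3}-\epsilon$) or \emph{medium} ($\nicefrac{1}{3}<p\leq \nicefrac{1}{2}$, threshold $\nicefrac{2}{3}+\nicefrac{\epsilon}{2}$). I would then split on the number of heavy edges among $\{e_1,e_2,e_3\}$.

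If at least two of the three are heavy, their individual lower bounds $x_{e_i,u}>\nicefrac{2}{3}-\epsilon$ already give $x_{e_1,u}+x_{e_2,u}>\nicefrac{4}{3}-2\epsilon>1$ for the range of $\epsilon$ in the statement, and both such edges appear in the Star constraint at $u$, an immediate contradiction. Otherwise, at most one of the three is heavy, so at least two are medium; summing the defining thresholds gives $\sum_{i=1}^{3} f_\epsilon(p_{e_i}) \geq (\nicefrac{2}{3}-\epsilon)+2(\nicefrac{2}{3}+\nicefrac{\epsilon}{2})=2$, hence $\sum_{i=1}^{3} x_{e_i,u}>2$ because each $x_{e_i,u}$ strictly exceeds its threshold.

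To turn this into a contradiction in the complementary case I would invoke the Set constraint with $S=\{e_1,e_2,e_3\}$: each $p_{e_i}>\nicefrac{1}{3}$ forces $\sum_i p_{e_i}>1$, so $S\notin \mathcal{F}(u)$, and since $|S|=3\leq k$ the constraint yields $\sum_i x_{e_i,u}\leq |S|-1=2$, contradicting the strict inequality above. The only genuinely delicate point is this second case: the Star constraint alone cannot rule out configurations dominated by medium edges, which is precisely what motivates the strengthened relaxation \LPstong with $k\geq 3$, and the asymmetric choice of thresholds ($-\epsilon$ for heavy versus $+\nicefrac{\epsilon}{2}$ for medium) is calibrated so that the sum-of-thresholds computation lands exactly at $2$, making the Set constraint applicable.
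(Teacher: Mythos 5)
Your proposal is correct and follows essentially the same route as the paper: the Star constraint limits the heavy edges among the three to at most one, the calibrated thresholds then force $\sum_i x_{e_i,u}>2$, and the Set constraint of \LPstong applied to $S=\{e_1,e_2,e_3\}$ (which is infeasible since each $p_{e_i}>\nicefrac{1}{3}$) yields the contradiction. Your case split and your remark that the $-\epsilon$ versus $+\nicefrac{\epsilon}{2}$ asymmetry makes the threshold sum land exactly at $2$ are exactly the calculation the paper performs.
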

	\begin{proof}
		Assume for the sake of contradiction that $e'_1,e'_2,e'_3$ were oriented to $u$ in the \LS. First, denote the weight of $e'_i$ by $q_i$. From the Star constraint on $u$, there is at most one edge $e'_i$ (among the three edges) such that $q_i>\nicefrac{1}{2}$. Therefore we can deduce:
		\begin{align*}
		x_{e'_1,u}+x_{e'_2,u}+x_{e'_3,u}
		&>f(p_{e'_1})+f(p_{e'_2})+f(p_{e'_3})\\
		&\geq \frac{2}{3}-\epsilon+\frac{2}{3}+\epsilon+\frac{2}{3}+\epsilon\\
		&>2.
		\end{align*}
		From Observation \ref{SetObserv}, we get that: $q_1+q_2+q_3\leq 1$, but on the other hand from the definition of $f_\epsilon$ (see \ref{BetterThreshFunc}), $q_i>\nicefrac{1}{3}$, for each $i$, which is a contradiction.
	\end{proof}
	
	Therefore we denote by $e'_1,e'_2$ the edges oriented to $u$ in the \LS, and their weights by $q_1,q_2$. If there is one (or no) edges oriented to $u$ in the \LS, then both $x_{e'_2,u}$ and $q_2$ equal $0$ (or all of $x_{e'_1,u},q_1,x_{e'_2,u},q_2$ equal $0$) respectively.
	
	The second observation states the sum $q_1+q_2$ cannot be too large. Formally:
	\begin{observation} \label{SumRounded}
		It must be the case that $q_1+q_2\leq 1$.
	\end{observation}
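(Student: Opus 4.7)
The plan is to argue by cases on $q_1$ and $q_2$, combining the shape of $f_\epsilon$ with the Star and Set constraints of \LPstong. If the \LS orients fewer than two edges toward $u$, the convention in the lemma makes one or both of $q_1, q_2$ equal to $0$, and the desired inequality $q_1+q_2 \leq 1$ reduces to $q_1 \leq 1$, which holds after the scaling that makes $T=1$. So I may assume both $e'_1$ and $e'_2$ exist.

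First I would extract a lower bound on the weights of oriented edges straight from the definition of $f_\epsilon$. Since $f_\epsilon(p_e) = 1$ whenever $p_e \leq \nicefrac{1}{3}$, and $x_{e'_i,u} > f_\epsilon(q_i)$ with $x_{e'_i,u} \leq 1$ (from the Edge constraint), we must have $q_i > \nicefrac{1}{3}$ for both $i \in \{1,2\}$. In particular $f_\epsilon(q_i) \geq \nicefrac{2}{3} - \epsilon$.

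Next I would rule out the case $q_1, q_2 > \nicefrac{1}{2}$ using the Star constraint at $u$: in that case the two terms alone contribute $x_{e'_1,u} + x_{e'_2,u} > 2(\nicefrac{2}{3}-\epsilon) = \nicefrac{4}{3} - 2\epsilon > 1$ to the big-edge sum (since $\epsilon$ lies well below $\nicefrac{1}{6}$), a contradiction. Hence at most one of $q_1, q_2$ exceeds $\nicefrac{1}{2}$, and if neither does then $q_1+q_2 \leq 1$ trivially.

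The only remaining case is WLOG $q_1 > \nicefrac{1}{2}$ and $\nicefrac{1}{3} < q_2 \leq \nicefrac{1}{2}$. Here $f_\epsilon(q_1) = \nicefrac{2}{3}-\epsilon$ and $f_\epsilon(q_2) = \nicefrac{2}{3}+\nicefrac{\epsilon}{2}$, so $x_{e'_1,u} + x_{e'_2,u} > \nicefrac{4}{3} - \nicefrac{\epsilon}{2} > 1 = |S|-1$ for $S = \{e'_1,e'_2\}$. Applying Observation \ref{SetObserv} to this $S$ immediately yields $q_1+q_2 \leq 1$, finishing the argument. The only thing to double-check (not really an obstacle) is that the admissible range $0 \leq \epsilon \leq \nicefrac{\sqrt{33}}{2} - \nicefrac{17}{6}$ is small enough to make each strict inequality $>1$ above actually hold; this is a routine numerical verification since the bound is roughly $0.04$.
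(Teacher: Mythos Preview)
Your argument is correct, but it is more elaborate than necessary. The paper's proof dispenses with the case split on whether $q_1,q_2$ exceed $\nicefrac{1}{2}$ and with the appeal to the Star constraint: since the framework requires every threshold function to take values in $[\nicefrac{1}{2},1]$, one has $f_\epsilon(q_1)+f_\epsilon(q_2)>\nicefrac{1}{2}+\nicefrac{1}{2}=1$ immediately, and Observation~\ref{SetObserv} with $S=\{e'_1,e'_2\}$ gives $q_1+q_2\leq 1$ in a single stroke. Your detour through the Star constraint and the ``both at most $\nicefrac{1}{2}$'' case works, but it is redundant: the Set constraint already subsumes all configurations once $x_{e'_1,u}+x_{e'_2,u}>1$ is established, and that inequality holds uniformly without inspecting which piece of $f_\epsilon$ applies.
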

	\begin{proof}
		If $q_2=0$ or $q_1=q_2=0$ then trivially, $q_1+q_2\leq 1$. Otherwise, $q_1,q_2>0$, {\em i.e.}, both $e'_1,e'_2$ are edges that were oriented to $u$ in the \LS. Therefore,
		\[
		x_{e'_1,u}+x_{e'_2,u}>f_\epsilon(q_1)+f_\epsilon(q_2)>\nicefrac{1}{2}+\nicefrac{1}{2}=1.
		\]
		From Observation \ref{SetObserv} (choosing $S=\{e'_1,e'_2\}$), we conclude: $q_1+q_2\leq 1$.
	\end{proof}
	
	We now consider three cases regarding the value of $p_1$ and bound the load on $u$ in each case concisely (the bound we achieve is $\nicefrac{11}{6}-\nicefrac{\epsilon}{2}$).
	
	\noindent \textbf{Case 1:} In this case $p_1\leq \nicefrac{1}{3}$. From Lemma \ref{lemmaST} we get that the load on $u$ is at most:
	\begin{align*}
	1+(1-f_\epsilon(q_1))q_1+(1-f_\epsilon(q_2))q_2+f_\epsilon(p_1)p_1+&(1-f_\epsilon(p_1))p_2\leq \\
	&\leq 1+\left(\frac{1}{3}+\epsilon\right)(q_1+q_2)+p_1\\
	&\leq 1+\left(\frac{1}{3}+\epsilon\right)+\frac{1}{3}\\
	&=\frac{5}{3}+\epsilon\\
	&<\frac{11}{6}-\frac{\epsilon}{2}.
	\end{align*}
	The first inequality follows from the definition of $f_\epsilon$ (and the fact that $p_2\leq p_1$). Additionally, the second inequality follows from Observation \ref{SumRounded}.
	
	
	\noindent \textbf{Case 2:} In this case $p_1>\nicefrac{1}{2}$. Note that it must hold that $q_1,q_2\leq \nicefrac{1}{2}$ (otherwise Observation \ref{GraphObserv} implies that $x_{e_1,u}+x_{e'_i,u}>1-f_\epsilon(p_1)+f_\epsilon(p_{e'_i})=1-f_\epsilon(p_1)+f_\epsilon(p_1)=1$, which contradicts the Star constraint on $u$). Hence, $\nicefrac{1}{3}<q_1,q_2\leq \nicefrac{1}{2}$.
	
	Moreover it holds that $p_1+q_i\leq 1$ (Observation \ref{SetObserv} applied to the set $S=\{e'_i,e_1\}$ along with Observation $\ref{GraphObserv}$). From Lemma \ref{lemmaST} we derive that the load on $u$ is at most:
	
	\begin{align*}
	&1+(1-f_\epsilon(q_1))q_1+(1-f_\epsilon(q_2))q_2+f_\epsilon(p_1)p_1+(1-f_\epsilon(p_1))p_2\\
	&= 1+\left(\frac{1}{3}-\frac{\epsilon}{2}\right)(q_1+q_2)+\left(\frac{2}{3}-\epsilon\right)p_1+\left(\frac{1}{3}+\epsilon\right)p_2\\
	&= 1+ \left(\frac{1}{3}-\frac{\epsilon}{2}\right)(q_1+p_1)+\left(\frac{1}{3}-\frac{\epsilon}{2}\right)(q_2+p_1)+\left(\frac{1}{3}+\epsilon\right)p_2\\
	&\leq 1+ \left(\frac{1}{3}-\frac{\epsilon}{2}\right)+\left(\frac{1}{3}-\frac{\epsilon}{2}\right)+\left(\frac{1}{3}+\epsilon\right)\cdot \frac{1}{2}\\
	&= 1+ \frac{2}{3}-\epsilon+\frac{1}{6}+\frac{\epsilon}{2}\\
	&=\frac{11}{6}-\frac{\epsilon}{2}.
	\end{align*}
	Note that the second equality follows since $\epsilon\geq 0$.
	
	\noindent \textbf{Case 3:} In this case, $\nicefrac{1}{3}<p_1\leq \nicefrac{1}{2}$. From Lemma \ref{lemmaST} we can bound the load on $u$ by:
	\begin{align*}
	1+(1-f_\epsilon(q_1))q_1+(1-f_\epsilon(q_2))q_2+f_\epsilon(p_1)p_1+(1-f_\epsilon(p_1))p_2\\
	\leq 1 + \left( \frac{1}{3}+\epsilon \right)(q_1+q_2)+\left( \frac{2}{3}+\frac{\epsilon}{2} \right)p_1+ \left( \frac{1}{3}-\frac{\epsilon}{2} \right)p_2.
	\end{align*}
	Instead of using Observation \ref{SecondSlotObserv} we use a different bound on $p_2$ that comes from the Load constraint on $u$ and from Observation \ref{FirstSlotObserv}:
\begin{align*}
\left( \frac{2}{3}-\epsilon \right)(q_1+q_2)+\left( \frac{1}{3}-\frac{\epsilon}{2} \right)p_1+ \left( \frac{2}{3}+\frac{\epsilon}{2} \right)p_2 &\leq x_{e'_1,u}q_1+x_{e'_2}q_2+\sum_{e\in slot(u,1)}y_{e,s_1}p_e \\
& \leq \sum_{e\in \delta(u)}x_{e,u}p_e\leq 1.
\end{align*}
	Therefore, we can deduce:
	\[
	p_2\leq \left( \frac{2}{3}+\frac{\epsilon}{2} \right)^{-1}\left( 1- \left( \frac{2}{3}-\epsilon \right)(q_1+q_2)-\left( \frac{1}{3}-\frac{\epsilon}{2} \right)p_1\right).
	\]
	This bound on $p_2$ and Lemma \ref{lemmaST} give us the following bound on the load on $u$:
	\begin{align*}
	&1 + \left( \frac{1}{3}+\epsilon \right)(q_1+q_2)+\left( \frac{2}{3}+\frac{\epsilon}{2} \right)p_1+\\ &\left( \frac{1}{3}-\frac{\epsilon}{2} \right)\cdot \left( \frac{2}{3}+\frac{\epsilon}{2} \right)^{-1}\left( 1- \left( \frac{2}{3}-\epsilon \right)(q_1+q_2)-\left( \frac{1}{3}-\frac{\epsilon}{2} \right)p_1\right)\\
	&=1+\frac{\frac{1}{3}-\frac{\epsilon}{2}}{\frac{2}{3}+\frac{\epsilon}{2}} + \left( \frac{1}{3}+\epsilon-\frac{(\frac{1}{3}-\frac{\epsilon}{2})(\frac{2}{3}-\epsilon)}{\frac{2}{3}+\frac{\epsilon}{2}} \right)(q_1+q_2)+ \left( \frac{2}{3}+\frac{\epsilon}{2}-\frac{(\frac{1}{3}-\frac{\epsilon}{2})^2}{\frac{2}{3}+\frac{\epsilon}{2}} \right)p_1.
	\end{align*}
	It can be easily be verified that the coefficients of $(q_1+q_2)$ and $p_1$ are non-negative for the range of $\epsilon$.
	Thus, the above expression is maximized when $p_1=\nicefrac{1}{2}$ and $q_1+q_2=1$. Moreover, it can be easily verified that in that case the above expression (when setting $q_1+q_2=1$ and $p_1=\nicefrac{1}{2}$) equals:
	
	\begin{align*}
	1\frac{2}{3}+\frac{5}{4}\epsilon-\frac{\frac{5}{8}\epsilon^2 - \frac{1}{3}\epsilon-\frac{1}{18}}{\frac{2}{3}+\frac{\epsilon}{2}}.
	\end{align*}
	Recalling that $\epsilon\leq \nicefrac{\sqrt{33}}{2}-\nicefrac{17}{6}$ implies that the above is at most: $\nicefrac{11}{6}-\nicefrac{\epsilon}{2}$, concluding the proof.
\end{proof}

Now we conclude with the proofs of Theorems \ref{BetterGBTheorem} and \ref{Theorem:firstUpperBound}.
\begin{proof}[Proof of Theorem \ref{BetterGBTheorem}]
Follows immediately from Lemmas \ref{BetterMakespanLemma} and \ref{CostApprox}, and choosing $\gamma=\nicefrac{1}{12}-\nicefrac{\epsilon}{2}$.
\end{proof}

\begin{proof}[Proof of Theorem \ref{Theorem:firstUpperBound}]
Follows immediately from Theorems \ref{GBBasicTheorem} and \ref{BetterGBTheorem}.
\end{proof}

\section{A Simple Algorithm for Graph Balancing with Orientation Costs}
In this section we present a simple rounding algorithm to \LPthree. This algorithm achieves an approximation of $(1.75+\gamma, 1+\nicefrac{1}{\gamma})$ to \sGB, where $0< \gamma\leq 1$ (The exact result is formulated in Theorem \ref{Theorem:secondUpperBound}).

Our approach is similar to the one of Algorithm \ref{GBrounding}, which consists of two steps.
The first local and second global.
Similarly to  Algorithm \ref{GBrounding}, in the local step if an edge is significantly fractionally oriented to a vertex we orient that edge to that vertex. However, there are two differences from Algorithm \ref{GBrounding}. First, the threshold function is constant, \ie $f(p)\equiv \alpha$. Second, after the first \LS, instead of Algorithm \ref{STrounding}, we use the rounding algorithm by Ebenlendr {\em et al.} \cite{EbenlendrKS08}. This is summarized in the following algorithm.

\begin{algorithm*}[H] \label{NGBrounding}
	\caption{$(G=(V,E,\textbf{p},\textbf{c}), \textbf{x},\alpha)$}
	For each edge $e$ and $u\in e$ if $x_{e,u} > \alpha$, then orient $e$ to $u$ and remove $e$ from $E$. \hfill (\LS)\\
	Execute the algorithm of  Ebenlendr {\em et al.} \cite{EbenlendrKS08}. \hfill (\GS)\\
\end{algorithm*}

We now elaborate on the reasons for choosing a constant threshold function to Algorithm \ref{NGBrounding}.
Note that in order to have a bounded loss in the orientation cost, for every $p$, it must hold that $f(p)<1$. Otherwise, if there exists a value $p$ such that $f(p)=1$, then an edge $e$ of weight $p$ is never rounded in the \LS of Algorithm \ref{NGBrounding}.
Therefore, in the \GS it might be oriented to a vertex $u$ in a Leaf Assignment rule (as defined in \cite{EbenlendrKS08}). It is also possible that $x_{e,u}=\epsilon$ for $\epsilon>0$ arbitrarily small. Therefore, the Leaf Assignment step might incur an unbounded loss in the orientation cost.

Thus, we limit the threshold function to a constant function in order to make the analysis simpler. 
We present two lemmas that bound the makespan and orientation cost of Algorithm \ref{NGBrounding}. The first lemma bounds the makespan of the output of Algorithm \ref{NGBrounding}.

\begin{lemma} \label{NGBMakespan}
	The makespan of the orientation Algorithm \ref{NGBrounding}  outputs is at most $\nicefrac{1}{\alpha}+0.75$.
\end{lemma}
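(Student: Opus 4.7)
The plan is to fix an arbitrary vertex $u$ and bound its integral load by decomposing it into contributions from edges oriented in the \LS and those oriented in the \GS. For the \LS contribution, I would let $e'_1,\ldots,e'_t$ be the edges oriented to $u$ in that step with weights $q_1,\ldots,q_t$. Since each such edge satisfies $x_{e'_i,u} > \alpha$, substituting into the Load constraint for $u$ in \LPthree gives
\[
\alpha \sum_{i=1}^t q_i \;<\; \sum_{i=1}^t x_{e'_i,u}\, q_i \;\leq\; 1,
\]
so $\sum_{i=1}^t q_i < 1/\alpha$. An immediate byproduct is that the fractional load on $u$ contributed by the remaining edges $E\setminus\{e'_1,\ldots,e'_t\}$ is at most $1 - \alpha\sum_{i=1}^t q_i$.

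For the \GS contribution I would observe that the restriction of $\textbf{x}$ to the edges not removed in the \LS is still a feasible solution of \LPthree on the reduced instance (the Edge, Load, and Star constraints are all preserved once edges are deleted). I would then invoke the EKS rounding in its refined per-vertex form, namely that the integral load it assigns to $u$ is at most the fractional load on $u$ plus $0.75$. Applied to the residual instance this gives a \GS contribution to $u$ of at most $1 - \alpha\sum_{i=1}^t q_i + 0.75$. Summing the two contributions, the total load on $u$ is at most
\[
\sum_{i=1}^t q_i + \left(1 - \alpha\sum_{i=1}^t q_i\right) + 0.75 \;=\; 1.75 + (1-\alpha)\sum_{i=1}^t q_i \;<\; 1.75 + \frac{1-\alpha}{\alpha} \;=\; 0.75 + \frac{1}{\alpha},
\]
and since $u$ was arbitrary this yields the claimed makespan bound.

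The main obstacle is justifying the EKS bound in the sharpened per-vertex form I am using: the classical statement of \cite{EbenlendrKS08} only asserts an integrality gap of $1.75$, whereas my argument needs that EKS incurs additive rounding slack of at most $0.75$ on every vertex for every feasible fractional input. I would verify this by revisiting their case analysis, where at most one big edge is assigned integrally to $u$ and the small edges are rounded via a Shmoys--Tardos style matching on slots, and checking that the slack decomposes as ``big-edge contribution (integrally preserved) plus the largest small-edge weight assigned to $u$'', which is bounded by $0.75$ irrespective of the fractional solution fed to EKS. Once this refined bound is established, the decomposition above transfers the ``saved'' budget from the Local Step directly into the final makespan guarantee of $1/\alpha + 0.75$.
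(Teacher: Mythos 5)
Your proof is correct and follows essentially the same route as the paper: decompose the load on $u$ into the Local Step contribution (bounded via $x_{e,u}>\alpha$ and the Load constraint) and the Global Step contribution (bounded by the residual fractional load plus the additive $0.75$ per-vertex guarantee of the Ebenlendr et al.\ rounding), then combine. The paper likewise simply cites the per-vertex $+0.75$ slack of the EKS rounding, so the "obstacle" you flag is treated identically there; your parametrization by $\sum_i q_i < \nicefrac{1}{\alpha}$ versus the paper's $X\leq 1$ is an immaterial difference.
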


The second lemma bound the orientation cost of the output of Algorithm \ref{NGBrounding}:
\begin{lemma} \label{NGBcosts}
The orientation cost of the orientation Algorithm \ref{NGBrounding} outputs is at most $\nicefrac{1}{(1-\alpha)}\cdot C(T)$.
\end{lemma}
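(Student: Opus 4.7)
The plan is to split the edges into two groups based on whether they were oriented in the \LS or in the \GS of Algorithm \ref{NGBrounding}, and charge each group's cost separately to the LP objective. Let $S\subseteq E$ denote the set of edges oriented in the \LS, and let $\eta$ denote the final orientation produced by the algorithm.

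For edges $e=(u,v)\in S$ with $\eta(e)=u$, the \LS condition guarantees $x_{e,u}>\alpha$, so mimicking the argument in the proof of Lemma \ref{CostApprox} I would derive $c_{e,u}\leq \frac{1}{\alpha}x_{e,u}c_{e,u}\leq \frac{1}{\alpha}(x_{e,u}c_{e,u}+x_{e,v}c_{e,v})$. This step is routine and essentially identical to the corresponding bound in Lemma \ref{CostApprox}.

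The main obstacle is handling the remaining edges, i.e., those in $E\setminus S$ that are oriented during the \GS. Because Algorithm \ref{NGBrounding} invokes the rounding of Ebenlendr {\em et al.} \cite{EbenlendrKS08} in the global step instead of Shmoys--Tardos, the cost-preserving property of a minimum-cost matching is no longer available, and, as noted in the preceding discussion, a Leaf Assignment could in principle orient an edge $e$ to $u$ even when $x_{e,u}$ is arbitrarily small. The key observation, and precisely the reason a constant threshold $f\equiv \alpha$ is chosen, is that for every edge $e=(u,v)\in E\setminus S$ we have $x_{e,u}\leq \alpha$ \emph{and} $x_{e,v}\leq \alpha$; combined with the Edge constraint $x_{e,u}+x_{e,v}=1$ this forces both $x_{e,u},x_{e,v}\geq 1-\alpha$. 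Consequently, no matter to which endpoint the \GS orients $e$, say $u$, one obtains $c_{e,u}\leq \frac{1}{1-\alpha}x_{e,u}c_{e,u}\leq \frac{1}{1-\alpha}(x_{e,u}c_{e,u}+x_{e,v}c_{e,v})$, which neutralizes the potentially small values of $x_{e,u}$ on the Leaf Assignment side.

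To conclude, I would invoke $\alpha\geq \nicefrac{1}{2}$ (the regime for which Lemma \ref{NGBMakespan} gives a sub-$2$ makespan, and in which Theorem \ref{Theorem:secondUpperBound} is applied) so that $\frac{1}{\alpha}\leq \frac{1}{1-\alpha}$, and sum the two bounds over all edges to obtain
\[
\sum_{e\in E}c_{e,\eta(e)}\;\leq\;\frac{1}{1-\alpha}\sum_{e=(u,v)\in E}\bigl(x_{e,u}c_{e,u}+x_{e,v}c_{e,v}\bigr)\;=\;\frac{1}{1-\alpha}\cdot OPT_{LP}\;\leq\;\frac{1}{1-\alpha}\cdot C(T),
\]
which is the claimed bound.
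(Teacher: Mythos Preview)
Your proposal is correct and follows essentially the same approach as the paper's own proof: both split into the \LS and \GS cases, use $x_{e,\eta(e)}>\alpha$ for \LS edges and the key observation that $x_{e,u},x_{e,v}\geq 1-\alpha$ for any edge surviving to the \GS (this is exactly what the paper invokes, citing the analogue of Observation~\ref{GraphObserv}), and then combine the two bounds via $\alpha\geq \nicefrac{1}{2}$ before summing to $\frac{1}{1-\alpha}\cdot OPT_{LP}\leq \frac{1}{1-\alpha}\cdot C(T)$.
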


From the two lemmas above, the proof of Theorem \ref{Theorem:secondUpperBound} follows immediately:

\begin{proof}[Proof of Theorem \ref{Theorem:secondUpperBound}]
Follows immediately from Lemmas \ref{NGBMakespan}, \ref{NGBcosts} and choosing $\alpha=\nicefrac{1}{(1+\gamma)}$.
\end{proof}

We now proceed to the proofs of Lemmas \ref{NGBMakespan}, \ref{NGBcosts}.

\begin{proof}[Proof of Lemma \ref{NGBMakespan}]
Fix a vertex $u$. Denote by $X$ the fractional load of edges that were oriented to $u$ in the \LS:
\[
X\triangleq \sum_{e\in E_{orient}}x_{e,u}p_e,
\]
where $E_{orient}$ is the set of edges that were oriented to $u$ in the \LS. The load on $u$ from edges oriented in the \LS of Algorithm \ref{NGBrounding} is:
\begin{align*}
\sum_{e\in E_{orient}}p_e &\leq \sum_{e\in E_{orient}}\frac{1}{\alpha}x_{e,u}p_e\\
& = \frac{1}{\alpha}\sum_{e\in E_{orient}}x_{e,u}p_e\\
& =\frac{1}{\alpha}X.
\end{align*}
The first inequality is derived from the definition of the \LS. In addition, the load on $u$ from edges oriented in the \GS is at most: $1-X+0.75$. Note that the algorithm by Ebenlendr {\em et al.} \cite{EbenlendrKS08} adds at most $0.75$ to the initial fractional load of each vertex (see the proof in \cite{EbenlendrKS08}). In this case the initial load is $1-X$, since the initial fractional load equals $1$ and a total of $X$ was removed from $E$ in the \LS of Algorithm \ref{NGBrounding}.

Adding the load on $u$ from edges oriented in the \LS and the \GS of Algorithm \ref{NGBrounding} yields:
\begin{align*}
\frac{1}{\alpha}X+(1-X+0.75)&=1.75+\left(\frac{1}{\alpha}-1\right)X\\
&\leq 1.75+\left(\frac{1}{\alpha}-1\right)\\
&=\frac{1}{\alpha}+0.75,
\end{align*}
which concludes the proof.
\end{proof}


\begin{proof}[Proof of Lemma \ref{NGBcosts}]
Let $e=(u,v)\in E$. Assume without loss of generality that $e$ is oriented to $u$ in the output of Algorithm \ref{NGBrounding}. We prove that:
\begin{align} \label{NGBcostInequality}
c_{e,u}\leq \frac{1}{1-\alpha}\cdot (x_{e,u}c_{e,u}+x_{e,v}c_{e,v}).
\end{align}
We consider two cases depending whether $e$ was oriented to $u$ in the \LS or the \GS of Algorithm \ref{NGBrounding}. Assume $e$ was oriented in the \LS, then $x_{e,u}>\alpha$. Thus,
\begin{align*}
c_{e,u}&\leq \frac{1}{\alpha}x_{e,u}c_{e,u}\\
&\leq \frac{1}{\alpha}(x_{e,u}c_{e,u}+x_{e,v}c_{e,v})\\
&\leq \frac{1}{1-\alpha}(x_{e,u}c_{e,u}+x_{e,v}c_{e,v}).
\end{align*}
The last inequality follows from the fact that $\alpha\geq 0.5$.

Assume $e$ was oriented to $u$ in the \GS. From Observation \ref{EdgeObserv} (which still holds in this case) $x_{e,u}\geq 1-\alpha$. Hence we deduce:
\begin{align*}
c_{e,u}&\leq \frac{1}{1-\alpha}x_{e,u}c_{e,u}\\
&\leq \frac{1}{1-\alpha}(x_{e,u}c_{e,u}+x_{e,v}c_{e,v}).
\end{align*}
Hence, we conclude that Inequality \ref{NGBcostInequality} holds. Summing Inequality \ref{NGBcostInequality} over all edges we obtain that the orientation cost of the output of Algorithm \ref{NGBrounding} is at most:
\begin{align*}
\sum_{e\in E}c_{e,u} &\leq \sum_{e\in E} \frac{1}{1-\alpha}(x_{e,u}c_{e,u}+x_{e,v}c_{e,v})\\
&=\frac{1}{1-\alpha}\cdot OPT_{LP}.
\end{align*}
The proof is concluded since $OPT_{LP}\leq C(T)$.
\end{proof}

\section{Lower Bound on The Tradeoff Between Makespan and Cost}\label{sec:GBLowerBound}
We show that using \LPstong for every $k\in \mathbb{N}$, one must loose in the total orientation cost when obtaining an approximation for the makespan that is strictly better than $2$. This is in contrast to the classic result of \cite{ShmoysT93} for which one can achieve an approximation factor of $2$ with respect to the makespan with no loss in the assignment cost. 
This result is formulated in Theorem \ref{Theorem:LowerBound}. 

\begin{proof}[Proof of Theorem \ref{Theorem:LowerBound}]
	Let $\epsilon > 0$ arbitrarily small and $k\in \mathbb{N}$. We show that for every $0\leq \gamma< 0.25$, there is an instance that is feasible to \LPstong, and any integral solution with makespan at most $1.75+\gamma$ has a cost of at least $\nicefrac{1}{(\gamma+0.75+2\epsilon)}\cdot OPT_{\LPstong}$.
	
	The instance we consider is shown in Figure \ref{IntegralityCostInstance}. The weight of the edge $e=(u,v)$ is $1-\epsilon$. Furthermore, the load values (self loops weights) are denoted by $q_u,q_v$. Note that a load value $q$ (whether it is $q_u$ or $q_v$) is split into $\nicefrac{(qk)}{\epsilon}$ self loops each with a weight of $\nicefrac{\epsilon}{k}$. Additionally, we set the assignment costs to be $c_{e,v}=1$ and $c_{e,u}=0$.
	
	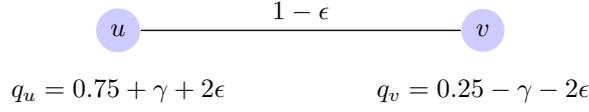
\begin{figure}
		\centering
		\begin{tikzpicture}
		[scale=.8,auto=left]
		\node[circle,fill=blue!20] (u) at (7,5)  {$u$};
		\node[circle,fill=blue!20] (v) at (13,5)  {$v$};
		
		\node[circle] (qu) at (7,4) {$q_u=0.75+\gamma+2\epsilon$};
		\node[circle] (qv) at (13,4) {$q_v=0.25-\gamma-2\epsilon$};
		
		\draw[draw=black, line width=0.5] (u) edge node{$1-\epsilon$} (v);
		
		\end{tikzpicture}
		\caption{Integrality gap instance for orientation cost for \LPstong.}
		\label{IntegralityCostInstance}
	\end{figure}
	
	There is exactly one feasible solution to \LPstong, which is $x_{e,u}=0.25-\gamma-2\epsilon$ and $x_{e,v}=0.75+\gamma+2\epsilon$. The cost of this fractional solution is $OPT_{\LPstong}=0.75+\gamma+2\epsilon$. There are two integral solutions:
	\begin{enumerate}
		\item  Orienting $e$ toward $u$. This solution has makespan of $1.75+\gamma+\epsilon$.
		
		\item Orienting $e$ toward $v$. This solution has an orientation cost of $1=\nicefrac{1}{(0.75+\gamma+2\epsilon)}\cdot OPT_{\LPstong}$.

	\end{enumerate}
	Therefore, every integral solution with makespan at most $1.75+\gamma$ has an orientation cost of at least $\nicefrac{1}{(0.75+\gamma+2\epsilon)}\cdot OPT_{\LPstong}$. Choosing $\epsilon'=\nicefrac{\epsilon}{2}$, and replacing $\epsilon$ with $\epsilon'$ in this proof, concludes the proof.
\end{proof}

\section{Extending Graph Balancing to Hyperedges and Unrelated Weights} \label{Section:Extensions}
\subsection{Graph Balancing with Unrelated Light Hyperedges} \label{Section:GBH}
Let us recall the definition of \sGBUH{$\beta$}, where $\beta\in [0,1]$.
The input consists of a hypergraph, where each vertex represents a machine and each hyperedge represents a job.
The jobs are of two types, ``light'' and ``heavy''.
Every light hyperedge $e\in E$ is associated with weights $p_{e,u}$, one for each vertex $u\in e$ ({\em i.e.}, $e$ is unrelated since it has a different processing time for each of the machines it can be assigned to).
The requirement is that $p_{e,u}\leq \beta$ for evert $u\in e$.
On the other hand, every heavy hyperedge $e\in E$ must in fact be an edge, {\em i.e.}, $|e|=2$.
Such a heavy $e$ is associated with a single weight $p_e\in [0,1]$ ({\em i.e.}, $e$ is related since it has the same processing time for each of the two machines it can be assigned to).
In the above, as previously mentioned, we assume without loss of generality that the largest weight equals $1$.
For both types, light and heavy, orienting $e$ toward one of its endpoints is equivalent to assigning the job $e$ represents to the machine that is represented by the vertex $e$ was oriented to.
It is important to note that when $\beta=1$ the problem is exactly \sGAP, and when $\beta=0$ the problem is exactly \sGB.

Our result for \sGBUH{$\beta$} is summarized in Theorem \ref{Theorem:GBUH}, which improves upon the previous result of \cite{HuangO16} (refer to Section \ref{sec:Introduction} for a thorough discussion on how our result improves upon \cite{HuangO16}).
To the best of our knowledge, our result provides the first approximation better than $2$ with respect to the makespan of a natural problem that captures \sGB but is not captured by \sRA.

\begin{proof}
	Note that every instance of \sGBUH{$\beta$}, where $\beta<\nicefrac{1}{2}$, is an instance of \sGBUH{0.5}. Therefore it sufficient to prove the theorem for $\beta\geq \nicefrac{1}{2}$. Moreover, we say $e\in E$ is a \textit{heavy edge} if it is not a light hyperedge and $p_e>\beta$. Note that each edge is either a light hyperedge or a heavy edge.
	
	The proof is very similar to the one of Theorem \ref{GBBasicTheorem}. We use Algorithm \ref{GBrounding} with $f_\alpha$ (see \ref{ThresholdFunc}) as a threshold function (the same threshold function used in the proof of Theorem \ref{GBBasicTheorem}). The bounding of the orientation cost remains the same as in Lemma \ref{CostApprox} (which still holds in this case). Therefore, the orientation cost is at most $\nicefrac{1}{\alpha}\cdot C(T)$.
	
	However, the makespan bounding in the proof of Theorem \ref{GBBasicTheorem} does not hold. The main reason that the analysis is not the same is due to Observation \ref{GraphObserv}. This observation holds only for edges and not for hyperedges. However, since the hyperedges have a bounded processing time, we can achieve a similar result.
	In the makespan bounding in the proof of Theorem \ref{GBBasicTheorem}, we consider two cases.
	For simplicity of presentation, we repeat the analysis as in the proof of Theorem \ref{GBBasicTheorem} and elaborate on the differences.
	
	\noindent	\textbf{Case 1:} No edge was oriented toward $u$ in the \LS of Algorithm \ref{GBrounding}. Using the same notations in the proof of Theorem \ref{GBBasicTheorem}, if $e_1$ (the edge on top of $slot(u,1)$) is heavy, then the analysis remains valid. This is since $e_1$ is an edge and thus Observation \ref{GraphObserv} holds.
	Otherwise, $e_1$ is a light hyperedge, and thus $p_1\leq \beta$ 
	According to the original analysis of \cite{ShmoysT93} (or Lemma \ref{lemmaST}) the makespan is at most $1+\beta\leq 1+3\gamma+0.25\leq 1.75+\gamma$ (this follows from the fact that $\nicefrac[]{\beta}{3} - \nicefrac[]{1}{12}\leq \gamma \leq \nicefrac[]{1}{4}$).
	
	\noindent	\textbf{Case 2:} In the second case, an edge was oriented to $u$ in the \LS. In constrast to the proof of Theorem \ref{GBBasicTheorem}, it is no longer necessarily true to determine that $p_1\leq \nicefrac{1}{2}$. This is because we cannot apply Observation \ref{GraphObserv} to the edge $e_1$ if it is a light hyperedge (if it is an edge the proof remains valid). If $e_1$ is a light hyperedge, then the best bound on $p_1$ is $p_1\leq \beta$ (rather than $\nicefrac{1}{2}$). Using the bound on the load on $u$ from the proof of Theorem \ref{GBBasicTheorem}, we bound the load on $u$ by:
	\begin{align*}
	1+(1-\alpha)q_1+f_\alpha(p_1)p_1+(1-f_\alpha(p_1))p_2 &\leq 1+(1-\alpha)+p_1\\
	&\leq 2-\alpha+\beta\\
	&=1.5+\beta-2\gamma\\
	&\leq 1.75+\gamma.
	\end{align*}
	The equality follows from our choice of $\alpha$ to be $\alpha=2\gamma+0.5$. Moreover, the last inequality follows from the fact that $\nicefrac{\beta}{3}-\nicefrac{1}{12}\leq \gamma$.
	This concludes bounding the makespan, and thus concludes the proof.
\end{proof}

\subsection{Graph Balancing with Unrelated Light Hyperedges and Unrelated Heavy Edges}\label{sec:GBU}
The problem of \sGBU{$\beta$} further generalizes the above \sGBUH{$\beta$} as it allows heavy edges to have unrelated weights.
Formally, every heavy edge $e=(u,v)\in E$ is associated with two weights $p_{e,u}$ and $p_{e,v}$, {\em i.e.}, $e$ is unrelated since $p_{e,u}$ indicates the processing time of the job $e$ represents on the machine that is represented by $u$.
The requirement is that $ p_{e,u},p_{e,v}\in (\beta,1]$.

First, we prove that without the assumption that the optimal makespan is at least $1$, the problem is as hard as \sGAP. Formally, the following lemma gives an approximation preserving reduction from \sGAP to \sGBU{$\beta$} (the single criterion version of these problems). 

\begin{lemma} \label{HardnessLemma}
	For every $0<\beta\leq 1$, if there is a $c$-approximation to \sGBU{$\beta$} (without the assumption that the optimal makespan is at least $1$), then there is a $c$-approximation to the \GAP. \footnote{We remark that by using a similar reduction, it is possible to show approximation equivalence for the bicriteria versions. Formally, if there is an $(a,b)$-approximation to \sGBU{$\beta$}, there is an $(a,b)$-approximation to \sGAP. We defer the proof for a later version of the paper.}
\end{lemma}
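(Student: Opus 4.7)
The plan is to give an approximation-preserving reduction from (single-criterion) \sGAP to \sGBU{$\beta$}. Given any \sGAP instance with machines $\mathcal{M}$, jobs $\mathcal{J}$, and processing times $p_{i,j} \in \mathbb{R}^+ \cup \{\infty\}$, I let $P \triangleq \max\{p_{i,j} : p_{i,j} < \infty\}$ and define the scaling factor $\lambda \triangleq \beta/P$. The constructed \sGBU{$\beta$} instance has vertex set $V = \mathcal{M}$, and for every job $j \in \mathcal{J}$ it contains a single light hyperedge $e_j$ with vertex set $V(e_j) = \{i \in \mathcal{M} : p_{i,j} < \infty\}$ and weights $p_{e_j,i} = \lambda \cdot p_{i,j}$ for every $i \in V(e_j)$. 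By the choice of $\lambda$, each such weight lies in $(0, \beta]$, so every $e_j$ is indeed a light hyperedge; the construction introduces no heavy edges, so the $(\beta, 1]$ requirement on heavy edges is vacuous.

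Next I would argue that the reduction is makespan-preserving up to the scaling factor $\lambda$. Every orientation of the constructed \sGBU{$\beta$} instance induces a \sGAP assignment (job $j$ goes to the vertex that $e_j$ is oriented toward), and conversely every \sGAP assignment of finite makespan induces a valid orientation of the constructed instance. For each such pair the load on any machine $i$ in the \sGBU{$\beta$} instance is exactly $\lambda$ times the load on $i$ in \sGAP, so the optimal \sGBU{$\beta$} makespan equals $\lambda T^{*}$, where $T^{*}$ denotes the optimal \sGAP makespan.

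The final step is to invoke the assumed $c$-approximation for \sGBU{$\beta$} -- which by hypothesis does not require the $OPT \geq 1$ assumption -- on the constructed instance. It returns an orientation of makespan at most $c \lambda T^{*}$; translating back through the bijection yields a \sGAP assignment of makespan at most $c T^{*}$, proving the lemma.

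The main point, and really the only conceptual obstacle, is that the scaling factor $\lambda = \beta / P$ can force the optimal \sGBU{$\beta$} makespan $\lambda T^{*}$ to be arbitrarily smaller than $1$; this is precisely why the reduction must discard the $OPT \geq 1$ assumption, as any algorithm that relied on this lower bound could not be applied to the scaled-down instance. Minor degenerate cases (a job with no finite processing time, or one with a unique valid machine) are handled by trivial preprocessing: in the former case \sGAP is infeasible, and in the latter the job is forced onto its unique machine and its contribution to the load is subtracted from the target before the reduction is applied.
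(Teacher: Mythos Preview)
Your reduction is correct and considerably simpler than the paper's: you scale every processing time by $\lambda=\beta/P$, which turns every \sGAP job into a light hyperedge of weight at most $\beta$; the bijection between orientations and schedules preserves makespan up to the factor $\lambda$, so a $c$-approximation on the scaled instance yields a $c$-approximation for \sGAP. The observation that $\lambda T^{*}$ may be arbitrarily small is exactly the point of the lemma, and you identify it clearly.

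The paper takes a genuinely different route. Instead of a single global scaling, it enumerates the distinct processing-time values $w_1>w_2>\dots>w_k$, and for each threshold $w_\ell$ it (i) deletes all assignments with $p_{i,j}>w_\ell$, (ii) adds a dummy heavy edge on two fresh machines with weights $w_\ell+\epsilon$ and $w_\ell/\beta$, (iii) runs the assumed algorithm, and finally returns the best schedule over all $\ell$. The reason for this extra machinery is that the paper wants the constructed instance to respect its standing normalization that the maximum weight equals~$1$, which forces at least one heavy edge to be present; your all-light instance has maximum weight exactly $\beta$, so after renormalizing to $\max p=1$ the heaviest hyperedge would land in $(\beta,1]$ while possibly touching more than two vertices, violating the heavy-edge constraint. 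To keep the dummy heavy edge from inflating the optimum, the paper must guess a threshold $w_\ell$ with $w_\ell\le OPT<w_{\ell-1}$, whence the loop.

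In short: your argument works once one reads the normalization $\max p=1$ as a convenience rather than a requirement (instances with only light hyperedges are then legitimate), and under that reading it is the cleaner proof. The paper's version is more robust to a strict reading of the problem definition, at the price of the guessing loop and the gadget. Either way the conceptual content is the same: \sGAP embeds into \sGBU{$\beta$} via the light hyperedges, and only the $OPT\ge 1$ hypothesis obstructs this.
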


\begin{proof}
	For simplicity of presentation, we use the notations of jobs and machines instead of edges and vertices. Let ALG be a c-approximation to \sGBU{$\beta$}.
	Given an instance of \sGAP, denote the different (sorted) values of the processing times $\textbf{p}$ by $\infty = w_0 > w_1 > w_2 > \dots > w_k$. We present in Algorithm \ref{Reduction} our approximation preserving reduction.
	As previously mentioned in the paper, without loss of generality, we scale all processing times such that $max\{p_{i,j}\mid p_{i,j}\neq \infty\}=1$.
	
	\begin{algorithm*}[H] \label{Reduction}
		\caption{Reduction($\textbf{p}$)}
		\For{$\ell=1$ upto $k$}
		{If $p_{i,j} > w_\ell$, set $p_{i,j} \leftarrow \infty$.\\
			Add a new job $j'$ and two new machines $i_1$ and $i_2$ such that $p_{i_1,j'} = w_{\ell} + \epsilon$, $p_{i_2,j'} = \frac{1}{\beta}w_{\ell}$ and for every $i\notin \{i_1,i_2\}$ set $p_{i,j'} = \infty$. \\ 
			Run ALG on the new instance. 
			Reset all changes made to $\textbf{p}$ in this iteration.
		}
		Return the best schedule among all iterations (without the job $j'$ and machines $i_1,i_2$).
	\end{algorithm*}
	
	We now show that Algorithm \ref{Reduction} is a c-approximation to the given \sGAP instance.
	Denote by $OPT$ the optimal makespan of the original \sGAP instance.
	We note that there exists some $\ell$ such that $w_\ell \leq OPT < w_{\ell-1}$.
	We prove that for this $\ell$, the $\ell$th iteration of Algorithm \ref{Reduction} produces a schedule with makespan at most $c\cdot OPT + \epsilon$ where $\epsilon > 0 $ is arbitrarily small.
	
	First, note that the input is valid to ALG. This is because the only heavy job is $j'$ (the new job added in step 3), and the rest of the input (the jobs of the \sGAP instance), are light hyperedges. Thus, step 4 of Algorithm \ref{Reduction} is valid.
	Second, denote by $OPT'$ the optimal makespan of the instance created in iteration $\ell$. We show that $OPT'\leq OPT + \epsilon$, which concludes the proof. A feasible orientation can orient the new added job $j'$ toward $i_1$, and the rest of the input can be oriented like in the optimal solution to the GAP instance (since $OPT<w_{\ell-1}$). The makespan of this orientation is at most $OPT'$, and thus we conclude:
	\[
	OPT' \leq max\{OPT, w_i+\epsilon\} \leq max\{OPT, OPT+\epsilon\} = OPT+\epsilon.
	\]	
\end{proof}


Surprisingly, by adding a mild constraint on the value of the optimal makespan we can give an approximation better than $2$ for \sGBU{$\beta$}. The assumption is that $OPT\geq 1$, and we assume it from this point onwards. Notice that $OPT \geq 1$ holds naturally in \sGB, and \sGBUH{$\beta$}. Therefore, \sGBU{$\beta$} with the assumption that the value of the optimal makespan is at least $1$, captures these problems.

\subsubsection{Upper Bound}
We extend our previous results to \sGBU{$\beta$} and achieve the same approximation factors for $\beta\geq \sqrt{2}-1$. To be precise, in the case that $\beta\geq \sqrt{2}-1$, we get the same result as in Theorem \ref{Theorem:GBUH}. This result is summarized in Theorem \ref{Theorem:GBU}.\footnote{We remark that makespan approximation of less than 2 is still possible with this technique for $\beta<\sqrt{2}-1$. The required threshold function can be $f(p)=1-\nicefrac{1}{x}$ for $p>\nicefrac{1}{x}$ and 1 otherwise, for sufficiently big $x$ (depends on $\beta$).}

We use Algorithm \ref{GBrounding} to prove Theorem \ref{Theorem:GBU}. Recall that given a target makespan $T$ the algorithm scales all the processing times by $T$. Similarly to \sGB, if $T<1$, the algorithm outputs that there is no orientation with makespan at most $T$. This is true since $OPT\geq 1$. Hence, similarly to \sGB we know that $p_{e,u}\leq 1$ for every $e\in E$ and $u\in e$.

The only difference in the algorithm with respect to \sGB is that if $p_{e,u}\leq 0.75+\gamma$ for every $e\in E$ and $u\in e$, we use a rounding function $f\equiv 1$. This will result a $(1.75+\gamma, 1)$-approximation as proved in \cite{ShmoysT93} (or Lemma \ref{lemmaST}). Therefore, we assume the processing times are scaled by a multiplicative factor of at least $0.75+\gamma$, \ie the target makespan $T$ satisfies $T\geq \nicefrac{1}{(0.75+\gamma)}$.

We choose the following threshold function to prove Theorem \ref{Theorem:GBU}:
\begin{align} \label{ThresholdFuncGBULHUB}
f_\alpha(p_{i,j}) =
\begin{cases}
1 & \text{if } p_{i,j}\leq \nicefrac{1}{3} \\
\alpha & \text{if } p_{i,j}> \nicefrac{1}{3} \\
\end{cases}
\end{align}
where $max\{\nicefrac{1}{3}+\nicefrac{(2\beta)}{3}, \nicefrac{2}{3}\}\leq \alpha \leq 1$.

First we upper bound the assignment cost of Algorithm \ref{GBrounding} with the above threshold function \ref{ThresholdFuncGBULHUB}.
\begin{corollary} \label{CostCorollaryGBULHUB}
	For the specific choice of a threshold function $f_\alpha$ as in \ref{ThresholdFuncGBULHUB}, the output of Algorithm \ref{GBrounding} has an assignment cost of at most $\nicefrac{1}{\alpha}\cdot C(T)$.
\end{corollary}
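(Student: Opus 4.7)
The plan is to show that Corollary \ref{CostCorollaryGBULHUB} follows from essentially the same two-pronged argument as Lemma \ref{CostApprox}, with only cosmetic modifications to accommodate hyperedges and unrelated weights. The key quantitative observation is that the infimum of $f_\alpha$ is exactly $\alpha$, and since $\alpha \geq \nicefrac{2}{3} \geq \nicefrac{1}{2}$, we have $1 \leq \nicefrac{1}{\alpha} \leq 2$, which is all that is needed for the comparison between the two types of edges to go through.

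I would partition the edges into the set $S$ of edges oriented during the \LS and the remaining set $E \setminus S$ of edges oriented during the \GS. For $E \setminus S$, Algorithm \ref{STrounding} produces a minimum cost matching in the bipartite slot graph, whose total cost is bounded by the corresponding fractional cost; importantly, this bound holds even when jobs (edges) have more than two endpoints, since the Shmoys--Tardos rounding treats each fractional assignment $x_{e,u}$ uniformly regardless of $|e|$. Concretely,
\[
\sum_{e \in E \setminus S} c_{e,\eta(e)} \;\leq\; \sum_{e \in E \setminus S}\sum_{u \in e} x_{e,u}\, c_{e,u}.
\]

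For $S$, if $e$ is oriented toward $u$ in the \LS then by definition $x_{e,u} > f_\alpha(p_{e,u}) \geq \alpha$, so $\alpha^{-1}x_{e,u} \geq 1$ and therefore
\[
c_{e,\eta(e)} \;\leq\; \frac{1}{\alpha}\, x_{e,\eta(e)}\, c_{e,\eta(e)} \;\leq\; \frac{1}{\alpha} \sum_{u \in e} x_{e,u}\, c_{e,u}.
\]
Summing both displays and using $1 \leq \nicefrac{1}{\alpha}$ to absorb the $E \setminus S$ bound into the $S$ bound yields
\[
\sum_{e \in E} c_{e,\eta(e)} \;\leq\; \frac{1}{\alpha} \sum_{e \in E} \sum_{u \in e} x_{e,u}\, c_{e,u} \;=\; \frac{1}{\alpha}\, OPT_{LP} \;\leq\; \frac{1}{\alpha}\, C(T),
\]
which is the desired bound.

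There is essentially no technical obstacle here: the argument of Lemma \ref{CostApprox} never actually used that $e$ has exactly two endpoints in the cost analysis (only in the makespan analysis, via Observation \ref{GraphObserv}). The only points that deserve a brief verification are (i) that the Shmoys--Tardos cost guarantee extends verbatim to hyperedges (it does, since the slot construction and matching argument only manipulate fractional values $x_{e,u}$ and costs $c_{e,u}$), and (ii) that the threshold is applied per-endpoint to the unrelated weight $p_{e,u}$ rather than to a single scalar $p_e$, which is already how the \LS of Algorithm \ref{GBrounding} is implemented in the \sGBU{$\beta$} setting.
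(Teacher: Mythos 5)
Your proposal is correct and follows exactly the route the paper takes: the paper's proof of Corollary \ref{CostCorollaryGBULHUB} simply invokes Lemma \ref{CostApprox}, observing that it still holds in the \sGBU{$\beta$} setting, which is precisely the claim you verify in detail (that the cost argument never uses $|e|=2$, that the Shmoys--Tardos cost bound is indifferent to the number of endpoints, and that $\inf f_\alpha=\alpha\in[\nicefrac{1}{2},1]$). Your write-up just makes explicit the checks the paper leaves implicit.
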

\begin{proof}
	The proof follows immediately from Lemma \ref{CostApprox}, which still holds for \sGBU{$\beta$}.
\end{proof}

Now we focus on bounding the makespan.
We prove that the makespan is at most $1.5 + 0.5\alpha$, and choosing $\alpha=2\gamma+0.5$ gives the desired result.

Fix a machine $i$. We use the same notations and definition as in the proof of Theorem \ref{BetterGBTheorem}. Moreover, notice that observations \ref{SetObserv}, \ref{NumOriented}, and \ref{SumRounded}, are valid in this case as well. However, note that Observation \ref{GraphObserv} does not trivially hold, even for heavy edges. This is due to the fact that the heavy edges are unrelated. We now formulate the exact observation (similar to Observation \ref{GraphObserv}) that holds for \sGBU{$\beta$}.


\begin{observation} \label{AlmostGraphObserv}
	Let $j$ be a heavy job that isn't assigned in the \LS of Algorithm \ref{GBrounding}. Additionally, let $i$ be a machine such that $j$ could be assigned to $i$. Then, for our choice of a rounding function $f_\alpha$ it holds that $1-\alpha=1-f_\alpha(p_{i,j})\leq x_{i,j}\leq f_\alpha(p_{i,j})=\alpha$.
\end{observation}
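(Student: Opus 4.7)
The plan is to mimic the proof of Observation \ref{GraphObserv} for ordinary (related) edges, with one extra ingredient needed to handle the fact that heavy edges in \sGBU{$\beta$} may have different weights at their two endpoints. Since $j$ is a heavy job (not a light hyperedge), the input definition guarantees that $j$ is associated with exactly two machines, say $i$ and $i'$, and satisfies $p_{i,j},p_{i',j}\in(\beta,1]$. Thus $j$ behaves like an ordinary edge as far as the Edge constraint is concerned: $x_{i,j}+x_{i',j}=1$.

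The key step — and the reason the hypothesis $\beta\ge\sqrt{2}-1$ is invoked — is to verify that the threshold function $f_\alpha$ from (\ref{ThresholdFuncGBULHUB}) takes the \emph{same} value at both endpoints. Because $\beta\ge\sqrt{2}-1>\nicefrac{1}{3}$, we have $p_{i,j}>\nicefrac{1}{3}$ and $p_{i',j}>\nicefrac{1}{3}$, so by the definition of $f_\alpha$ in (\ref{ThresholdFuncGBULHUB}), $f_\alpha(p_{i,j})=f_\alpha(p_{i',j})=\alpha$. This is what rescues the argument from the unrelated weights: even though $p_{i,j}$ and $p_{i',j}$ may differ, they fall in the same branch of the piecewise threshold.

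Given this, the remainder is immediate and parallels Observation \ref{GraphObserv}. Since $j$ was not oriented in the \LS of Algorithm \ref{GBrounding}, the rounding rule was not triggered at either endpoint, so $x_{i,j}\le f_\alpha(p_{i,j})=\alpha$ and $x_{i',j}\le f_\alpha(p_{i',j})=\alpha$. Combining the latter with $x_{i,j}=1-x_{i',j}$ yields $x_{i,j}\ge 1-\alpha=1-f_\alpha(p_{i,j})$, establishing the claimed two-sided bound.

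The only subtlety is the dependence on the hypothesis $\beta\ge\sqrt{2}-1$; without it, the two endpoint-weights of a heavy edge could straddle $\nicefrac{1}{3}$, making the two thresholds differ (one would be $1$ and the other $\alpha$), and the lower bound $x_{i,j}\ge 1-\alpha$ would fail. I expect this observation about the threshold matching to be the only nontrivial point; the rest is a direct application of the Edge constraint and the \LS rule.
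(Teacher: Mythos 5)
Your overall structure is right — reduce to showing that both endpoint weights of the heavy edge land in the branch of $f_\alpha$ where the threshold is $\alpha$, then apply the Edge constraint and the \LS rule exactly as in Observation \ref{GraphObserv} — but the justification of the key inequality $p_{i,j}>\nicefrac{1}{3}$ has a genuine gap. The bound $p_{i,j}>\beta$ from the definition of a heavy edge holds for the \emph{unscaled} processing times, whereas the algorithm (and the threshold function $f_\alpha$) operates on the processing times after they have been divided by the target makespan $T$. Since $T$ may exceed $1$, the scaled weight of a heavy edge can drop below $\beta$, and your inference ``$\beta\geq\sqrt{2}-1>\nicefrac{1}{3}$, hence $p_{i,j}>\nicefrac{1}{3}$'' does not apply to the quantity the algorithm actually compares against $\nicefrac{1}{3}$. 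A tell-tale sign is that your argument would only require $\beta>\nicefrac{1}{3}$, while the theorem insists on the strictly stronger $\beta\geq\sqrt{2}-1$; that constant is not there to separate the two branches of $f_\alpha$ per se, but comes out of the scaling computation.

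The paper closes this gap as follows. If every scaled weight is at most $0.75+\gamma$, the algorithm switches to $f\equiv 1$ and Observation \ref{AlmostGraphObserv} is not needed; otherwise the scaling factor $\nicefrac{1}{T}$ is at least $0.75+\gamma$, so the scaled weight of a heavy edge satisfies $p_{i,j}>\beta(0.75+\gamma)$. Using the hypothesis $\gamma\geq\nicefrac{\beta}{3}-\nicefrac{1}{12}$ this is at least $\beta\left(\nicefrac{2}{3}+\nicefrac{\beta}{3}\right)$, and the condition $\beta\geq\sqrt{2}-1$ is exactly what makes this expression at least $\nicefrac{1}{3}$ (note $(\sqrt{2}-1)(\sqrt{2}+1)=1$). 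Only after this chain of inequalities can one conclude that both endpoints of the heavy edge fall into the $p>\nicefrac{1}{3}$ branch of $f_\alpha$, after which your concluding two sentences (thresholds equal at both endpoints, $x_{i,j}\leq\alpha$ and $x_{i',j}\leq\alpha$, hence $x_{i,j}=1-x_{i',j}\geq 1-\alpha$) go through verbatim.
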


\begin{proof}
	Let $j$ be a heavy job that isn't assigned in the \LS, and $i$ a machine such that $j$ could be assigned to $i$. The definition of a heavy job implies that before scaling the processing times (in the binary search for the optimal makespan): $p_{i,j}>\beta$.
	Moreover, since we scale by a factor of at least $0.75+\gamma$, we know that after scaling the processing times satisfy $p_{i,j}>\beta(0.75+\gamma)$. Since $\gamma\geq \nicefrac{\beta}{3} -\nicefrac{1}{12}$, we can deduce:
	\begin{align*}
	p_{i,j}&>\beta(0.75+\nicefrac{1}{3}\beta -\nicefrac{1}{12})\\
	&=\beta(\nicefrac{2}{3}+\nicefrac{1}{3}\beta)\\
	&\geq 0.5(\nicefrac{2}{3}+\nicefrac{1}{3}\cdot 0.5)\\
	&\geq \nicefrac{1}{3}.
	\end{align*}
	Therefore, if $x_{i,j}>\alpha$ according to Algorithm \ref{GBrounding} and the definition of $f_\alpha$ (see \ref{ThresholdFuncGBULHUB}), $j$ should be assigned to $i$ in the \LS, which is a contradiction. Moreover, if $x_{i,j}<1-\alpha$, then $x_{i',j}>\alpha$ where $i'$ is the only machine different from $i$, that $j$ could be assigned to. Since $p_{i',j}>\nicefrac{1}{3}$ as well, $j$ should be assigned to $i'$ in the \LS, again a contradiction.
\end{proof}

Now we continue with bounding the makespan. Fix a machine $i$. We show that the load on $i$ is at most $1.5+0.5\alpha$. We use the same definitions and notations as in the previous sections. Recall that $j_1,\dots, j_k$ are the jobs on top of the slots of machine $i$, and $j'_1,j'_2$ are the jobs assigned to machine $i$ in the \LS. For simplicity of notation we use $p_\ell$ and $q_\ell$ do denote $p_{i,j_\ell}$ and $p_{i,j'_\ell}$ respectively. We consider two cases, whether $j_1$ is a heavy job or not.

\noindent \textbf{Case 1:} In this case $j_1$ is a heavy job. Therefore, from Lemma \ref{lemmaST} and Observation \ref{AlmostGraphObserv} the load on $i$ is at most:
\begin{align*}
&1+(1-f_\alpha(q_1))q_1+(1-f_\alpha(q_2))q_2+f_\alpha(p_1)p_1+(1-f_\alpha(p_1))p_2\\
&=1+(1-\alpha)(q_1+q_2)+\alpha p_1+(1-\alpha)p_2\\
&=1+(1-\alpha)(q_1+p_1)+(1-\alpha)(q_2+p_1)+(3\alpha-2) p_1+(1-\alpha)p_{j_2}\\
&\leq 1+(1-\alpha)+(1-\alpha)+(3\alpha-2) +(1-\alpha)\cdot 0.5\\
&=1.5+0.5\alpha.
\end{align*}
The last inequality follows from Observations \ref{SumRounded} and \ref{SecondSlotObserv}, and the fact that $(3\alpha-2)p_1$ is maximized when $p_1=1$ (this is true since $\alpha\geq \nicefrac{2}{3}$).

\noindent \textbf{Case 2:} In this case $j_1$ is a light job. Therefore, $p_{j_1}\leq \beta$.\footnote{Since we assume that the optimal makespan is at least 1, when scaling the processing times they do not increase.} From Lemma \ref{lemmaST}, the load on $i$ is at most:
\begin{align*}
&1+(1-f_\alpha(q_1))q_1+(1-f_\alpha(q_2))q_2+f_\alpha(p_1)p_1+(1-f_\alpha(p_1))p_2\\
&\leq 1+(1-\alpha)(q_1+q_2)+p_1\\
&\leq 1+(1-\alpha)+\beta\\
&=2-\alpha+\beta\\
&\leq 1.5+0.5\alpha.
\end{align*}
The first inequality follows from the fact that $p_2\leq p_1$. Additionally, the second inequality follows from Observation \ref{SumRounded}. Lastly, the third inequality follows from the fact that $\alpha\geq\nicefrac{1}{3}+\nicefrac{2}{3}\beta$.

From the above we can deduce the following corollary:
\begin{corollary} \label{GBUHLUBMakespanCorrol}
	The makespan of the assignment produced by Algorithm \ref{GBrounding} with a threshold function $f_\alpha$ as in \ref{ThresholdFuncGBULHUB}) is at most $1.5+0.5\alpha$.
\end{corollary}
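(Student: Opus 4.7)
The plan is to fix an arbitrary machine $i$ and bound its load under the assignment produced by Algorithm \ref{GBrounding}, showing the bound is at most $1.5+0.5\alpha$. The core tool is Lemma \ref{lemmaST}, which, using the notation $p_1, p_2$ for the two largest processing times of jobs fractionally placed on $i$ (those on top of $slot(i,1)$ and $slot(i,2)$) and $q_1, q_2$ for the processing times of jobs assigned to $i$ in the \LS, upper bounds the load on $i$ by
\[
1+(1-f_\alpha(q_1))q_1+(1-f_\alpha(q_2))q_2+f_\alpha(p_1)p_1+(1-f_\alpha(p_1))p_2.
\]
To exploit this, the plan is to use Observations \ref{NumOriented} (at most two jobs assigned at \LS) and \ref{SumRounded} ($q_1+q_2 \leq 1$), which remain valid since they rely only on the Star and Set constraints.

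The case split is on whether the job $j_1$ sitting on top of $slot(i,1)$ is heavy or light. In the heavy case, the key point is that Observation \ref{AlmostGraphObserv} recovers the two-sided bound $1-\alpha \leq x_{i,j_1} \leq \alpha$; this in turn lets us apply the Set constraint to $\{j'_\ell, j_1\}$ and conclude $q_\ell + p_1 \leq 1$ for $\ell=1,2$. Substituting into Lemma \ref{lemmaST}, rewriting
\[
(1-\alpha)(q_1+q_2)+\alpha p_1 = (1-\alpha)(q_1+p_1)+(1-\alpha)(q_2+p_1)+(3\alpha-2)p_1,
\]
and using $p_2 \leq 1/2$ (Observation \ref{SecondSlotObserv}) plus $p_1 \leq 1$ (the coefficient $3\alpha-2$ is non-negative since $\alpha \geq 2/3$) will collapse the expression to $1.5+0.5\alpha$. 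In the light case, Observation \ref{AlmostGraphObserv} is unavailable for $j_1$, but we instead have the structural bound $p_1 \leq \beta$. Substituting, bounding $p_2 \leq p_1$, and applying $q_1+q_2 \leq 1$ yields a load of at most $2-\alpha+\beta$, which is $\leq 1.5+0.5\alpha$ precisely because $\alpha \geq \nicefrac{1}{3}+\nicefrac{2\beta}{3}$.

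The main obstacle is the heavy case: the unrelated weights on heavy edges kill the naive version of Observation \ref{GraphObserv}, so we must carefully justify the scaling-based argument behind Observation \ref{AlmostGraphObserv}, namely that the scaling factor $0.75+\gamma$ combined with $\gamma \geq \beta/3 - 1/12$ and $\beta \geq \sqrt{2}-1$ forces a scaled heavy weight to exceed $1/3$, placing it in the regime where $f_\alpha = \alpha$. Once that is in hand, the rest is routine algebraic manipulation exploiting the rewrite into pairs $(q_\ell + p_1)$ bounded by $1$, which is exactly the step that takes advantage of the strengthened Set constraints in $LP_k$.
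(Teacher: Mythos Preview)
Your proposal is correct and follows essentially the same approach as the paper: the same two-case split on whether $j_1$ is heavy or light, the same use of Lemma~\ref{lemmaST} together with Observations~\ref{NumOriented}, \ref{SumRounded}, \ref{SecondSlotObserv}, and \ref{AlmostGraphObserv}, and the same algebraic rewrite $(1-\alpha)(q_1+q_2)+\alpha p_1=(1-\alpha)(q_1+p_1)+(1-\alpha)(q_2+p_1)+(3\alpha-2)p_1$ in the heavy case. In fact your justification of $q_\ell+p_1\le 1$ via the Set constraint on $\{j'_\ell,j_1\}$ (using $x_{i,j'_\ell}>\alpha$ and $x_{i,j_1}\ge 1-\alpha$ from Observation~\ref{AlmostGraphObserv}) is more explicit than the paper's, which cites Observation~\ref{SumRounded} at that step even though $q_1+q_2\le 1$ is not what is actually being used.
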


Now we are ready to prove Theorem \ref{Theorem:GBU}.
\begin{proof}[Proof of Theorem \ref{Theorem:GBU}]
	The proof follows from Corollaries \ref{CostCorollaryGBULHUB} and \ref{GBUHLUBMakespanCorrol}, and choosing $\alpha=2\gamma+0.5$.
\end{proof}

\subsubsection{Integrality Gaps}
We now present a matching lower bound for Theorem \ref{Theorem:GBU}. 
Specifically, we prove Theorem \ref{Theorem:GBUintegrality} by providing two integrality gaps to \LPstong.
Intuitively, the first instance shows that using \LPstong it is impossible to achieve an approximation less than $\nicefrac{11}{6}$ with respect to the makespan.
Intuitively, the second instance shows that using \LPstong an approximation of $1.75+\gamma$ with respect to the makespan must incur a loss of at least $\nicefrac{1}{(2\gamma +0.5)}\cdot OPT_{\LPstong}$ with respect to the orientation cost. This is formulated in the following two lemmas.
\begin{lemma} \label{GBUHMakespanIntegrality}
There exists an instance of \sGBUH{0.5} that is feasible to \LPstong, such that every integral solution has makespan of at least $\nicefrac{11}{6}-\epsilon$, for every $\epsilon>0$.
\end{lemma}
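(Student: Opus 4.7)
I would prove this by an explicit construction. The instance will live around a single distinguished vertex $u$ playing the role of the overloaded vertex, together with a small set of ``blocker'' vertices, and the design is guided by the worst case (Case~3) in the proof of Lemma~\ref{BetterMakespanLemma}: the load $\nicefrac{11}{6}$ there decomposes as $\nicefrac{1}{3}$ (self-loop at $u$) plus three contributions of $\nicefrac{1}{2}$ from light hyperedges. Accordingly, I would place a self-loop of weight $\nicefrac{1}{3}$ on $u$ and attach several light hyperedges of weight $\nicefrac{1}{2}$ incident to $u$, whose other endpoints are blockers with self-loops chosen strictly above $\nicefrac{1}{2}$ (by some small $\delta>0$), so that an integral orientation of such a hyperedge onto a single blocker would push that blocker's load above $1$. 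To keep $LP_k$ feasible, blockers are shared across several hyperedges (via hyperedge endpoints of size $\geq 3$), so that fractionally each blocker absorbs only a small fraction of each incident hyperedge, while cumulatively they absorb enough mass to respect $u$'s own load budget.

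To establish LP feasibility, I would write down explicit fractional values $x_{e,v}$ placing most of each hyperedge's mass at $u$ (so $u$'s fractional load is exactly $1$) and a small $O(\delta)$ fraction at each blocker (keeping blocker loads $\leq 1$). The Edge and Load constraints are immediate from the choice of values; the Star constraint is trivially satisfied because the hyperedges have weight at most $\nicefrac{1}{2}$ and no heavy-edge Star constraints are invoked at $u$; and the Set constraints for subsets of size at most $k$ incident to $u$ (or to a blocker) are checked by direct enumeration on the constant-size vertex neighborhoods of the gadget.

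The integral lower bound follows from a finite case analysis on the number $k_u$ of hyperedges oriented to $u$. When $k_u \geq 3$, the load at $u$ is at least $\nicefrac{1}{3}+3\cdot\nicefrac{1}{2}=\nicefrac{11}{6}$, and we are done. When $k_u\leq 2$, the remaining hyperedges are forced onto blockers, and because blockers are shared among hyperedges, a pigeonhole argument guarantees that at least one blocker receives enough hyperedges for its load (self-loop of $\nicefrac{1}{2}+\delta$ plus hyperedge contributions, possibly complemented by an additional heavy or light edge in the gadget) to reach $\nicefrac{11}{6}-\epsilon$. The parameter $\epsilon$ in the statement absorbs the small slack $\delta$ used in the blocker self-loops, and vanishes as $\delta\to 0$.

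The main obstacle is the simultaneous achievability of (i) $LP_k$-feasibility, which requires that no blocker vertex be ``too full'' fractionally so that each hyperedge has somewhere to distribute its mass away from $u$; and (ii) the integral lower bound, which requires that integral orientation onto any single blocker overloads that blocker, so that all hyperedges are effectively forced onto $u$. Meeting both simultaneously forces the blocker self-loops to sit just above the threshold $\nicefrac{1}{2}$, and forces the blocker sharing pattern to be carefully designed so that the pigeonhole step in (ii) kicks in even when $k_u\leq 2$. This balance, together with verifying the Set constraints in $LP_k$, is the delicate technical step of the construction.
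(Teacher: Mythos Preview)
Your star-shaped gadget cannot work, and the tension you identify at the end is not merely ``delicate'' but unresolvable with this topology. Here is the obstruction made quantitative. Suppose $u$ carries a self-loop of weight $\nicefrac{1}{3}$ and $N$ light hyperedges of weight $\nicefrac{1}{2}$, and there are $M$ blockers, each with a self-loop of weight $\nicefrac{1}{2}+\delta$. LP-feasibility caps the total $x$-mass the hyperedges can place at $u$ by $(1-\nicefrac{1}{3})/(\nicefrac{1}{2})=\nicefrac{4}{3}$, and at each blocker by $2(\nicefrac{1}{2}-\delta)<1$; hence $N\leq \nicefrac{4}{3}+M(1-2\delta)$, so $N\leq M+1$. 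But then an integral solution can send two hyperedges to $u$ (load $\nicefrac{4}{3}$) and at most two to each blocker (load $\nicefrac{3}{2}+\delta$), giving makespan $\nicefrac{3}{2}+\delta<\nicefrac{11}{6}$. More generally, for any self-loop value $q$ on the blockers the same arithmetic shows the LP capacity bound and the pigeonhole requirement are incompatible. Your escape clause ``possibly complemented by an additional heavy or light edge in the gadget'' is precisely where the real construction lives, and it is not a detail: without heavy edges the bound you can force is at most $\nicefrac{3}{2}$.

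The paper's construction is structurally different. It takes $\nicefrac{1}{\epsilon}$ disjoint paths of $\nicefrac{1}{\epsilon}$ vertices whose path edges are \emph{heavy unrelated} edges (weight $1-6\epsilon$ toward one endpoint and $\nicefrac{1}{2}+\epsilon$ toward the other); every vertex carries a self-loop of $\nicefrac{1}{3}$; a light hyperedge of weight $\nicefrac{1}{2}$ joins all non-leftmost vertices within each path, and another joins all leftmost vertices across paths. The fractional solution orients each path edge $\nicefrac{2}{3}$ toward its cheap side and spreads the hyperedges uniformly, which keeps every load at most $1$. Integrally, any collision on a path already gives load $\nicefrac{1}{3}+(1-6\epsilon)+(\nicefrac{1}{2}+\epsilon)$; otherwise each path is oriented consistently, and whichever direction is chosen, one of the two hyperedges must land on a vertex that already carries $\nicefrac{1}{3}+(1-6\epsilon)$, again reaching $\nicefrac{11}{6}-O(\epsilon)$. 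The indispensable ingredient your sketch is missing is this chain of unrelated heavy edges: it is what lets the LP be feasible with load $1$ everywhere while forcing every integral orientation to place a weight close to $1$ next to the self-loop and the hyperedge.
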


\begin{lemma} \label{GBUHCostIntegrality}
There exists an instance of \sGBUH{0.5} that is feasible to \LPstong, such that every integral solution with makespan of at most $1.75+\gamma$ has an assignment cost of at least $\nicefrac{(1-\epsilon)}{(2\gamma+0.5)}\cdot OPT_{\LPstong}$, for every $\epsilon>0$ and $\nicefrac{1}{12}\leq \gamma< \nicefrac{1}{4}$.
\end{lemma}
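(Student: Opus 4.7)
The plan is to build a specific \sGBUH{0.5} instance that exhibits the claimed integrality gap, then verify the two required bounds. Set $\alpha = 2\gamma + 0.5 \in [2/3,1]$; I aim for an instance whose LP$_k$-optimum is essentially $\alpha$ (up to an additive $O(\epsilon)$) while the cheapest integer orientation with makespan at most $1.75+\gamma$ costs essentially $1$, giving the ratio $1/\alpha$.

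My starting point is the single-edge instance of Theorem~\ref{Theorem:LowerBound}, which uses a heavy edge $e=(u,v)$ of weight $1-\epsilon$ with a large self-loop on $u$ to force $x_{e,v}$ up and make the integer $e$-to-$u$ orientation infeasible. That construction only yields the ratio $1/(0.75+\gamma)$; to strengthen it to $1/\alpha$ I will replace part of the self-loop on $u$ by a collection of light hyperedges of weight $1/2$ connecting $u$ to auxiliary vertices with carefully chosen loads. In the LP these hyperedges can be fractionally routed away from $u$ (reducing $u$'s fractional load and thereby allowing $x_{e,u}$ to grow and the LP cost to shrink down to $\alpha$), while in any integer orientation with makespan at most $1.75+\gamma$ a predictable fraction of their weight must sit at $u$, because the auxiliary vertices have only limited spare capacity for individual, indivisible hyperedges. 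The cost structure is $c_{e,v}=1$ with all other orientation costs equal to zero.

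The first technical step is to exhibit the unique LP$_k$-feasible fractional solution by calibrating the self-loops so that every load constraint is tight, giving $x_{e,u}=1-\alpha+O(\epsilon)$ and $OPT_{\LPstong}=\alpha+O(\epsilon)$. Verifying the Set constraints for $k\leq 3$ is a direct check against this explicit solution, and it succeeds because the fractional orientation spreads weight broadly (no subset of size at most $3$ is overassigned to any vertex). The second step is a finite case analysis over integer orientations of the auxiliary hyperedges: for each possible integer assignment I lower-bound the resulting load on $u$ by $1-\alpha+O(\epsilon)$, the residual hyperedge weight that the auxiliary structure cannot absorb. Combined with $p_e=1-\epsilon$ and the self-loop on $u$, orienting $e$ to $u$ then produces a load on $u$ strictly exceeding $1.75+\gamma=1.5+\alpha/2$; hence $e$ must be oriented to $v$, costing $c_{e,v}=1$. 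Dividing yields the desired ratio $(1-\epsilon)/\alpha$.

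The main obstacle will be arranging the auxiliary hyperedges so that the integer residual load on $u$ is at least the target $1-\alpha$ in every integer configuration, without inadvertently also forcing the LP fractional solution to carry this load (which would defeat the purpose). I expect to handle this by a pigeonhole-style construction: use strictly more hyperedges at $u$ than the combined integer spare capacity of the auxiliary vertices, so that any integer assignment is forced to leave an overflow at $u$, while in the LP every auxiliary vertex's load constraint is tight with its hyperedges fractionally absorbed. Matching the overflow to the exact value $1-\alpha$ is the delicate step and drives the choice of self-loop magnitudes on the auxiliary vertices.
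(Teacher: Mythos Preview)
Your pigeonhole step does not go through, and this is a genuine gap rather than a detail to be filled in. Concretely: to push $OPT_{\LPstong}$ down to $\alpha=2\gamma+0.5$ you must arrange the self-loop on $u$ so that $q_u\approx\alpha$ (otherwise the LP, which minimizes $x_{e,v}$, will not achieve cost $\alpha$). But then the integer orientation that sends every light hyperedge to its auxiliary vertex and sends $e$ to $u$ has load at $u$ equal to $q_u+(1-\epsilon)\approx 2\gamma+1.5<1.75+\gamma$ (since $\gamma<\nicefrac14$), and cost $0$. The reason the hyperedges can always be routed away integrally is that each auxiliary vertex has budget $1.75+\gamma$ in the integer problem versus only $1$ in the LP; the extra $0.75+\gamma\geq\nicefrac12$ per vertex is already enough to absorb one additional weight-$\nicefrac12$ hyperedge, so whatever fractional routing the LP uses lifts to an integer routing. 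No choice of self-loop magnitudes on the auxiliary side can close this gap, because the discrepancy in budgets is structural.

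The paper's construction is quite different and crucially exploits \emph{unrelated} heavy edges (the lemma sits in the \sGBU{$\beta$} subsection and feeds Theorem~\ref{Theorem:GBUintegrality}; the ``\sGBUH{0.5}'' in the statement is a slip). The instance is a cycle of $\nicefrac{1}{\epsilon'}$ vertices whose heavy edges have weight $1-\epsilon'$ counter-clockwise and $0.5+\epsilon'$ clockwise, together with a single weight-$\nicefrac12$ hyperedge shared by all vertices and a uniform self-loop on each vertex; exactly one cycle edge carries cost $1$ in the clockwise direction. The LP orients every cycle edge $2\gamma'+0.5$ clockwise, so $OPT_{\LPstong}=2\gamma'+0.5$. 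Integrally, any collision on the cycle already exceeds $1.75+\gamma$; without collisions the whole cycle goes one way, and the counter-clockwise option is killed by the hyperedge (wherever it lands), forcing the clockwise option and hence cost $1$. The asymmetry of the heavy edges is what makes the collision-free ``wrong'' direction infeasible; a single related heavy edge plus light hyperedges, as in your plan, cannot reproduce this mechanism.

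A smaller point: you propose to check the Set constraints only for $k\leq 3$, but the lemma must hold for \LPstong with arbitrary $k$. This is fixable by splitting self-loops into $\Theta(k/\epsilon)$ tiny pieces, as the paper does, but it should be stated.
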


The proof of Theorem \ref{Theorem:GBUintegrality} follows immediately from Lemmas \ref{GBUHMakespanIntegrality} and \ref{GBUHCostIntegrality}. We now proceed to the proofs of these lemmas.

\begin{proof}[Proof of Lemma \ref{GBUHMakespanIntegrality}]
	Let $\epsilon>0$. The instance consist of $\nicefrac{1}{\epsilon}$ identical copies of a path of $\nicefrac{1}{\epsilon}$ vertices. We describe the $i$th path, as it is identical to the others. We denote by $u_i$ the leftmost vertex in the $i$th path. All the edges of the path are heavy. The edge incident to $u_i$ has a weight of $1-6\epsilon$, and the rest of the edges have a processing time of $0.5+\epsilon$ to the left endpoint and a processing time of $1-6\epsilon$ to the right endpoint. Additionally, each vertex has a load value (self loop) of $\nicefrac{1}{3}$. In fact, the load value is split into several self loops in the input, and the exact number will be determined later in the proof.

	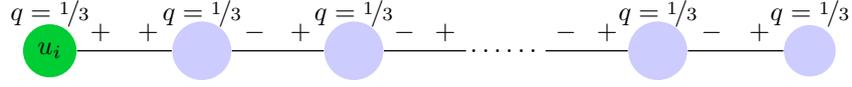
\begin{figure}
		\centering
		\begin{tikzpicture}
		[scale=.5,auto=left]
		\node[circle,fill=blue!20!green=20!] (u) at (0,5)  {$u_i$};
		\node[circle,fill=blue!20] (w1) at (4,5) {\color{blue!20}$w_1$};
		\node[circle,fill=blue!20] (w2) at (8,5) {\color{blue!20}$w_2$};
		\node[circle,fill=blue!20] (w3) at (16,5) {\color{blue!20}$w_3$};
		\node[circle,fill=blue!20] (v) at (20,5)  {\color{blue!20}$v_i$};
		
		\node[circle] (s1) at (11.3,5)  {};
		\node[circle] (s2) at (12.7,5)  {};
		
		\node[circle] (dots) at (12,5)  {$\dots\dots$};	
		
		\node[circle] (qu) at (0,6) {$q=\nicefrac{1}{3}$};
		\node[circle] (qw1) at (4,6) {$q=\nicefrac{1}{3}$};
		\node[circle] (qw2) at (8,6) {$q=\nicefrac{1}{3}$};
		\node[circle] (qw3) at (16,6) {$q=\nicefrac{1}{3}$};
		\node[circle] (qv) at (20,6) {$q=\nicefrac{1}{3}$};
		
		\draw[draw=black, line width=0.5] (u) edge node[near start]{$+$} node[near end]{$+$} (w1);
		\draw[draw=black, line width=0.5] (w1) edge node[near start]{$-$} node[near end]{$+$} (w2);
		\draw[draw=black, line width=0.5] (w3) edge node[near start]{$-$} node[near end]{$+$} (v);
		
		\draw[draw=black, line width=0.5] (w2) edge node[near start]{$-$} node[near end]{$+$} (s1);
		\draw[draw=black, line width=0.5] (s2) edge node[near start]{$-$} node[near end]{$+$} (w3);
		
		\end{tikzpicture}
		\caption{A single path from the makespan integrality gap instance for \sGBUH{0.5}. There is a dedicated load value of $\nicefrac{1}{3}$ on each vertex. All vertices excluding $u_i$ share a light hyperedge of weight $0.5$. Above the edges are the weight of the edge depending on the orientation, where the plus sign denotes $1-6\epsilon$, and the minus sign denotes $0.5+\epsilon$.}
		\label{IntegralityMakespanGBUH}
	\end{figure}

	All the vertices in the $i$th path, except $u_i$, share a light hyperedge of weight $0.5$, denoted by $e_i$. This concludes the description of a single path. In addition, all the vertices $u_i$ over all paths, share a light hyperedge of weight $0.5$, denoted by $e'$. A single path from the instance is shown in Figure \ref{IntegralityMakespanGBUH}.
	
	Now we show that this instance is feasible to \LPstong 
Consider the following fractional solution, denoted by \bx: each edge in each path is oriented $\nicefrac{2}{3}$ toward the left endpoint (and thus $\nicefrac{1}{3}$ toward the right endpoint), and each hyperedge is uniformly oriented between all the vertices it shares (\ie each hyperedge is oriented a fraction of $\epsilon$ to each of the $\nicefrac{1}{\epsilon}$ vertices it shares).
	Note that \bx is a feasible solution to \LPstong. First, the edge constraints trivially hold from the definition of \bx. Now we show the Load constraints hold for each vertex. Since the fractional solution is symmetric over all the paths, we consider a single path $i$. The fractional load on $u_i$ is:
	\[
	\frac{1}{3}+\frac{2}{3}\cdot (1-6\epsilon)+ \frac{1}{\epsilon}\cdot \frac{1}{2} = 1-3.5\epsilon.
	\]
	Additionally, the fractional load on the rest of the vertices in the path is at most:
	\[
	\frac{1}{3}+\frac{2}{3}\cdot (0.5+\epsilon)+\frac{1}{3}\cdot (1-6\epsilon) + \frac{1}{\epsilon}\cdot \frac{1}{2} = 1-\frac{5\epsilon}{6}.
	\]
	Thus, the Load constraints hold for all vertices.
The Star constraints also trivially hold from the definition of \bx. Lastly we show the Set constraints hold. We split the load value to $\nicefrac{k}{\epsilon}$ self loops each with weight $\nicefrac{\epsilon}{(3k)}$. Notice that this is sufficient so that Set constraints with respect to sets of size at most $k$ hold. This is true since for every vertex $v$ and $S\subseteq \delta(v)$, such that $\sum_{e\in S} p_{e,v}>1$, $S$ must contain at least two edges that are not self loops. Hence, from the definition of \bx we can deduce that $\sum_{e\in S} x_{e,v}\leq |S|-1$.

	Finally, we show that every integral solution to this instance has makespan of at least $\nicefrac{11}{6}-6\epsilon$ (and since we can replace $\epsilon$ with $\nicefrac{\epsilon}{6}$ we conclude the proof). If there exists a vertex such that two path edges are oriented toward it, {\em i.e.}, a collision, then the load on that vertex is at least $\nicefrac{11}{6}-5\epsilon$. Otherwise in each path, all the edges are oriented either to the right or the left and no collision is formed.
	First, we examine the case that there exists a path, such that all of the edges are oriented to the right.
Without loss of generality assume this is the $i$th path.
In this case, if we denote by $v$ the vertex that $e_i$ is oriented to, the load of $v$ is $\nicefrac{1}{3}+1-6\epsilon+0.5=\nicefrac{11}{6}-6\epsilon$.
	Otherwise, all the edges of all the paths are oriented toward the left. In particular, for every path $i$ the vertex the edge incident to $u_i$ in the path is oriented toward $u_i$. Thus, there exists a vertex $u_i$ such that the hyperedge $e'$ is oriented toward it. Therefore, the load on this $u_i$ is $\nicefrac{1}{3}+1-6\epsilon+0.5=\nicefrac{11}{6}-6\epsilon$.
	
	Hence, we can conclude that $\textbf{x}$ is a feasible solution to \LPstong, and every integral solution has makespan of at least $\nicefrac{11}{6}-6\epsilon$. By replacing $\epsilon$ with  $\nicefrac{\epsilon}{6}$ in this proof, we finish the proof.
\end{proof}

\begin{proof}[Proof of Lemma \ref{GBUHCostIntegrality}]
	Let $\epsilon>0$ and $\nicefrac{1}{12}\leq \gamma< \nicefrac{1}{4}$. First, define $\epsilon'=\nicefrac{\epsilon}{4}$ and $\gamma'= \nicefrac{\gamma}{(1-4\epsilon')}$. We now present an instance of \sGBUH{0.5} that is feasible to \LPstong and every integral solution with makespan at most $1.75+\gamma$ has an orientation cost of at least $\nicefrac{(1-\epsilon)}{(2\gamma+0.5)}\cdot OPT_{\LPstong}$. The instance is a cycle of $\nicefrac{1}{\epsilon'}$ vertices that share a hyperedge of weight $0.5$. The edges in the cycle are heavy and have a processing time of $1-\epsilon'$ when oriented counter-clockwise and $0.5+\epsilon'$ when oriented clockwise. Additionally, each vertex has a load value of $\gamma'+0.25-(4\gamma'+0.5)\epsilon'$. Lastly, one edge in the cycle, denoted by $e'$, has orientation cost of $1$ when oriented clockwise. The rest of the orientation costs are $0$. The instance is shown in Figure \ref{IntegralityCostGBUH}.

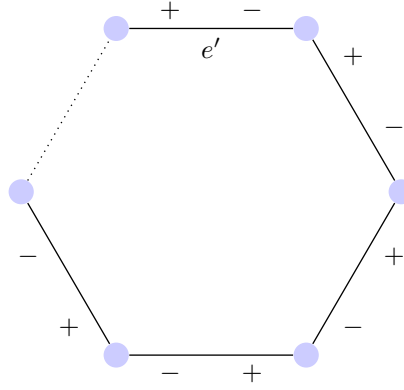
\begin{figure}
	\centering
	\begin{tikzpicture}
	[scale=2.5,auto=left]
	\node[circle,fill=blue!20] (u1) at (1,0)  {};
	\node[circle,fill=blue!20] (u2) at (0.5,0.866) {};
	\node[circle,fill=blue!20] (u3) at (-0.5,0.866) {};
	\node[circle,fill=blue!20] (u4) at (-1,0) {};
	\node[circle,fill=blue!20] (u5) at (-0.5,-0.866)  {};
	\node[circle,fill=blue!20] (u6) at (0.5,-0.866)  {};
	
	\draw[draw=black, line width=0.5] (u2) edge node[near start]{$+$} node[near end]{$-$} (u1);
	
	\draw[draw=black, line width=0.5] (u3) edge node[near start]{$+$} node[near end]{$-$} node[below]{$e'$} (u2);
	
	\draw[draw=black, line width=0.5, dotted] (u4) edge (u3);
	
	\draw[draw=black, line width=0.5] (u5) edge node[near start]{$+$} node[near end]{$-$} (u4);
	\draw[draw=black, line width=0.5] (u6) edge node[near start]{$+$} node[near end]{$-$} (u5);
	\draw[draw=black, line width=0.5] (u1) edge node[near start]{$+$} node[near end]{$-$} (u6);

	\end{tikzpicture}
	\caption{The instance of the integrality gap in the proof of Lemma \ref{GBUHCostIntegrality}. Each vertex has a load value of $\gamma'+0.25-(4\gamma'+0.5)\epsilon'$, and they share a single hyperedge of weight $0.5$. The plus and minus signs of the edges represent the weight with respect to each endpoint. The plus sign represents a weight of $1-\epsilon'$ and the minus sign represents a weight of $0.5+\epsilon'$. Orienting $e'$ clockwise, incurs a cost of $1$.}
	\label{IntegralityCostGBUH}
\end{figure}

Now we show this instance is feasible to \LPstong 
In fact there is exactly one feasible solution, which we denote by \bx. This fractional solution is described as follows: each edge in the cycle is oriented $2\gamma'+0.5$ clockwise (and thus $0.5-2\gamma'$ counter-clockwise). Moreover, the hyperedge is uniformly oriented to all the vertices in the cycle, \ie, it is fractionally oriented $\epsilon'$ to each of the $\nicefrac{1}{\epsilon'}$ vertices it shares.

Now, we show \bx is feasible to \LPstong. The Edge and Star constraints trivially hold from the definition of \bx. Since the solution \bx is symmetric to all the vertices, we show the Load and Set constraints are valid for an arbitrary vertex $u$. The fractional load on $u$ is:
\[
\gamma'+0.25-(4\gamma'+0.5)\epsilon' + (2\gamma'+0.5)(0.5+\epsilon') + (0.5-2\gamma')(1-\epsilon')+\nicefrac{1}{\epsilon'}\cdot 0.5 = 1.
\]
Thus, the Load constraints hold. If we split the load value to $\nicefrac{k}{\epsilon'}$ self loops, then similarly to the proof of Lemma \ref{GBUHMakespanIntegrality}, a set $S\subseteq \delta(u)$ such that $\sum_{e\in S} p_{e,u}>1$ and $|S|\leq k$, must contain at least two edges that are not self loops. Thus, from the definition of \bx the Set constraints hold, and therefore \bx is feasible.

Now we show each integral solution with makespan at most $1.75+\gamma'$ has orientation cost of at least $\nicefrac{(1-4\epsilon')}{(2\gamma'+0.5)}\cdot OPT_{\LPstong}$. First, observe that the orientation cost of \bx is $2\gamma'+0.5$, and since \bx is the only feasible solution to the relaxation then $OPT_{\LPstong}=2\gamma'+0.5$.
Any integral solution such that two cycle edges are oriented to the same vertex (collision) has makespan of at least:
\begin{align*}
(\gamma'+0.25-(4\gamma'+0.5)\epsilon')+1-\epsilon'+0.5+\epsilon'&=1.75+\gamma'-(4\gamma'+0.5)\epsilon'\\
&=1.75 + (1-4\epsilon')\gamma'+0.5\epsilon'\\
&=1.75+\gamma+\nicefrac{\epsilon}{8}.
\end{align*}
If there are no collisions, all the edges in the cycle are either oriented clockwise or counter-clockwise. If the edges are oriented clockwise, then in particular, $e'$ is oriented clockwise. Thus, the orientation cost of this solution is $1$. Otherwise, all the cycle edges are oriented counter-clockwise. Let $u$ be the vertex such that the light hyperedge is oriented to it. Therefore, the load on $u$, if all the edges are oriented counter-clockwise, is:
\begin{align*}
(\gamma'+0.25-(4\gamma'+0.5)\epsilon')+1-\epsilon'+0.5&=1.75+\gamma'-(4\gamma'+1.5)\epsilon'\\
&= 1.75+(1-4\epsilon')\gamma'+1.5\epsilon'\\
&=1.75+\gamma+\nicefrac{3\epsilon}{8}.
\end{align*}
Thus, every integral solution with makespan at most $1.75+\gamma$ has a cost of $1$. We conclude the proof since:
\begin{align*}
\frac{1-\epsilon}{2\gamma+0.5}\cdot OPT_{\LPstong} &= \frac{1-\epsilon}{2\gamma+0.5}\cdot (2\gamma'+0.5)\\
&=\frac{1-\epsilon}{2\gamma+0.5}\cdot \left(\frac{2\gamma}{1-4\epsilon'}+0.5\right)\\
&\leq \frac{1-\epsilon}{2\gamma+0.5}\cdot \frac{2\gamma+0.5}{1-4\epsilon'}\\
&=1.
\end{align*}
\end{proof}

\begin{proof}[Proof of Theorem \ref{Theorem:GBUintegrality}]
	Follows immediately from Lemmas \ref{GBUHMakespanIntegrality}, \ref{GBUHCostIntegrality}.
\end{proof}

\subsection{Semi-Related Graph Balancing} \label{Section:SRGB}
Consider the general problem of \textsc{Unrelated Graph Balancing}, 
which is identical to \sGB except that an edge can have a different weight depending on its orientation: $p_{e,u}$ and $p_{e,v}$ for every $ e=(u,v)\in E$, {\em i.e.}, the weights are unrelated.
This generalization of \sGB was already considered in \cite{VerschaeW14, EbenlendrKS14}, who presented lower bounds for the problem.
Specifically, they showed that the even the configuration LP (which captures \LPstong) has an integrality gap of $2$ with respect to the makespan.

We consider an interesting special case of the above problem where the weights are still unrelated, but cannot vary arbitrarily. Formally, each edge $e=(u,v)\in E$ has two weights depending on the vertex $e$ is oriented to, which satisfy: $p_{e,u}\leq c\cdot p_{e,v}$ and $p_{e,v}\leq c\cdot p_{e,u}$ (where $c\geq 1$ is a parameter of the problem).
We denote this problem by \SRGB (\sSRGB).

Our result for \sSRGB is formulated in Theorem \ref{Theorem:SRGB}.
Note that \sSRGB captures \sGB when $c=1$, and indeed in Theorem \ref{Theorem:SRGB} we achieve a $(\nicefrac{11}{6}, \nicefrac{3}{2})$-approximation for \sSRGB when $c=1$ (similarly to Theorem \ref{GBBasicTheorem}).
Moreover, when $c=\infty$ Theorem \ref{Theorem:SRGB} achieves a $(2,1)$-approximation for \sSRGB, matching the integrality gap of \cite{VerschaeW14, EbenlendrKS14}.
Finally, we also show that in general Theorem \ref{Theorem:SRGB} provides a $(2-\Omega(\nicefrac[]{1}{c}),1+O(\nicefrac[]{1}{c}))$-approximation for \sSRGB.
Figure \ref{SemiRelatedApproxGraph} shows the makespan approximation obtained in Theorem \ref{Theorem:SRGB} as a function of $c$.

\begin{center}
	\begin{figure}
		\begin{center}
			\includegraphics[scale=0.55]{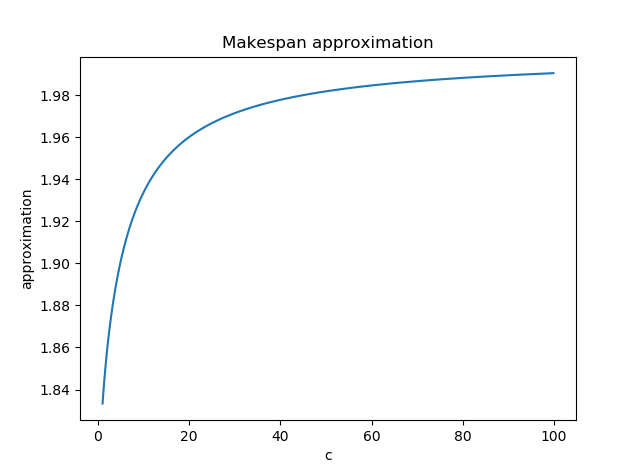}
			\caption{Makespan approximation as a function of the value $c$.}
			\label{SemiRelatedApproxGraph}
		\end{center}
	\end{figure}
\end{center}

In order to prove Theorem \ref{Theorem:SRGB} we use Algorithm \ref{GBrounding} and \LPstong (replacing $p_{e}$ with $p_{e,u}$) with a suitable choice of a threshold function $f$. We use the following threshold function:

\begin{align} \label{SemiThreshFunc}
f_{a,b}(p_e) =
\begin{cases}
a & \text{if } p_e>b \\
1 &					\text{if } p_e \leq b\\
\end{cases}
\end{align}
where the parameters $a$ and $b$ will be chosen shortly.

First, we present the following lemma that bounds the makespan of the orientation produced by Algorithm \ref{GBrounding} with threshold function $f_{a,b}$ (see \ref{SemiThreshFunc}):

\begin{lemma} \label{LoadLemma}
	The makespan of the orientation produced by Algorithm \ref{GBrounding} with $f_{a,b}$ is at most:
	\[
	max\left\{
	\frac{1}{a}+c\cdot b,\
	1.5+0.5a,\
	2-\left( 2-\frac{1}{a}\right)b
	\right\}
	\]
\end{lemma}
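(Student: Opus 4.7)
I fix a vertex $u\in V$ and upper bound its load, following the template of the proofs of Lemma~\ref{MakespanLemma} and Lemma~\ref{BetterMakespanLemma}. Let $p_1\ge p_2\ge\cdots\ge p_k$ denote the weights at $u$ of the edges on top of the slots of $u$ in the global step, let $q_1,\ldots,q_t$ denote the weights at $u$ of the edges $e_1',\ldots,e_t'$ oriented to $u$ in the local step, and let $e_1=(u,v)$ be the edge on top of $\mathrm{slot}(u,1)$. The first task is to adapt the key observations of Section~\ref{sec:GraphBalancing} to the semi-related setting. For a non-LS edge $e=(u,v)$, the local step implies $x_{e,u}\le f_{a,b}(p_{e,u})$ and $x_{e,v}\le f_{a,b}(p_{e,v})$, so by the edge constraint $x_{e,u}\ge 1-f_{a,b}(p_{e,v})$. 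Under $f_{a,b}$ this yields one of two alternatives: either $p_{e,v}>b$ and $x_{e,u}\ge 1-a$, or $p_{e,v}\le b$ and the semi-relatedness assumption $p_{e,u}\le c\cdot p_{e,v}$ forces $p_{e,u}\le cb$. Replaying the arguments of Observation~\ref{FirstSlotObserv} and Lemma~\ref{lemmaST} with this modified lower bound on $x_{e_1,u}$ gives
\[
\mathrm{load}(u)\;\le\;1+(1-a)\sum_{i=1}^{t}q_i+f_{a,b}(p_{e_1,v})\,p_1+\bigl(1-f_{a,b}(p_{e_1,v})\bigr)\,p_2.
\]

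The analysis then splits on the value of $f_{a,b}(p_{e_1,v})$. If $p_{e_1,v}\le b$, semi-relatedness gives $p_1\le cb$, while $f_{a,b}(p_{e_1,v})=1$ collapses the last two terms to $p_1$. The load constraint on $u$ together with $x_{e_i',u}>a$ yields $\sum q_i\le 1/a$, whence $\mathrm{load}(u)\le 1/a+cb$, matching the first term. If instead $p_{e_1,v}>b$, then $f_{a,b}(p_{e_1,v})=a$, $x_{e_1,u}\ge 1-a$, and the load bound reads $1+(1-a)\sum q_i+ap_1+(1-a)p_2$. Observation~\ref{SecondSlotObserv} gives $p_2\le 1/2$. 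When $t=0$ this immediately yields $\mathrm{load}(u)\le 1+a+(1-a)/2=3/2+a/2$, the second term. When $t\ge 1$, the set constraint of $LP_k$ (with $k\ge 2$) applied to $S=\{e_1,e_i'\}$ satisfies $\sum_{e\in S}x_{e,u}>(1-a)+a=1=|S|-1$, so by Observation~\ref{SetObserv} we have $p_1+q_i\le 1$ for every $i\in\{1,\ldots,t\}$.

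For $t=1$, substituting $q_1\le 1-p_1$ into the load bound yields $\mathrm{load}(u)\le 2-a+(2a-1)p_1+(1-a)p_2$, which using $a\ge 1/2$, $p_1\le 1$, and $p_2\le 1/2$ is at most $3/2+a/2$, again the second term. For $t\ge 2$, I combine two bounds on $\sum q_i$: the load constraint (separating out the $(1-a)p_1$ contribution of $e_1$) gives $\sum q_i\le (1-(1-a)p_1)/a$, while the set constraints above give $\sum q_i\le t(1-p_1)$. Substituting the minimum of these into the load bound produces a piecewise linear function of $p_1$ whose crossover sits at $p_1=(2a-1)/(3a-1)$ for $t=2$. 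The identity $2-(2-1/a)b=1/a+(2-1/a)(1-b)$ then reveals the origin of the third term: when the load constraint binds and $p_1$ approaches its maximum $1-b$ (forced by $p_1+q_i\le 1$ together with $q_i>b$), the bound $1/a+(2-1/a)p_1+(1-a)p_2$ approaches $2-(2-1/a)b+(1-a)p_2$. The residual $(1-a)p_2$ is absorbed by noting that $p_2>0$ requires non-LS edges beyond $e_1$ at $u$, which consume extra load budget and thereby tighten the bound on $\sum q_i$.

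The main obstacle is the $t\ge 2$ subcase. Each of the load, set, and star constraints alone is too weak: using the set constraint alone gives a bound like $5/2-a$ that exceeds the third term once $a<2/3$; using the load constraint alone yields $2+(1-a)/2$, which is slack near $a=1/2$; and the star constraint alone provides no control on $\sum q_i$. The proof must interlock the three, exploiting the fact that the load contribution from slot~$2$ and the sum $\sum q_i$ share a single load budget. This coupling is what collapses the bound down to $\max\{3/2+a/2,\ 2-(2-1/a)b\}$ in case~B, and combined with case~A yields the lemma.
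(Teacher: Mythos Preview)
Your overall structure is right and mirrors the paper's: Case~A ($p_{e_1,v}\le b$, hence $p_1\le cb$) gives the first term; Case~B with $t=0$ gives the second; Case~B with $t\ge 1$ (equivalently $X>0$ in the paper) gives the third. The ingredients you identify --- $x_{e_1,u}\ge 1-a$ once $p_{e_1,v}>b$, the Set constraint on $\{e_1,e_i'\}$ forcing $p_1<1-b$, and the Load constraint bounding the local-step contribution --- are exactly the ones the paper uses. So the proposal is not wrong.

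Where you diverge from the paper is in the bookkeeping for Case~B with $t\ge 1$, and there you leave the argument unfinished. You split on $t$, bound $\sum q_i$ by the minimum of $(1-(1-a)p_1)/a$ and $t(1-p_1)$, talk about a piecewise-linear function and a crossover at $(2a-1)/(3a-1)$, and then say the residual $(1-a)p_2$ is ``absorbed'' by tightening the Load bound --- without actually doing the calculation. The last paragraph reads as a description of what remains to be done rather than a proof. In particular, the star constraint plays no role here; only the Load and Set constraints are needed.

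The paper avoids all of this by working with the \emph{actual} fractional local-step load $X=\sum_{e\in E_{\mathrm{orient}}}x_{e,u}p_{e,u}$ instead of $\sum q_i$. The local-step load is at most $X/a$; the global-step load is at most $(1-X)+ap_1+(1-a)p_2$; and the Load constraint (using that the first slot contributes at least $(1-a)p_1+ap_2$) gives $X\le 1-(1-a)p_1-ap_2$. Substituting this upper bound for $X$ into
\[
\tfrac{1}{a}X+(1-X)+ap_1+(1-a)p_2 \;=\; 1+\bigl(\tfrac{1}{a}-1\bigr)X+ap_1+(1-a)p_2
\]
makes the $p_2$ terms cancel \emph{exactly}, leaving $1/a+(2-1/a)p_1\le 2-(2-1/a)b$ via $p_1<1-b$. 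No case split on $t$, no minimum of two bounds, no residual to absorb. Your ``extra load budget'' intuition is precisely this cancellation; if you had included the $ap_2$ term when separating out the first-slot contribution in your Load bound on $\sum q_i$, you would have obtained the same one-line finish.
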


Using Lemma \ref{LoadLemma}, we can finish the proof of Theorem \ref{Theorem:SRGB}:
\begin{proof}[Proof of Theorem \ref{Theorem:SRGB}]
	First, we choose the values of $a$ and $b$ (as in the rounding function $f_{a,b}$) such that the terms in the maximum expression of Lemma \ref{LoadLemma} are balanced so as to minimize the makespan. 
	Thus, we choose $a$ and $b$ such that:
	\[
	\frac{1}{a}+c\cdot b=1.5+0.5a=2-\left(2-\frac{1}{a}\right)b.
	\]
	This is equivalent to finding an $a$ such that:
	\[
	\left(\frac{1}{c}+\frac{1}{2}\right)a^3+\left(\frac{5}{2c}-\frac{1}{2}\right)a^2 - \frac{7}{2c}a + \frac{1}{c}=0.
	\]
	
	We show there is a unique $a$ in the range $[0.5, 1-\nicefrac{1}{4c}]$, that satisfies the above equation. This fact, along with Lemma \ref{CostApprox} (which bounds the orientation cost of the algorithm) finishes the proof.
	
	Thus, we show that the polynomial:
	\[
	g(a) = \left(\frac{1}{c}+\frac{1}{2}\right)a^3+\left(\frac{5}{2c}-\frac{1}{2}\right)a^2 - \frac{7}{2c}a + \frac{1}{c}
	\]
	has a unique root in $[0.5,1-\nicefrac{1}{4c}]$. It can be proved that $g(0.5)=-\nicefrac{1}{16}$ and $g(1-\nicefrac{1}{4c})>0$, for every $c\geq 1$. Thus, from the Intermediate Value Theorem, we can deduce there is a root of the polynomial $g(a)$ in the range $[0.5,1-\nicefrac{1}{4c}]$. Furthermore, It can be shown using simple calculus that $g$ has a unique root in $[0.5,1]$ (by showing $g$ has no local maximum in this range). Hence, we can deduce that there is a unique root for $g(a)$ in the range $[0.5, 1-\nicefrac{1}{4c}]$.
\end{proof}

Now we prove Lemma \ref{LoadLemma} which bounds the makespan of the orientation produced by Algorithm \ref{GBrounding} with threshold function $f_{a,b}$.

\begin{proof}[Proof of Lemma \ref{LoadLemma}]
	Fix a vertex $u\in V$. We use definitions similar to the previous sections: denote by $e_1,\dots, e_k$ the edges on top of $slot(u,1),\dots, slot(u,k)$. Additionally, we denote by $p_1,\dots, p_k$ the weights (with respect to the vertex $u$) of those edges respectively.
	
	Now, we are required to introduce an observation similar to Observation \ref{GraphObserv}. The reason Observation \ref{GraphObserv} does not hold in this case, is due to the fact that an edge does not have the same weight on the two vertices it shares. However, since the weights cannot differ much, we observe the following fact.
	
	\begin{observation} \label{EdgeObserv}
		Let $e=(u,v)\in E$ such that $p_{e,u}>c\cdot b$ and $e$ is not oriented in the \LS of Algorithm \ref{GBrounding}. Then, $x_{e,u}\geq 1-a$.
	\end{observation}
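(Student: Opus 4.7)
The plan is to mimic the argument used in Observation \ref{GraphObserv}, but one has to first argue that $e$ is heavy enough on both sides for the threshold function to take the value $a$ on both $p_{e,u}$ and $p_{e,v}$. This is exactly where the semi-relatedness assumption of \sSRGB pays off.

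Concretely, I would first use the problem's defining inequality $p_{e,u} \leq c \cdot p_{e,v}$. Together with the hypothesis $p_{e,u} > c \cdot b$, this yields
\[
p_{e,v} \;\geq\; \frac{p_{e,u}}{c} \;>\; b.
\]
Thus, by the definition of the threshold function $f_{a,b}$ in \eqref{SemiThreshFunc}, one has $f_{a,b}(p_{e,u}) = a$ and $f_{a,b}(p_{e,v}) = a$.

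Next, since $e$ was not oriented toward $v$ in the \LS, the local-step rule gives $x_{e,v} \leq f_{a,b}(p_{e,v}) = a$. Applying the Edge constraint of \LPstong, namely $x_{e,u} + x_{e,v} = 1$, immediately yields
\[
x_{e,u} \;=\; 1 - x_{e,v} \;\geq\; 1 - a,
\]
which is the claim. The only conceptual step is the translation from one endpoint's weight to the other's via the semi-related bound, after which the argument is a one-line consequence of the local-step threshold and the Edge constraint, so there is no serious obstacle.
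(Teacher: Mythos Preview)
Your proof is correct and follows essentially the same approach as the paper: use the semi-relatedness bound $p_{e,u}\le c\cdot p_{e,v}$ together with $p_{e,u}>c\cdot b$ to get $p_{e,v}>b$, then apply the local-step threshold at $v$ and the Edge constraint. The only (harmless) extra step you include is noting $f_{a,b}(p_{e,u})=a$ as well, which the argument does not actually need.
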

	
	\begin{proof}
		From the problem definition, we know that $p_{e,u}\leq c\cdot p_{e,v}$. Since $p_{e,u}>c\cdot b$ we can deduce that $p_{e,v}>b$. In addition, since $e$ is not oriented in step 1 of Algorithm \ref{GBrounding}, then $x_{e,v}\leq a$. From the Edge constraint of $e$, we know that $x_{e,u}+x_{e,v}=1$. Therefore, we conclude that $x_{e,u}\geq 1-a$.
	\end{proof}

	Before proceeding to the proof of Lemma \ref{LoadLemma}, we denote by $E_{orient}$ the set of edges oriented toward $u$ in the \LS. In addition, we denote by $X$ the fractional load of edges oriented toward $u$ in the \LS:
	\[
	X\triangleq \sum_{e\in E_{orient}} x_{e,u}p_{e,u}.
	\]
	The next observation bounds the load from edges oriented toward $u$ in the \LS.
	
	\begin{observation} \label{SemiStep1}
		The load from edges oriented toward $u$ in the \LS of Algorithm \ref{GBrounding} is at most $\nicefrac{1}{a}\cdot X$.
	\end{observation}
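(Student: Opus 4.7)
The plan is to exploit the defining inequality of the Local Step together with the lower bound on the threshold function $f_{a,b}$. First I would recall that by construction, every edge $e$ placed in $E_{\text{orient}}$ satisfies $x_{e,u} > f_{a,b}(p_{e,u})$. Reading off the definition of $f_{a,b}$ in~\eqref{SemiThreshFunc}, the value of $f_{a,b}$ is either $a$ (when $p_{e,u} > b$) or $1$ (when $p_{e,u} \leq b$); in the latter situation the condition $x_{e,u} > 1$ is impossible, so every edge actually in $E_{\text{orient}}$ must be of the former kind, giving the uniform bound $x_{e,u} > a$ for all $e \in E_{\text{orient}}$.

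Given this, the rest is a one-line manipulation. For each $e \in E_{\text{orient}}$, the inequality $x_{e,u} > a$ rearranges to $1 < x_{e,u}/a$, and multiplying both sides by the nonnegative quantity $p_{e,u}$ yields $p_{e,u} \leq \frac{1}{a}\, x_{e,u}\, p_{e,u}$. Summing this inequality over all $e \in E_{\text{orient}}$ then gives
\[
\sum_{e \in E_{\text{orient}}} p_{e,u} \;\leq\; \frac{1}{a}\sum_{e \in E_{\text{orient}}} x_{e,u}\, p_{e,u} \;=\; \frac{X}{a},
\]
which is exactly the claimed bound, since the left-hand side is the integral load contributed to $u$ by the edges oriented in the Local Step.

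There is no real obstacle here: the observation is essentially just the standard ``scaling'' argument used whenever one rounds up a fractional variable known to exceed a fixed threshold. The only thing to be careful about is the edge case $p_{e,u} \leq b$, which I dispose of above by pointing out that such an edge can never enter $E_{\text{orient}}$, so the uniform bound $x_{e,u} > a$ applies to every summand.
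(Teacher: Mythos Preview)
Your argument is correct and matches the paper's proof essentially line for line: both use that every $e\in E_{\text{orient}}$ satisfies $x_{e,u}>a$, hence $p_{e,u}\le \tfrac{1}{a}x_{e,u}p_{e,u}$, and sum to get $\tfrac{1}{a}X$. Your explicit disposal of the case $p_{e,u}\le b$ is a nice clarification, but the approach is the same.
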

	
	\begin{proof}
		The load on $u$ from edges oriented in the \LS is:
		\begin{align*}
		\sum_{e\in E_{orient}} p_{e,u} \leq \sum_{e\in E_{orient}} \frac{1}{a}\cdot x_{e,u}p_{e,u}=\frac{1}{a}X.
		\end{align*}
		The first inequality holds since for every $e\in E_{orient}$ it must be the case that $x_{e,u}>a$.
	\end{proof}
	
	We consider three cases, depending on the values of $p_1$ and $X$.
	
	\noindent	\textbf{Case 1:} $p_1\leq c\cdot b$. We can use the result by \cite{ShmoysT93} (or Lemma \ref{lemmaST}), in order to bound the load from edges oriented to $u$ in the \GS of Algorithm \ref{GBrounding} by:
	\[
	\sum_{i=1}^{k}p_{i}\leq 1-X + p_1.
	\]
	Note that this is true since the total fractional load on $u$ in the beginning of the \GS of Algorithm \ref{GBrounding} is $1-X$. From Observation \ref{SemiStep1} and the previous inequality we can bound the total load on $u$ as follows:
	
	\begin{align*}
	\frac{1}{a}X+(1-X)+p_1 &\leq 1 + \left(\frac{1}{a}-1\right)X+c\cdot b\\
	&\leq 1 +\left(\frac{1}{a}-1\right)+c\cdot b\\
	&= \frac{1}{a} +c\cdot b.
	\end{align*}
	
	\noindent	\textbf{Case 2:} $p_1>c\cdot b$ and $X=0$. Note that since $X=0$, no edge was oriented toward to $u$ in the \LS. From Observation \ref{EdgeObserv} we derive that $x_{e_1,u}\geq 1-a$. Therefore, using Lemma \ref{lemmaST} we obtain the following bound on the load on $u$:
	\[
	1+(1-a)p_1+a\cdot p_2.
	\]
	Since no edges were oriented to $u$ in the \LS, we can conclude that the load on $u$ is at most:
	\begin{align*}
	1+(1-a)p_1+a\cdot p_2&
	\leq 1+ 1-a + a\cdot 0.5\\
	&= 1.5 + 0.5a.
	\end{align*}
	Note that $p_2\leq \nicefrac{1}{2}$ from Observation \ref{SecondSlotObserv}.
	
	\noindent	\textbf{Case 3:} $p_1>c\cdot b$ and $X>0$. We start with an observation that follows from the Set constraints. The observation gives an upper bound to $p_1$.
	\begin{observation} \label{case3p1}
		$p_1\leq 1-b$.
	\end{observation}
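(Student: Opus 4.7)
The plan is to exploit the Set constraint of $LP_k$ on a well-chosen pair of edges. Since we are in Case~3, at least one edge was oriented toward $u$ in the \LS; pick such an edge $e'$. By the rule of the \LS we have $x_{e',u} > f_{a,b}(p_{e',u})$, and since $f_{a,b}$ only takes values in $\{a,1\}$ with $f_{a,b}(p) = 1$ when $p \leq b$, the only way $x_{e',u}$ can strictly exceed $f_{a,b}(p_{e',u})$ is if $p_{e',u} > b$ and consequently $x_{e',u} > a$.

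Next I would handle $e_1$. Since $p_1 = p_{e_1,u} > c\cdot b$ (the Case~3 hypothesis) and $e_1$ was not oriented in the \LS (being on top of a slot produced by the \GS), Observation~\ref{EdgeObserv} applies to $e_1$ and yields $x_{e_1,u} \geq 1-a$. Note that $e_1 \neq e'$ because $e'$ was removed from $E$ after the \LS whereas $e_1$ participates in the \GS.

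Now I would apply the Set constraint to $S = \{e_1, e'\} \subseteq \delta(u)$. Combining the two fractional bounds,
\[
x_{e_1,u} + x_{e',u} > (1-a) + a = 1 = |S| - 1,
\]
so Observation~\ref{SetObserv} forces $p_{e_1,u} + p_{e',u} \leq 1$. Since $p_{e',u} > b$, this gives $p_1 \leq 1 - p_{e',u} < 1-b$, establishing the claim.

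The only subtle point is verifying that $e'$ and $e_1$ are genuinely distinct edges so that $|S| = 2$; this is immediate from the fact that edges oriented in the \LS are removed from $E$ before the \GS runs. Everything else is just a direct chaining of the two earlier observations with the Set constraints, so no further obstacles are expected.
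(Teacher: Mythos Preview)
Your proposal is correct and follows essentially the same approach as the paper's proof: pick an edge $e'$ oriented to $u$ in the \LS (giving $p_{e',u}>b$ and $x_{e',u}>a$), combine with Observation~\ref{EdgeObserv} on $e_1$ to get $x_{e_1,u}+x_{e',u}>1$, and then apply Observation~\ref{SetObserv} to $S=\{e_1,e'\}$ to conclude $p_1<1-b$. Your version even adds the small clarification that $e_1\neq e'$, which the paper leaves implicit.
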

	
	\begin{proof}
		Due to the case condition $X>0$, there is an edge $e'$ that was oriented toward $u$ in the \LS. From the definition of the rounding function \ref{SemiThreshFunc} it holds that $x_{e',u}>a$. Moreover, by Observation \ref{EdgeObserv} we obtain that $x_{e_1,u}\geq 1-a$. Thus, we can deduce: $x_{e_1,u}+x_{e',u}>1$.
		
		From the Set constraint of $S=\{e',e_1\}$ we derive that $p_1+p_{e',u}\leq 1$ (using Observation \ref{SetObserv}). Moreover, from the definition of the threshold function \ref{SemiThreshFunc} it holds that $p_{e',u}>b$. Therefore, we can conclude that $p_1\leq 1-p_{e',u}<1-b$.
	\end{proof}
	
	The next observation, gives an upper bound on $X$ that follows from the Load constraint on $u$.
	\begin{observation} \label{case3X}
		$X\leq 1-(1-a)\cdot p_1-(1-a)p_2$.
	\end{observation}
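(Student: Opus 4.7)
The plan is to establish the bound by analyzing the fractional load contributed to the first slot of $u$ by edges outside $E_{orient}$, and then invoke the Load constraint. Concretely, I would show that the non-oriented fractional load in $slot(u,1)$ alone is at least $(1-a)(p_1+p_2)$; combined with $X+\sum_{e\in\delta(u)\setminus E_{orient}} x_{e,u}p_{e,u}\leq 1$, this yields $X\leq 1-(1-a)p_1-(1-a)p_2$.

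First I would note that $e_1$, being the top edge of $slot(u,1)$, was not oriented in the \LS. Since we are in Case 3 we have $p_1>c\cdot b$, so Observation \ref{EdgeObserv} applies and gives $x_{e_1,u}\geq 1-a$. This is the analog, in the unrelated regime, of the clean symmetric bound that Observation \ref{GraphObserv} provides in the related regime, and is really the only place where the semi-related assumption enters the argument.

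Next I would mimic the reasoning of Observation \ref{FirstSlotObserv}. Slot $1$ has capacity $1$ and is full. The fraction $y_{e_1,s_1}=x_{e_1,u}$ of $e_1$ is placed first, and the remaining $1-x_{e_1,u}$ units of capacity are filled by edges whose weight on $u$ is at least $p_2$ (because the \GS of Algorithm \ref{STrounding} assigns edges to slots in non-increasing order of $p_{e,u}$, and the next edge to land on top of a slot is precisely $e_2$ with weight $p_2$). Hence the fractional load in slot $1$ is at least
\[
x_{e_1,u}\, p_1 + (1-x_{e_1,u})\, p_2 \;\geq\; (1-a)\,p_1 + a\,p_2,
\]
where the minimum is taken at $x_{e_1,u}=1-a$ using $p_1\geq p_2$. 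Because $a$ in Theorem \ref{Theorem:SRGB} is chosen in $[0.5,1]$, we have $a\geq 1-a$, so the last expression is at least $(1-a)(p_1+p_2)$.

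Finally, the edges contributing to slot $1$ are disjoint from $E_{orient}$, so the Load constraint on $u$ gives
\[
X + (1-a)(p_1+p_2) \;\leq\; \sum_{e\in\delta(u)} x_{e,u}\, p_{e,u} \;\leq\; 1,
\]
which rearranges to the claim. I do not expect a real obstacle: the main subtlety is simply that Observation \ref{GraphObserv} is unavailable in the unrelated setting, and one must use Observation \ref{EdgeObserv} (which is why the hypothesis $p_1>c\cdot b$ of Case 3 is critical) together with the monotone-fill structure of the \GS to bound the slot-$1$ load from below.
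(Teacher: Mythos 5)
Your proof is correct and follows essentially the same route as the paper's: lower-bound the slot-$1$ fractional load via $x_{e_1,u}\geq 1-a$ (Observation \ref{EdgeObserv}, enabled by the Case 3 hypothesis $p_1>c\cdot b$) together with the full-slot, non-increasing-fill structure, and then invoke the Load constraint with disjointness from $E_{orient}$. The only difference is that the paper keeps the stronger intermediate bound $X\leq 1-(1-a)p_1-a\,p_2$ (which it actually uses downstream in Case 3), whereas you relax $a\,p_2$ to $(1-a)p_2$ to match the observation as literally stated; both are fine.
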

	\begin{proof}
		From the Load constraint on $u$ we deduce:
		\begin{align*}
		1\geq \sum_{e\in \delta(u)} x_{e,u}p_{e,u} &\geq X + \sum_{e\in slot(u,1)} x_{e,u}p_{e,u}\\
		&\geq X + (1-a)p_1 + a p_2.
		\end{align*}
		The last inequality follows from Observation \ref{EdgeObserv}, and the fact that the capacity of $slot(u,1)$ is exactly 1 (recall that without loss of generality we can add edges with weight zero in order to fill the slot if needed).
		From the previous inequality we conclude:
		\[
		X\leq 1-(1-a)p_1-a\cdot p_2.
		\]
	\end{proof}
	
	Now we bound the load on $u$. We use the same bound on the load of the edges oriented toward $u$ in the \GS, as in the previous case. In addition, we bound the load from edges oriented toward $u$ in the \LS by using Observation \ref{SemiStep1}. Thus, the load on $u$ is at most:
	\begin{align*}
	\frac{1}{a}X+(1-X)+a\cdot p_1+(1-a)p_2
	&= 1 + \left(\frac{1}{a}-1\right)X +a\cdot p_1+(1-a)p_2\\
	&\leq 1 + \left(\frac{1}{a}-1\right)(1-(1-a)p_1-a p_2) +a\cdot p_1+(1-a)p_2\\
	&= \frac{1}{a}+\left(2-\frac{1}{a}\right)p_1\\
	&\leq \frac{1}{a}+\left(2-\frac{1}{a}\right)(1-b)\\
	&= 2-\left(2-\frac{1}{a}\right)b.
	\end{align*}
	The first and second inequalities follow from Observations \ref{case3X} and \ref{case3p1} respectively.
	Note that in each case the load on $u$ is at most:
	$max\left\{
	\frac{1}{a}+c\cdot b,\
	1.5+0.5a,\
	2-(2-\frac{1}{a})b
	\right\}$, which concludes the proof.
\end{proof}

\bibliography{sources}

\end{document}